\newcommand{\thesistitle}[0]{Polynomial-time parametric optimisation}
\newcommand{\authorname}[0]{Doan Dai Nguyen}
\newcommand{\supervisor}[0]{Sarah J. Berkemer, Yann Ponty}
\newcommand{\supervisorinstitution}[0]{Laboratoire d'Informatique de l'École Polytechnique}
\newcommand{\abstracttext}[0]{%
In biology, predicting RNA secondary structures plays a vital role in determining its physical and chemical properties. Although we have powerful energy models to predict them as well as parametric analysis to understand the models themselves, the large number of parameters involved makes exploring the parameter space and effective fine-tuning complicated at best. The literature describes an approach via so-called RNA polytopes and several attempts to compute them entirely, but computing explicitly the polytopes is both practically and theoretically intractable. In this thesis, we demonstrate how to further modify the dynamic programming algorithms used in RNA secondary structure prediction, and more generally how to use only supporting functions to gather some information about the polytopes without explicit construction. We provide the mathematical frameworks with proofs or sketch thereof whenever necessary, and carry out some numerical experiments to show that our proposed methods are practical even when the number of parameters is large. As it turns out, one of our methods provides a solution to another problem in computational geometry previously unsolved to our knowledge, and we hope this thesis will accommodate future studies in RNA, as well as inspire further researches on the potential uses of polytopes' supporting functions in computational geometry.
}
\newcommand{\conv}[1]{\text{conv}\left(#1\right)}
\newcommand{\proj}[1]{\text{proj}(#1)}
\newcommand{\vol}[1]{\text{vol}(#1)}
\newcommand{\vertex}[1]{\text{Vert}(#1)}
\newcommand{\sgn}[1]{\text{sgn}(#1)}
\newcounter{cnt}
\newcounter{def}
\newcounter{thm}
\theoremstyle{definition}
\newtheorem{definition}[def]{Definition}
\newtheorem{problem}[cnt]{Problem}
\newtheorem{theorem}[thm]{Theorem}
\newtheorem{lemma}[thm]{Lemma}
\algrenewcommand\algorithmicrequire{\textbf{Input:}}
\algrenewcommand\algorithmicensure{\textbf{Output:}}
\newcommand\mc[1]{\multicolumn{1}{c}{#1}}
\pgfplotsset{compat=1.17}
\begin{document}
\pagenumbering{gobble}
\clearpage




\hspace{0pt}
\vfill

\begin{center}

\includegraphics[width=0.3\textwidth]{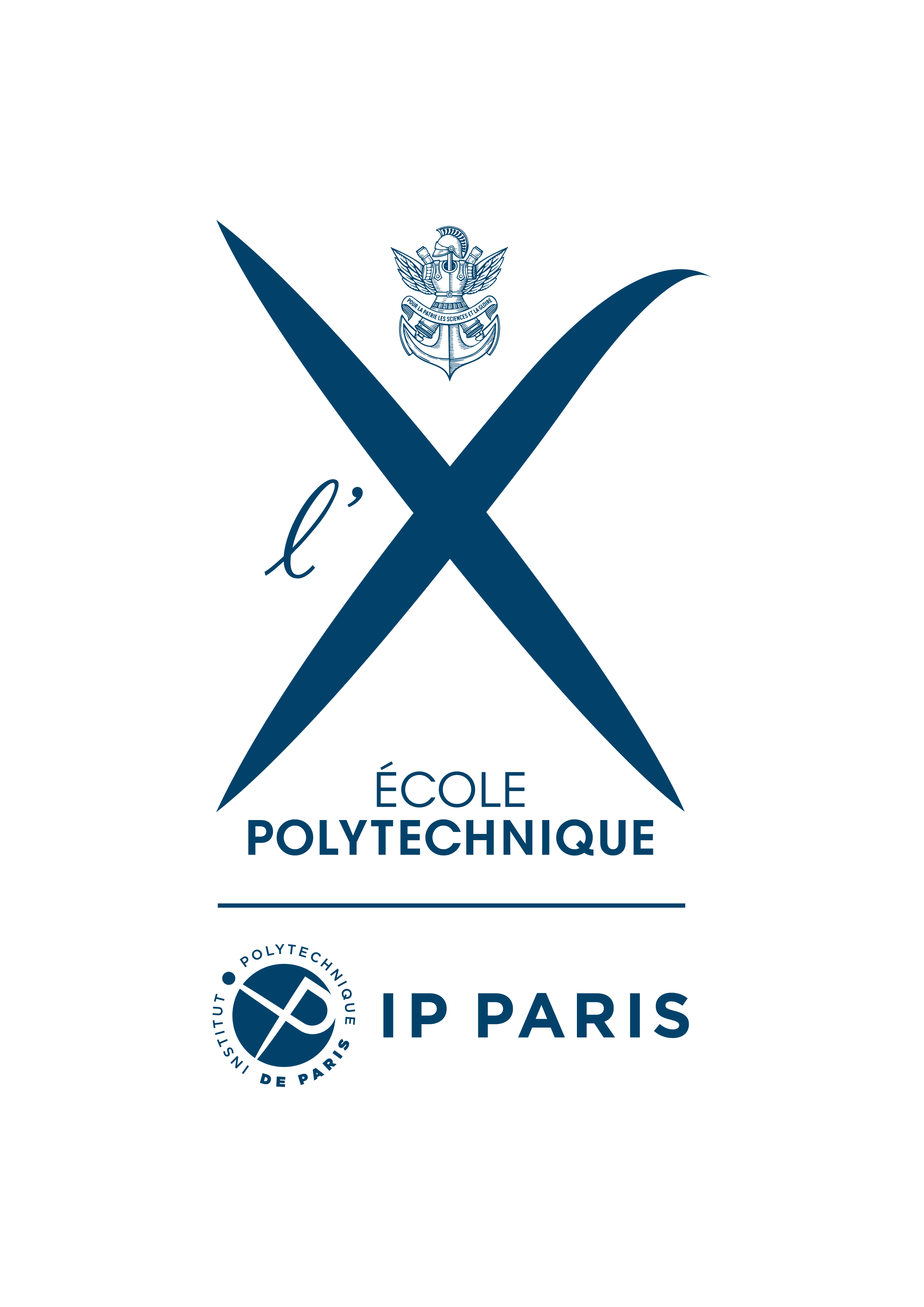}

\vspace*{2em}
{\large
\textbf{\'Ecole Polytechnique}

\vspace*{1em}
\textit{BACHELOR THESIS IN COMPUTER SCIENCE}

\vspace*{3em}
{\Huge \textbf{\thesistitle}}
\vspace*{3em}

\textit{Author:}

\vspace*{1em}
\authorname{}, \'Ecole Polytechnique

\vspace*{2em}
{\textit{Advisor:}}

\vspace*{1em}
\supervisor{}, \supervisorinstitution{}
}

\vspace*{2em}
\textit{Academic year 2022/2023}

\end{center}

\vfill
\hspace{0pt}

\newpage

\vfill
\noindent\textbf{Abstract}\\[1em]
\fbox{
\parbox{\textwidth}{
\abstracttext{}
}
}
\vfill

\newpage

\pagestyle{fancy}
\lhead{\authorname}
\rhead{\thesistitle}

\newpage
\tableofcontents
\newpage

\pagenumbering{arabic}

\section{Introduction}
Since the discovery of ribosomal RNA and transfer RNA in the late 1950s, a variety of non-coding RNA families have been discovered, which, to name a few, includes long non-coding RNAs \cite{Statello2021}, circular RNAs \cite{Liu2022}, and antisense RNAs and CRISP RNAs\cite{Ashwath2023}, each with potential for new therapies. Non-coding RNAs have shown to act as catalysts for chemical reactions, as gene regulators \cite{Statello2021}, and as DNA replicator \cite{Kuhnlein2021}, thereby are believed to play a vital role in the origin of life. We now know that at least 76\% of human genome is transcribed, yet only 1.2\% of which encodes proteins \cite{Lee2019}, and most of the rest, i.e. non-coding RNAs, have their functions yet to be completed unveiled, and applications to be discovered. Thus, there raises the need to understand properties and functions of RNAs.

On the one hand, advances in sequencing techniques have resulted in an exponential growth for the total length of sequences for the past decade \cite{Katz2022}. On the other hand, inferring structure of RNA by experimental methods, despite much progress in recent years, remains both arduous, time-consuming, and technically demanding. There have been attempts to combine experiment data with simulation, yet this approach has yet to overcome the limitation in both accuracy and throughput \cite{Liu2022}. Although RNA has its tertiary structure largely determined by its secondary structure \cite{Tinoco1990}, the same challenges persist, making prediction by computational methods being most viable approach, amongst which is free energy minimisation method.

Whilst delaying formal definition to later sections, we recall that a RNA sequence consists of nucleotides, commonly denoted by four letters $A$, $C$, $G$, and $U$. A free energy model of RNA then admits, amongst others, the following assumptions:
\begin{itemize}
	\item An RNA exists as it is and folds as a whole, contrasting with co-transcription model where RNA folds as it is being transcribed.
	\item An RNA admits only the canonical base pairs $A-U$ and $G-C$, and the so-called wobble pairs $G-U$.
	\item An RNA folds in isolation, independent from the surrounding environment. 
\end{itemize}

A general thermodynamic framework consists of a model which assigns to a given pair of sequence and secondary structure a measure, e.g. energy, entropy, and/or enthalpy, with respect to the state where no nucleotides are paired. Then, under some assumptions such as pseudoknot-free and heuristics regarding the energy function, it is possible to decompose the total energy as those of substructures, whereby a dynamic programming algorithm can exploit the sub-problem hierarchy to minimise or maximise this energy, before tracing back to find an optimal solution efficiently. One may see that whilst the performance bottleneck may lie in the algorithm, the limit of prediction capability lies in the underlying energy model, whence comes the need for a biologically accurate energy model.

At the one end of the spectrum is the model counting number of base pairs as featured in work of Nussinov and Jacobson \cite{Nussinov1980a}, whose simplicity allows full analysis of the parameter space but severely limits the prediction capacity. At the other end for pseudoknot-free secondary structures is Turner energy model \cite{Mathews1999a} with over 7600 features, generally considered to be biologically realistic, but at the same time difficult to well-tune. A subset of parameters were measured by optical melting experiments, but a large part was derived by fitting to the experiment data.

Despite its comprehensiveness and even regarding pseudoknot-free RNA secondary structures, Turner model fails to predict accurately in many cases. Additionally, the ad-hoc energy function for multi-loops, originally derived for simplicity and algorithm derivation's sake, outperforms other more complicated and realistic alternatives \cite{Ward2017}. These two phenomena beg the question if it is due to a suboptimal choice of parameters or the fundamental limit of the model itself. This line of study is often called parametric analysis, where one explores the parameters to observe what a model can predict and argue about its capability, and is of particular importance, for this model is also applied to the study of single-stranded DNA \cite{SantaLucia2004}, which has both an essential role in virology \cite{Malathi2019} and potential in therapeutics \cite{Avci-Adali2015}. 

Part I (Section \ref{chapter: notation and preliminaries}-\ref{chapter: current methods part 1}) focuses on reviewing methods in parametric analysis, where Section \ref{chapter: parametric analysis and polytopes part 1} commences with original motivation and gives a particular example in the context of RNA secondary structures. We then elaborate and derive such an approach using polytopes, based on works of Pachter and Sturmfels \cite{Pachter2004, Pachter2005}, before conclude on the two necessary operations from computational geometry and their intractability.

In Part II (Section \ref{chapter: introduction part 2}-\ref{chapter: ellipsoid method part 2}), we focus on models with large number of features, and provide three new methods for studying two properties of the polytope corresponding to an RNA sequence and an energy model, namely robustness and learnability, the latter of which is related to collision detection problem and is called \emph{Relative position problem} throughout this thesis. Section \ref{chapter: introduction part 2} formally defines the two problems, reviews current methods including the so-called MPR algorithm, and shows how we need further studies for them to be applicable to our interest. Section \ref{chapter: conclusion} concludes this thesis with outlines of future works, including applications beyond the scope of energy model in Bioinformatics and parametric analysis in computational geometry.

Aside from what was presented above, the main contribution of this thesis is as follows:
\begin{enumerate}
	\item We present a derivation of polytope algebra simpler than that by Pachter and Sturmfels \cite{Pachter2004, Pachter2005} (Section \ref{chapter: parametric analysis and polytopes part 1}).
	\item We review the current methods in computation geometry, which leads to the conclusion that current algorithms are computationally incapable of carrying out parametric analysis for large number of features (Section \ref{chapter: current methods part 1}).
	\item We present a new scheme to modify the underlying dynamic programming algorithms to obtain robustness with little overhead in runtime and space (Section \ref{chapter: telescoping method part 2}).
	\item Concerning relative position problem, we lay the foundation for a new family of methods, which includes MPR algorithm but also other variations, amongst which we give examples of three such variants. Numerical experiments show they have comparable and altogether remarkable performances in practice, especially considering that we have no proofs of terminating, and even supposing that such a method halts, we also demonstrate a poor worst-case performance (Section \ref{chapter: simplex method part 2}).
	\item Finally, we incorporate work of Hornus \cite{Hornus2017} with ideas from ellipsoid method in linear programming to develop a novel approach which has performance expected to be poor in practice, but nonetheless demonstrates that this problem is solvable in weak polynomial time (Section \ref{chapter: ellipsoid method part 2}). To our best knowledge, this is the first algorithm to do so.
\end{enumerate}

\subsection*{Acknowledgement}

The author wishes to thank Sarah J. Berkemer and Yann Ponty for the topic suggestion and countless helpful discussions. This research is funded by Laboratoire d'Informatique de l'\'Ecole Polytechnique.
\section{Notation and preliminaries}
\label{chapter: notation and preliminaries}

In what follows, let $X = \mathbb{R}^{d+1}$ or $\mathbb{S}^d$ for $d > 1$. We shall consider the Euclidean space $\mathbb{R}^{d+1}$ equipped with the canonical inner product $\langle \cdot, \cdot \rangle$. For $m \leq d$, we define a $m$-plane to be a $m$-dimensional subspace of $\mathbb{R}^{d+1}$. Unless explicitly used for other purposes, uppercases letters represent polytopes, polyhedra, and hyperplanes, whilst lowercase letters represent affine points as column vectors, thus for $x, y \in \mathbb{R}^{d+1}$, we have $\langle x, y \rangle = x^T y$. We denote $[x, y]_{\mathbb{R}^n}$ as the line segment uniquely given by two points $x, y \in \mathbb{R}^{d+1}$, where the subscript shall be neglected whenever the context is apparent.

Except what is defined in this thesis and with possibly different notations, the definitions used throughout this thesis may be found in Ziegler's \emph{Lectures on Polytopes} \cite{Ziegler1995} and in Ratcliffe's \emph{Foundations of Hyperbolic Manifolds} \cite{Ratcliffe2007}.

\subsection{Spherical geometry}
Let $k \leq d+1$, an intersection of $\mathbb{S}^d$ and an $k$-plane is called a great $k$-sphere, which for terminology consistency, we shall call a $k$-plane whenever the context is clear. The case $k = 2$ gives great circles dividing $\mathbb{S}^d$ into halfspheres. And, just as any line segment is the shortest path connecting the two endpoints, any arc of a great circle is a geodesic on $\mathbb{S}^d$, which gives the notion of distance between two points $x, y \in \mathbb{S}^d$ as the length of the arc connecting them, which is necessarily unique (even if such an arc is not).  Geometrically, this is equal to the angle between the two vector $x$ and $y$, giving the formula $d_{\mathbb{S}^d}(x, y) = \cos^{-1} \langle x, y \rangle$.

Given this metric, given two points $x, y \in \mathbb{S}^d$ and $d \geq d_{\mathbb{S}^d}(x, y)$, we can define a spherical ellipsoid $\mathcal{E}$ on $\mathbb{S}^d$ as the set of points $z$ such that 
\[
d_{\mathbb{S}^d} (x, z) + d_{\mathbb{S}^d} (z, y) \leq d.
\]
Equivalently, $\mathcal{E}$ can be viewed as the intersection of $\mathbb{S}^d$ and an elliptic hypercone.

For any two distinct non-antipodal points $x, y \in \mathbb{S}^d$ which we shall call a proper pair of points, there exists a unique geodesic segment defined by the arc connecting them, which we denote as $[x, y]_{\mathbb{S}^d}$. The subscript shall be neglected whenever the context is apparent.

\subsection{Convex set}
A set $K \subseteq X$ is convex if for any two points $x, y \in K$, one has $[x, y] \in K$. It is clear that the intersection of two convex sets are convex, and $X$ is convex, hence any set $S \subseteq X$ admits a minimal convex set containing it, called the convex hull of $S$ and denoted as $\conv{S}$.
\[
\conv{S} = \bigcap \left\{K' \mid S \subseteq K' \subseteq X, K' \text{convex}\right\}.
\]

Now we consider $X = \mathbb{R}^{d+1}$, $K \subseteq \mathbb{R}^{d+1}$ compact and convex, and for $y \in \mathbb{R}^{d+1}$, we define the supporting function of $K$ at $y$ as
\[
h_K(y) = \sup_{x \in K} \langle x, y \rangle, 
\]
where one can show the maximum is attained on the boundary of $K$. For computational purpose, we define an extremal function $\sigma_K$ as an oracle returning such a vector $x \in \partial K$ at which the minimum is attained, i.e. $\langle \sigma_K(y), y \rangle = h_K(y)$, chosen arbitrarily if there exists many.
Let $A$ and $B$ be two convex sets, we define Minkowski sum of $A$ and $B$ as $A + B = \{a + b \mid (a, b) \in A \times B\}$. It can be proven that $A + B$ is convex, and if $A$ and $B$ are compact, then so is $A + B$.

\subsection{Polytopes}
For a finite set of points $P = \{x_1, x_2, ..., x_n\} \subseteq X$, its convex hull $\conv{P}$ is a convex polytope, whose dimension is defined to be that of the minimal plane (i.e. great spheres in case $X = \mathbb{S}^d$) containing $P$. Conversely, any convex polytope can be represented as the convex hull of such a set $P$, giving the so-called $\mathcal{V}$-representation. By abuse of notation, we shall refer to $P$ as the polytope $\conv{P}$ whenever the context is apparent. It is clear that $\conv{P}$ is compact, and in what follows, we only consider convex polytopes, thus omit the word "convex" when speaking of polytopes. 

A face $F$ of $P$ is said to be $k$-dimensional if it is contained in a minimal $k$-plane. The faces of $P$ are themselves polytopes, with dimension ranging from 0, the \emph{vertices}, to 1, the \emph{edges}, and up to $d$, the \emph{facets}, and $d+1$, which is $P$. Two faces are said to be adjacent if their intersection is non-empty. We denote $\vertex{P}$ to be the set of vertices of $P$.

In case $X = \mathbb{R}^{d+1}$, $P$ admits a normal fan $\mathcal{N}(P)$ associating to each face $F$ of $P$ the set of vector $y$ such that $h_P(y)$ is attained by only and any point $x \in F$, which can be shown to be a cone. For each cone $N \in \mathcal{N}(P)$, we define the normal spherical polytope $S = \mathbb{S}^d \cap N$. By convexity of $P$, $\mathcal{N}(P)$ is a complete fan, so the collection of such polytopes, denoted $\mathcal{S}(P)$, covers $\mathbb{S}^d$.

By abuse of notation, we denote $h_{\conv{P}}$ and $\sigma_{\conv{P}}$ by $h_P$ and $\sigma_P$ whenever they are well-defined, respectively. Let $A$ and $B$ be two polytopes, then one can show that $A + B$ is also a polytope.

By Minkowski-Weyl theorem, a polytope can be represented either as a set of vertices, also known as $\mathcal{V}$-representation, or as a bounded intersection of some halfspaces, which in $\mathbb{R}^{d+1}$ amounts to specifying a normal vector for each facet, also known as $\mathcal{H}$-representation. Thus, we shall call the complexity of a polytope $P$ to be the total number of its vertices and facets.

Finally, we define formally the robustness of a vector, whose name shall be justified in the next section.
\begin{definition}
	Let $P$ be a polytope, $p \in \mathbb{R}^{d+1}$, we define the robustness of $p$ (with respect to $P$), denoted $\theta_P(p)$, to be the supremum of angle $\theta$ such that for all $p' \in \mathbb{R}$ satisfying $\cos^{-1} ((p, p')) = \frac{\langle p, p'\rangle}{\| p \| \| p' \|} \leq \theta$ and for all $x \in P$, if $\langle x, p \rangle = h_P(p)$ then $\langle x, q \rangle = h_P(q)$, or equivalently, if $\langle x, p \rangle < h_P(p)$ then $\langle x, q \rangle < h_P(q)$. 
\end{definition}
Since if $\langle x, p \rangle = h_P(p)$ then for any $\lambda \geq 0$, one has $\langle x, \lambda p \rangle = h_P(\lambda p)$, it is thus customary to consider $p \in \mathbb{S}^d$, and the notion of robustness coincides with the minimum distance from $p$ to another normal spherical polytope $S \in \mathcal{S}(P)$ not containing $p$, which since there are only finitely many polytopes all of which are closed, the supremum is attained.

\subsection{RNA sequence and secondary structure}

A prominent example throughout this thesis is RNA secondary structure prediction problem, for which this reminder may prove to be useful. An RNA sequence $q = q_1 q_2 ... q_n$ is a string of nucleotides, each of which is one of the four letters (bases) $A$, $C$, $G$, and $U$. Then, given such a sequence, a base pair can form between two distinct positions $i < j$, and denoted by the unordered pair $\{i, j\}$. Hereinafter, we shall admit only the canonical base pairs and the wobble pairs, i.e. one must have $\{s_i, s_j\} = \{A, U\}$, $\{G, C\}$, or $\{G, U\}$. However, each base can belonged to at most one base pairs.

A set of $k$ such unordered pairs $s = \{\{i_\ell, j_\ell\} \mid \ell=1,2,...k\}$ is called a secondary structure, and said to be compatible if the base pairing are all admitted. Two base pairs $\{i, j\}$ and $\{k, l\}$ such that $i < k < j < l$ or $k < i < l < j$ are said to be crossing, and thus form what is called a pseudo-knot, and shall not be concerned in this thesis. Given an RNA sequence $q$ and a compatible secondary structure $s$, this assumption allows one to decompose a secondary structure into various smaller structures of some types, and thus may use dynamic programming to calculate certain features of the pair $(q, s)$. For instance, we may count the number $n_{ij}$ of base pairs of type $i-j$, which can be summarised by a triplet $c(q, s) = (c_{AU}, c_{GC}, c_{Gu}) \in \mathbb{N}^3$, which we call a feature vector or a signature.

Continue with this example, we may impose an energy model over these features, by choosing a parameter set $p = (p_{AU}, p_{GC}, p_{GU}) \in \mathbb{R}^3$. Then, for a pair $(q, s)$, we have the energy to be $E = \langle c(q, s), p \rangle = c_{AU} p_{AU} + c_{GC} p_{GC} + c_{GU} p_{GU}$. And moreover, given $q$ and $p$, one can find a compatible structure $s$ minimising $E$ by Nussinov's algorithm \cite{Nussinov1980a}: in particular, let $f(i, j)$ be the minimum energy over the subsequence $q' = q_i q_{i+1} ... q_{j-1} q_j$, one has
\[
	f(i, j) = \max
	\begin{cases}
		f(i+1, j) \\
		f(i, j-1) \\
		f(i-1, j+1) + p_{\{q_i, q_j\}} \text{ if } q_i \text{ and } q_j \text{ can form a base pair} \\
		\max_{i < k < j} f(i, k) + f(k+1, j)
	\end{cases},
\]

\begin{wrapfigure}[13]{l}{0.47\textwidth}
	\vspace{-12pt}
	\centering
	\begin{tabular}{c|c|c|c|c|c|c|c|c|c|c|}
		\mc{}& \mc{U} & \mc{A} & \mc{U} & \mc{U} & \mc{C} & \mc{U} & \mc{G} & \mc{A} & \mc{U} & \mc{G} \\
		\cline{2-11}
		U & 0 & 1 & 1 & 1 & 1 & 2 & 2 & 3 & 4 & 4 \\
		\cline{2-11}
		A & \cellcolor{gray} & 0 & 0 & 0 & 0 & 1 & 2 & 2 & 3 & 3 \\
		\cline{2-11}
		U & \cellcolor{gray} & \cellcolor{gray} & 0 & 0 & 0 & 1 & 2 & 2 & 3 & 3 \\
		\cline{2-11}
		U & \cellcolor{gray} & \cellcolor{gray} & \cellcolor{gray} & 0 & 0 & 1 & 2 & 2 & 3 & 3 \\
		\cline{2-11}
		C & \cellcolor{gray} & \cellcolor{gray} & \cellcolor{gray} & \cellcolor{gray} & 0 & 1 & 2 & 2 & 3 & 3 \\
		\cline{2-11}
		U & \cellcolor{gray} & \cellcolor{gray} & \cellcolor{gray} & \cellcolor{gray} & \cellcolor{gray} & 0 & 0 & 1 & 1 & 1 \\
		\cline{2-11}
		H & \cellcolor{gray} & \cellcolor{gray} & \cellcolor{gray} & \cellcolor{gray} & \cellcolor{gray} & \cellcolor{gray} & 0 & 1 & 1 & 1 \\
		\cline{2-11}
		A & \cellcolor{gray} & \cellcolor{gray} & \cellcolor{gray} & \cellcolor{gray} & \cellcolor{gray} & \cellcolor{gray} &  \cellcolor{gray} & 0 & 1 &  1\\
		\cline{2-11}
		U & \cellcolor{gray} & \cellcolor{gray} & \cellcolor{gray} & \cellcolor{gray} & \cellcolor{gray} & \cellcolor{gray} &  \cellcolor{gray} & \cellcolor{gray} & 0 & 0 \\
		\cline{2-11}
		G & \cellcolor{gray} & \cellcolor{gray} & \cellcolor{gray} & \cellcolor{gray} & \cellcolor{gray} & \cellcolor{gray} &  \cellcolor{gray} & \cellcolor{gray} & \cellcolor{gray} & 0\\
		\cline{2-11}
	\end{tabular}%
	\vspace{-5pt}
	\captionsetup{justification=centering}
	\captionof{table}{Memoisation table for $q = UAUUCUGAUG$. \label{tab: example of memoisation table}}
\end{wrapfigure}

together with initialisation $f(i, i) = f(i, i-1) = 0$ for all $i$. Then, given the memoisation table $f$, one can trace back to find an optimal and compatible secondary structure $s$.

Note that this formulation of Nussinov's algorithm has inherent ambiguity, and in particular, a secondary structure can be traced back in different ways, but this does not change the final output and thus of our concern. Likewise, one signature can correspond with multiple secondary structures, and depending on the choice of parameters, it is possible that multiple signatures are optimal.

For instance, let $p = (1, 1, 1)$ and $q = UAUUCUGAUG$, we have the memoisation table presented in Table \ref{tab: example of memoisation table}. We present in Figure \ref{fig: examples of optimal structures} some of the optimal secondary structures. Note that some base pairs are between adjacent nucleotides, which are not biologically realistic. Such base pairs are forbidden in Turner model, and can also be incorporated into Nussinov's algorithm, but we decide to omit in this example for the sake of simplicity.

\begin{figure}[H]%
	\centering
	\subfloat[An optimal structure with signature $(2, 1, 1)^T$]{{\includegraphics[width=0.45\textwidth]{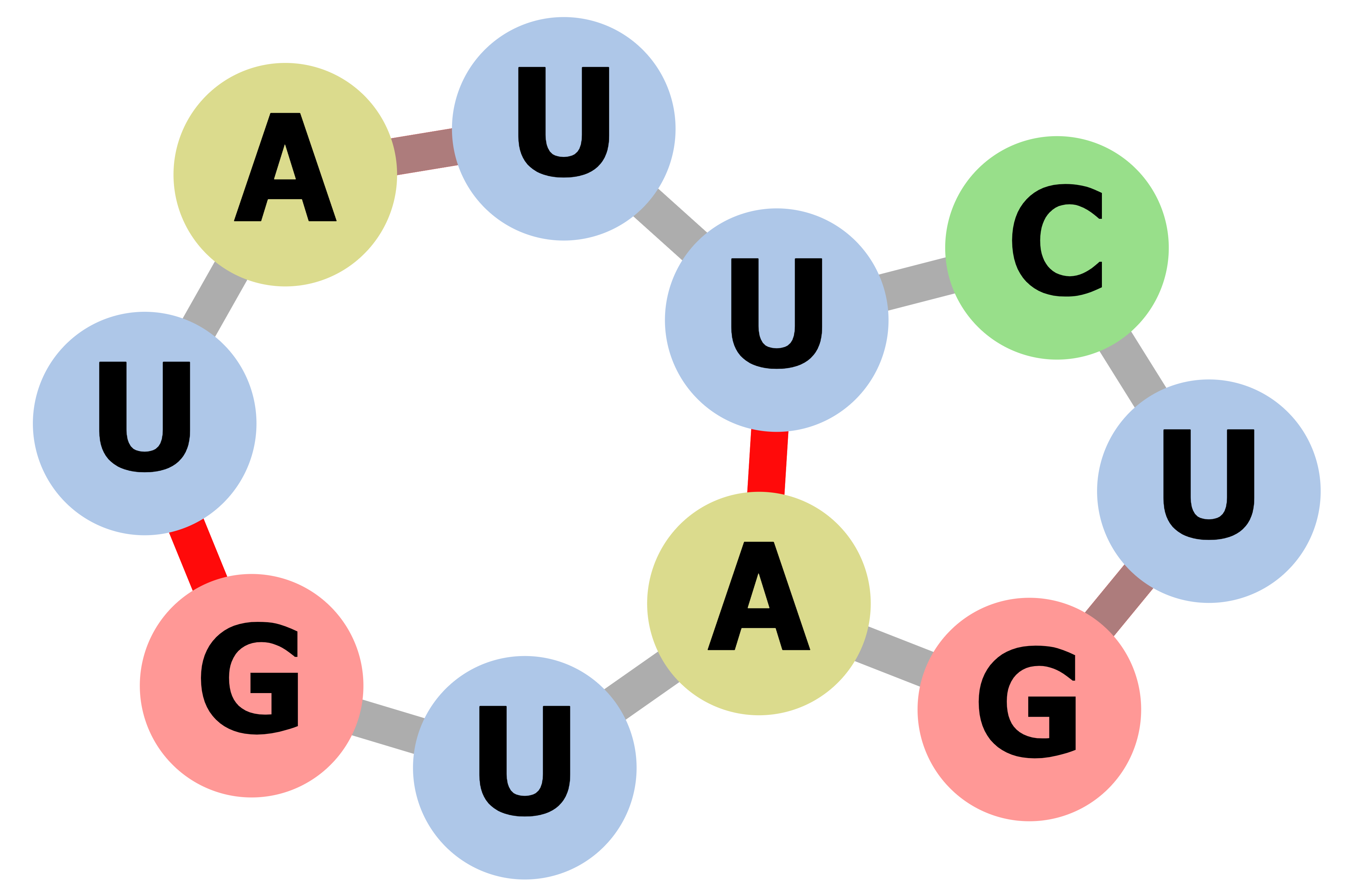} }}%
	\qquad
	\subfloat[An alternative optimal structure with signature $(2, 0, 2)^T$]{{\includegraphics[width=0.45\textwidth]{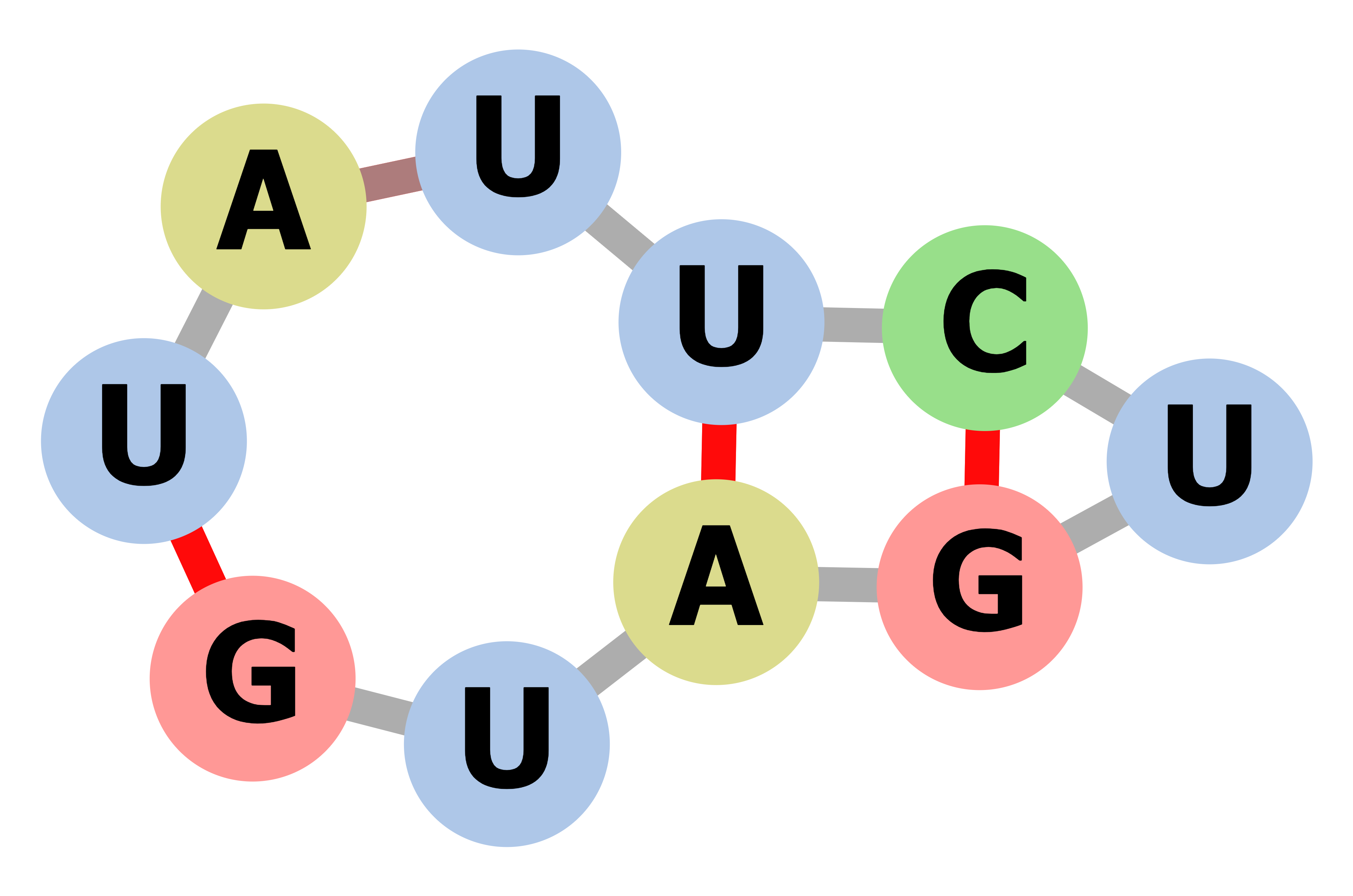} }}%
	\\
	\subfloat[An alternative optimal structure with signature $(2, 1, 1)^T$...]{{\includegraphics[width=0.45\textwidth]{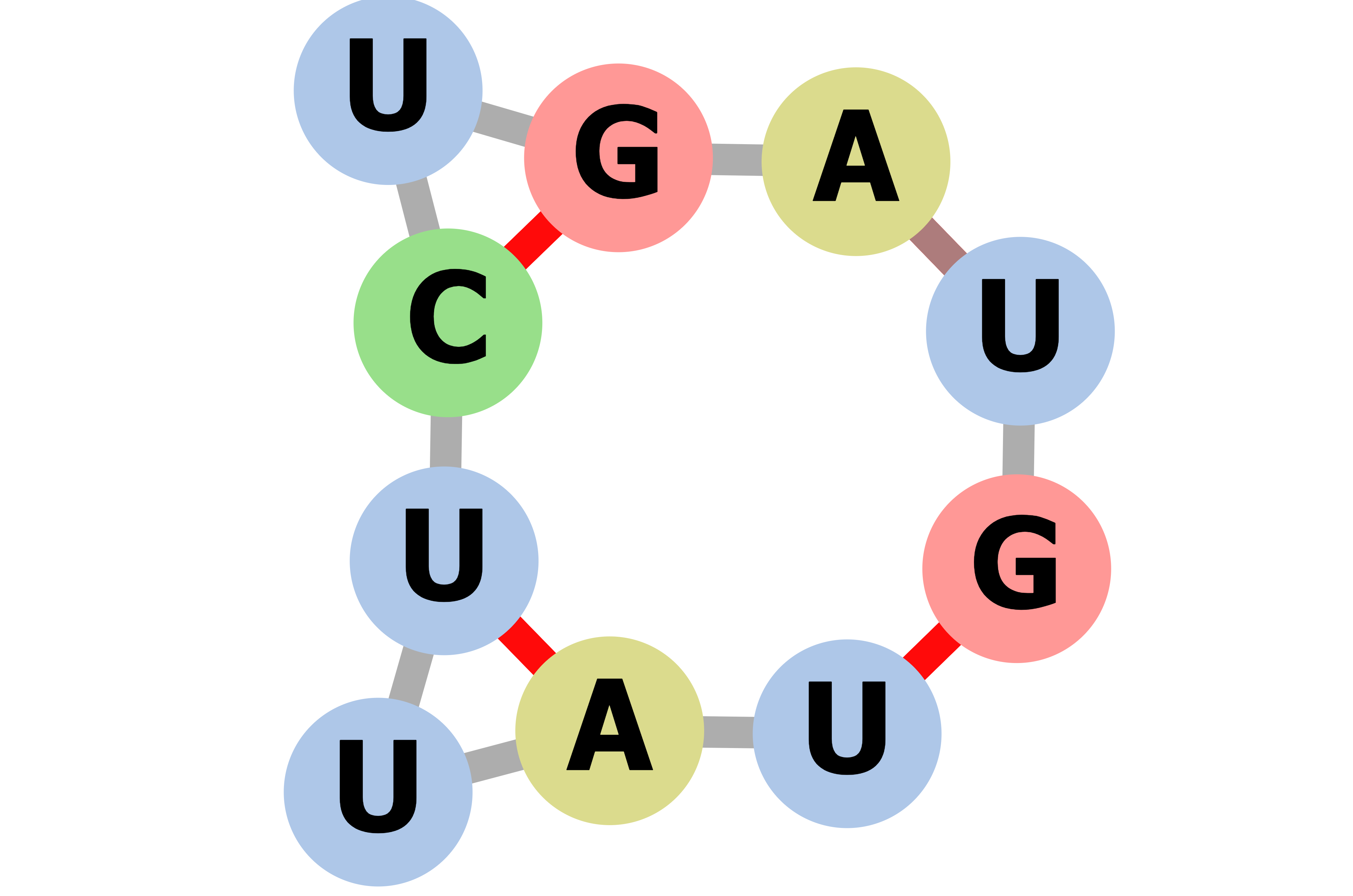} }}%
	\qquad
	\subfloat[... and another]{{\includegraphics[width=0.45\textwidth]{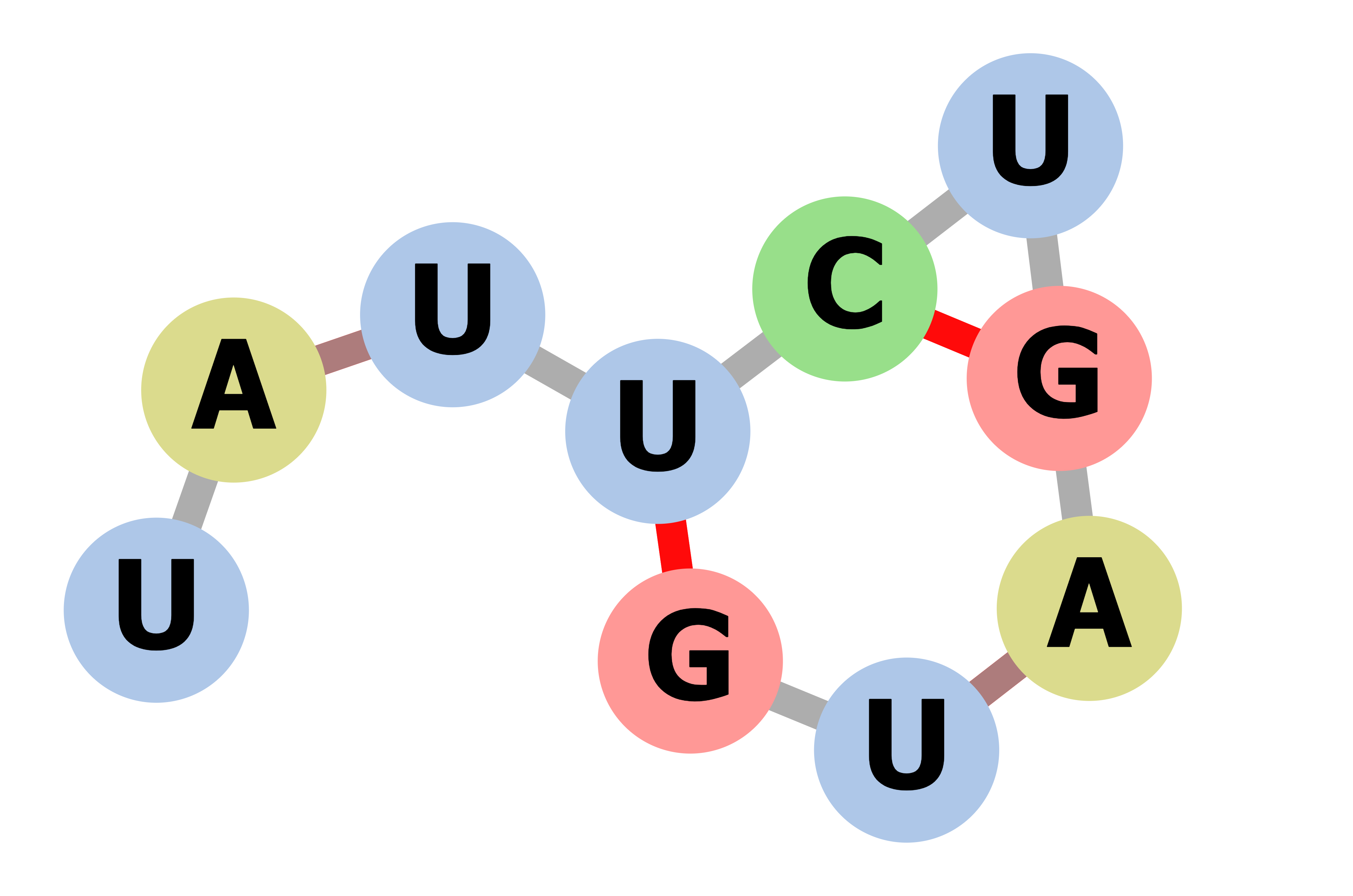} }}%
	\\
	\subfloat[Similarly, an alternative optimal structure with signature $(2, 0, 2)^T$...]{{\includegraphics[width=0.45\textwidth]{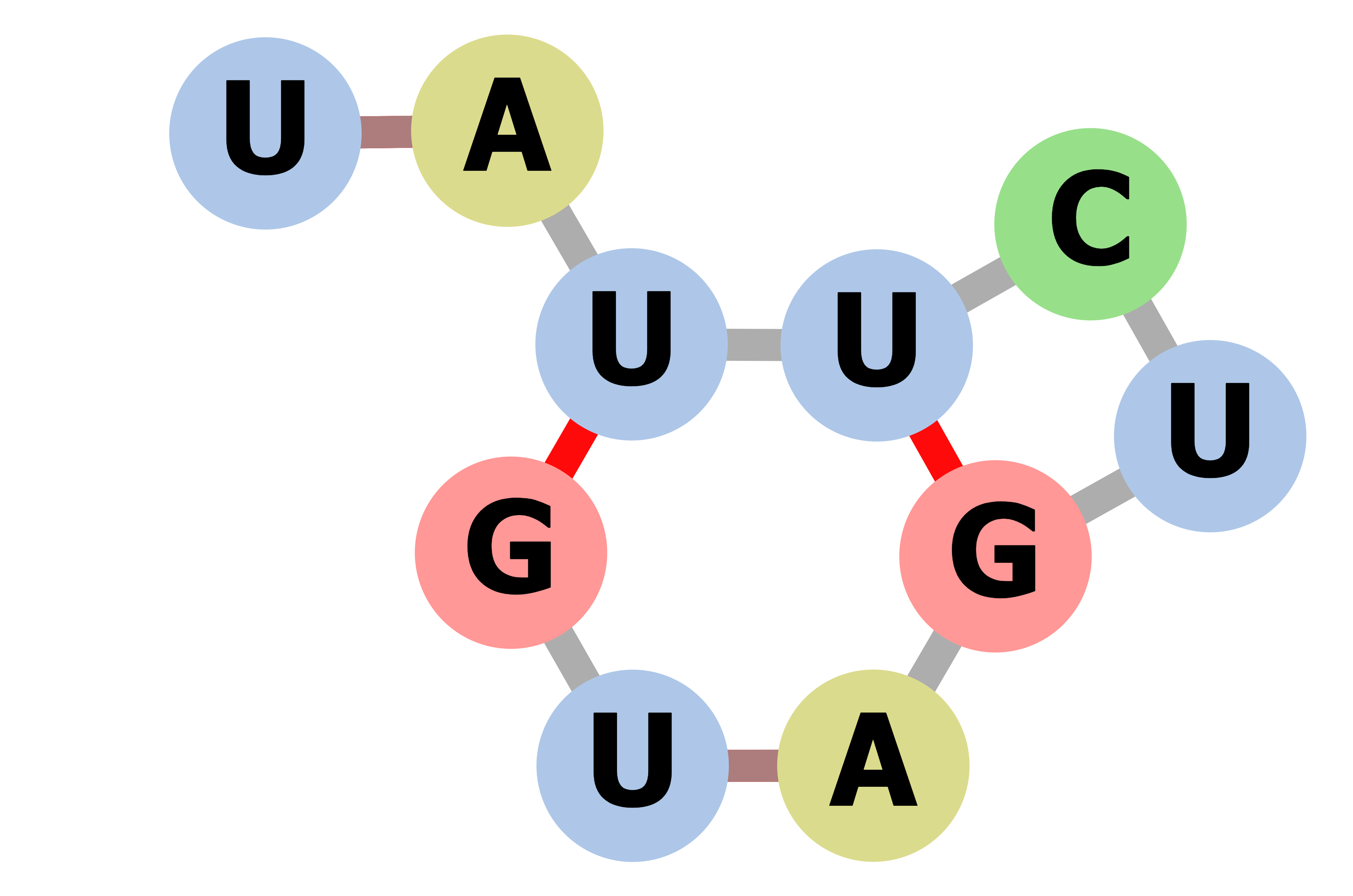} }}%
	\qquad
	\subfloat[... and yet another]{{\includegraphics[width=0.45\textwidth]{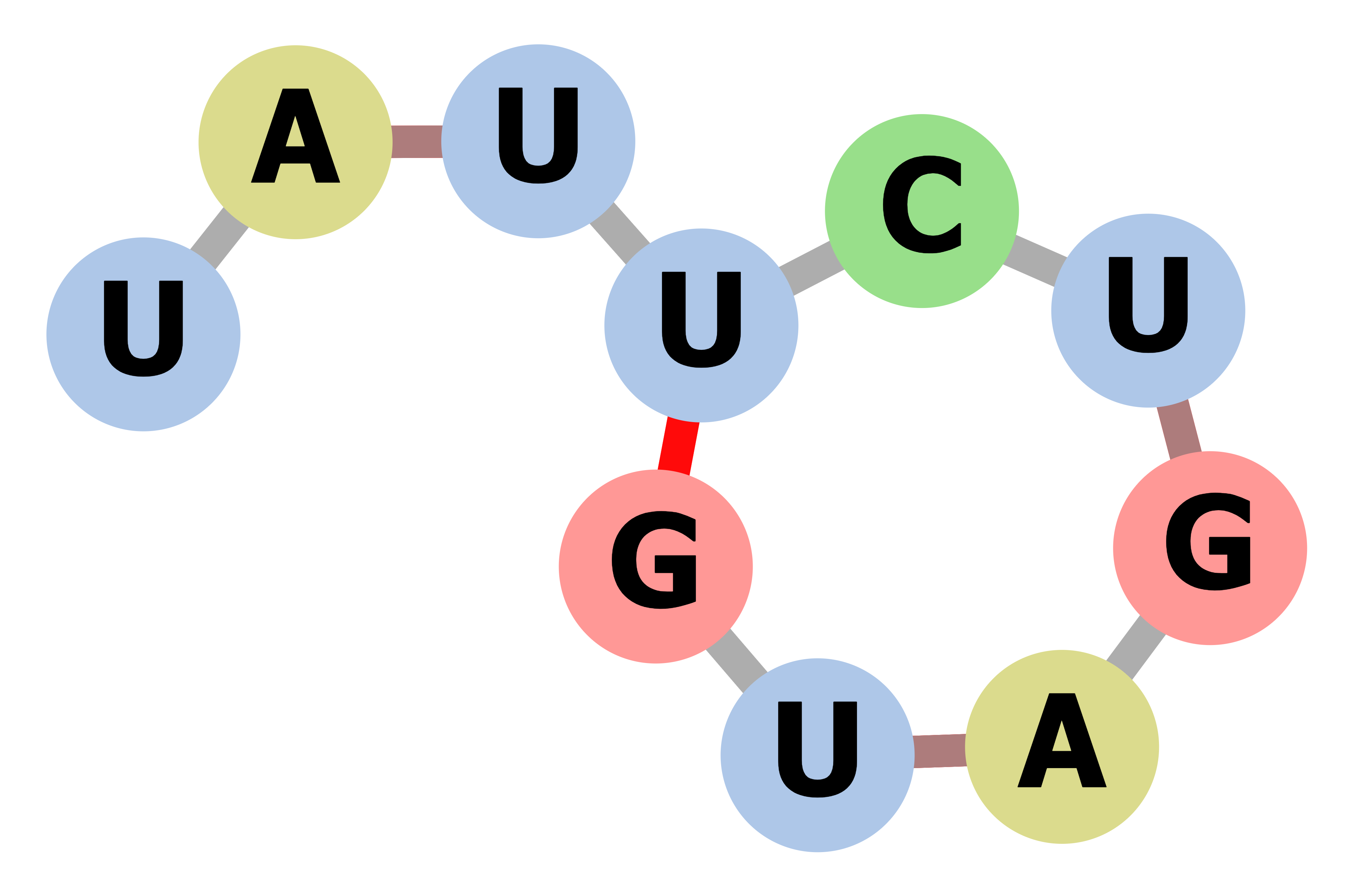} }}%
	\caption{Examples of optimal structures for $q = UAUUCUGAUG$. Non-gray edges denote base pairs.}%
	\label{fig: examples of optimal structures}
\end{figure}
\section{Parametric analysis and polytopes}
\label{chapter: parametric analysis and polytopes part 1}

\subsection{Original motivation}
The parametric analysis problem originally arose from sequence alignment algorithms where it was unclear how to specify the parameters. Thus, one wishes to study what parameter sets will give rise to what alignment, i.e. a way to classify parameter sets. In one of the earliest attempts, Fitch and Smith \cite{Fitch1983} considered a 2-parameter function to measure similarity in Needleman-Wunsch algorithm, and studied two short sequences derived from mRNA of chicken $\alpha$- and $\beta$-hemoglobin, where they identified 11 possible optimal solutions. Their method involved computing alignment for a number of parameter sets to identify a region in which any parameter set would yield the same alignment. They identified the region and moved to its neighbours, sequentially searched through the parameter space until no such region could be found. Nonetheless, this approach requires heavy computation, redundant alignments, and overall poses difficult as to argue about the regions, e.g. proving that no other regions can exist.

One crucial observation is by the discrete nature of alignments (or, in case of RNA, secondary structures), there exist necessarily finitely many possible optimal solutions, even if one considers \emph{all} uncountably many possible parameter sets. This number is further reduced since we consider not the alignments (resp. secondary structures) themselves, but some features whereof.

In our example, we focus only on possible values of $c_{AU}$, $c_{GC}$, and $c_{GU}$. A brute-force approach reveals that for $q = UAUUCUGAUG$, there are 67 compatible secondary structures, whence arise 15 possible signatures, thus there can be no more than 15 possible optimal structures even if we consider all parameter sets. These signatures are given below.

\begin{quote}
	\centering
	$(0, 0, 0)$, $(0, 0, 1)$, $(0, 1, 0)$, $(0, 1, 1)$, $(0, 0, 2)$,
	
	$(1, 0, 0)$, $(1, 0, 1)$, $(1, 1, 0)$, $(1, 1, 1)$, $(1, 0, 2)$,
	
	$(2, 0, 0)$, $(2, 0, 1)$, $(2, 1, 0)$, $(2, 1, 1)$, $(2, 0, 2)$
\end{quote}

Unfortunately, this approach is not generalisable as the length of $q$ grows: intuitively, a longer sequence is expected to have more possible structures, and if nucleotides are uniformly distributed, then this number can grow exponentially. Indeed, Zukor and Sankoff \cite{ZUKER1984} demonstrated the following theorem.

\begin{theorem} (Zuker and Sankoff \cite{ZUKER1984}, 1984)
	Let $q$ be a sequence of length $n$, whose bases are given by i.i.d random variables with the probability of occurrence for $A$, $G$, $C$, $U$ to be $a$, $g$, $c$, and $u$ respectively. Denote $p = 2(au + gc)$, $\alpha = \left(\frac{1 + \sqrt{1+4\sqrt{p}}}{2}\right)^2$, and $H = \frac{\alpha(1+4\sqrt{p})^{\frac{1}{4}}}{2\sqrt{\pi}p^\frac{3}{4}}$, then as $n$ tends to infinity, one has the expected number of compatible secondary structures $E(n)$ to be
	\[
		E(n) \sim Hn^{-\frac{3}{2}}\alpha^n.
	\]
	In particular, $p = \frac{1}{4}$ for the case where all nucleotides can occur with equal probability gives $\alpha = \frac{1+\sqrt{3}}{2} = 1.866...$.
\end{theorem}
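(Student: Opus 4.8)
The plan is to establish the asymptotic count by setting up a recurrence for the number of compatible (pseudoknot-free) secondary structures on a random sequence, converting it to a generating function, and applying singularity analysis. The key idea is that pseudoknot-free structures admit a clean recursive decomposition, which translates directly into a functional equation for the generating function, from which the $\alpha^n n^{-3/2}$ growth emerges as a textbook consequence of a square-root singularity.

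First I would let $S_n$ denote the expected number of compatible secondary structures on a sequence of length $n$, and derive a recurrence by conditioning on the fate of the first base $q_1$: either it is unpaired, contributing a factor from structures on $q_2 \cdots q_n$, or it pairs with some $q_k$, which (by the non-crossing condition) splits the remaining sequence into the enclosed arc $q_2 \cdots q_{k-1}$ and the exterior $q_{k+1} \cdots q_n$. Taking expectations over the i.i.d.\ bases, the probability that positions $1$ and $k$ form an admitted pair is exactly $p = 2(au + gc)$ when wobble pairs are excluded in this combinatorial count, or the appropriate constant otherwise; this is where the parameter $p$ enters. Introducing the generating function $F(z) = \sum_{n \geq 0} S_n z^n$, the recurrence becomes a quadratic functional equation of the shape $F(z) = 1 + zF(z) + p\,z^2 F(z)^2$ (up to bookkeeping of the exact indices), which one solves explicitly by the quadratic formula to obtain $F(z) = \frac{1 - z - \sqrt{(1-z)^2 - 4pz^2}}{2pz^2}$.

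Next I would locate the dominant singularity $\rho$ of $F$, which is the smallest positive root of the radicand $(1-z)^2 - 4pz^2 = 0$; solving this gives $\rho = \frac{1}{1 + 2\sqrt{p}}$, and one checks $\alpha = 1/\rho = \left(\frac{1+\sqrt{1+4\sqrt{p}}}{2}\right)^2$ reconciles with the stated exponential base after simplifying the nested radical. The singularity is of square-root type, so by the standard transfer theorem of singularity analysis (the Flajolet--Odlyzko framework), a singularity of the form $\sqrt{1 - z/\rho}$ yields coefficients asymptotic to $C\,\rho^{-n} n^{-3/2}$. Computing the constant $C$ amounts to expanding $F$ near $z = \rho$, extracting the coefficient of the $(1 - z/\rho)^{1/2}$ term, and dividing by $\Gamma(-1/2) = -2\sqrt{\pi}$; matching this against $H$ is then a direct, if tedious, algebraic verification.

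\textbf{The main obstacle} will be the precise constant $H$, not the exponential rate. The rate $\alpha$ follows cleanly from locating $\rho$, but pinning down $H = \frac{\alpha(1 + 4\sqrt{p})^{1/4}}{2\sqrt{\pi}\,p^{3/4}}$ requires carefully tracking the leading Puiseux coefficient of the square root through the change of variables $z \mapsto z/\rho$ and through the factor of $2pz^2$ in the denominator of $F$, which itself must be evaluated at $z = \rho$. A subtlety worth flagging is whether the intended combinatorial model includes a minimum hairpin-loop length (the excerpt's toy example permits adjacent base pairs, which suggests it does not), since imposing such a constraint would alter the functional equation and hence both $\alpha$ and $H$; I would therefore state explicitly that the count is taken with no minimum loop length, matching the form of $p$ as given. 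Verifying the numerical check $p = 1/4 \Rightarrow \alpha = \frac{1+\sqrt{3}}{2} \approx 1.866$ provides a useful sanity test on the algebra before committing to the general constant.
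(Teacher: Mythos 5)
The paper does not actually prove this statement --- it is quoted with attribution to Zuker and Sankoff --- so your proposal can only be judged on its own terms. Your overall pipeline is the standard and correct one for results of this type: by linearity of expectation and independence of the (position-disjoint) pairing events under the i.i.d.\ model, $E(n) = \sum_{s} p^{|s|}$ over non-crossing matchings $s$, so the expectation satisfies the deterministic structure recurrence with each pair weighted by $p$; the generating function is then algebraic, and a square-root singularity delivers the $n^{-3/2}\alpha^n$ law via transfer theorems.

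The genuine gap sits exactly at the subtlety you flagged and then resolved the wrong way. You adopt the model with no minimum hairpin-loop length, obtaining $F = 1 + zF + pz^2F^2$ and hence $\rho = \frac{1}{1+2\sqrt{p}}$, and you assert that $1/\rho$ ``reconciles with the stated exponential base after simplifying the nested radical.'' It does not: expanding the stated base gives $\left(\frac{1+\sqrt{1+4\sqrt{p}}}{2}\right)^2 = \frac{1+2\sqrt{p}+\sqrt{1+4\sqrt{p}}}{2}$, and setting this equal to $1+2\sqrt{p}$ forces $\sqrt{1+4\sqrt{p}} = 1+2\sqrt{p}$, i.e.\ $p=0$. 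Numerically, at $p=\tfrac14$ your model yields $\alpha = 2$, while the theorem asserts $\alpha \approx 1.866$. The nested radical is precisely the signature of the Zuker--Sankoff (and Stein--Waterman) convention that a pair $\{i,j\}$ requires $j \ge i+2$, i.e.\ every hairpin loop contains at least one unpaired base --- a constraint the paper's toy example drops but the cited theorem keeps. With it, the enclosed arc is nonempty and the functional equation becomes $F = 1 + zF + pz^2(F-1)F$; the radicand $(1-z+pz^2)^2 - 4pz^2$ then vanishes first at the root of $pz^2 - (1+2\sqrt{p})z + 1 = 0$, giving $\rho = \frac{(1+2\sqrt{p})-\sqrt{1+4\sqrt{p}}}{2p}$ and $1/\rho = \frac{1+2\sqrt{p}+\sqrt{1+4\sqrt{p}}}{2} = \left(\frac{1+\sqrt{1+4\sqrt{p}}}{2}\right)^2$, matching the statement (whose final line, incidentally, should read $\alpha = \left(\frac{1+\sqrt{3}}{2}\right)^2 = 1.866\ldots$, since $\frac{1+\sqrt{3}}{2} \approx 1.366$). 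Your computation of the constant $H$ must likewise be redone from this corrected functional equation; once that repair is made, the rest of your singularity-analysis argument goes through as you describe.
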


Much less is known about the number of possible signatures, as it is necessarily model-dependent. For our model counting the number of base pairs for each type and in general for any energy model, it is expected that the small number of features will greatly reduces the possible signatures, for the value of each feature is bounded by the length $|q|$ of $q$. Therefore, assuming the number of features is fixed, the number of possible signatures will be bounded by a polynomial of $|q|$. But we remind that it is one thing to compute \emph{the number of signatures}, it is another thing to compute \emph{the signatures} themselves: even in our model where signatures necessarily have integer coordinates, a priori there is no viable way to know which signatures our model can predict.

\subsection{Reduction to polytopes}

The next crucial observation is we need not care about all signatures, but only those whom our model can predict. Mathematically, suppose a parameter $p$, a model $E$ given by a set of features given by a function $c(\cdot, \cdot)$, and a sequence $q$, a dynamic programming algorithm will then compute
\[
	\max_{s'} \langle c(q, s'), p \rangle,
\]
where $s'$ ranges over all compatible secondary structures. This motivates us to define the RNA polytope of $q$ (in the model $E$) as
\[
	\mathcal{P}(q) = \conv{\{c(q, s) \mid \text{secondary structure } s\}},
\]
then a structure $s$ that our model can predict necessarily satisfies $\langle c(q, s), p \rangle = h_{\mathcal{P}(q)}(p)$, or equivalently $c(q, s) \in \partial \mathcal{P}(q)$. We call such a pair $(q, s)$ \emph{learnable} (for model $E$), and we may restrict ourselves to studying only the boundary of $P$.

\begin{figure}[h]%
	\centering
	\subfloat[\centering RNA polytope $\mathcal{P}(q)$]{{\includegraphics[width=0.5\textwidth]{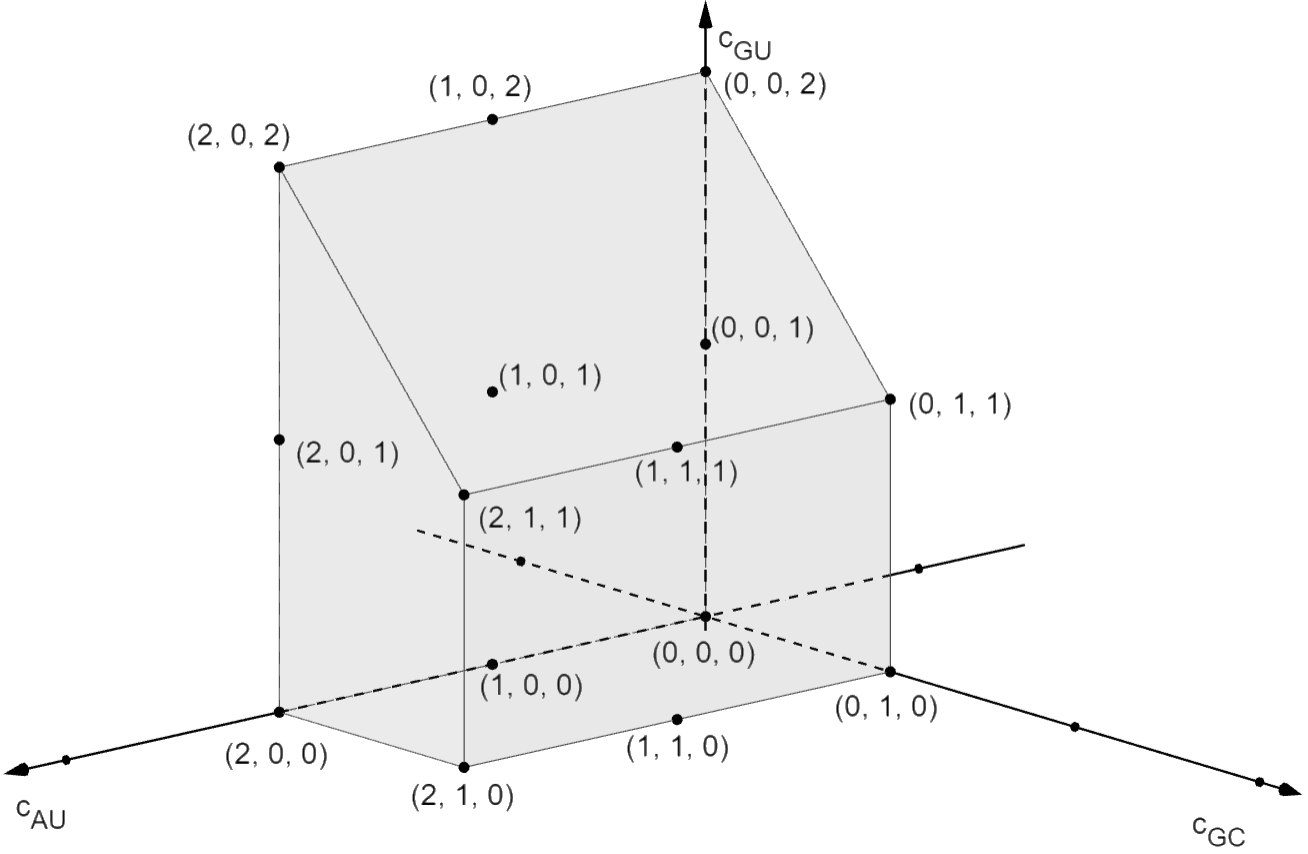} }}%
	\qquad
	\subfloat[\centering spherical normal polytopes $\mathcal{S}(\mathcal{P}(q))$]{{\includegraphics[width=0.4\textwidth]{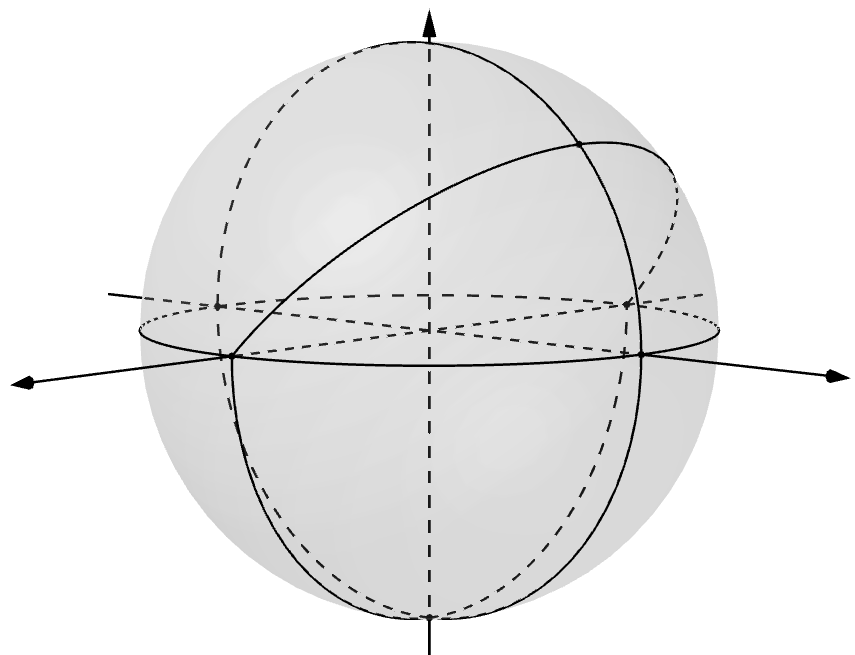} }}%
	\caption{Example of an RNA polytope for $q = UAUUCUGAUG$}%
	\label{fig:example of RNA polytope}%
\end{figure}

Back to our example, from the list of 15 possible signatures, we have the polytope $\mathcal{P}(q)$ and its spherical normal polytopes $\mathcal{S}(\mathcal{P}(q))$ shown in Figure \ref{fig:example of RNA polytope}. 

We have three remarks:
\begin{enumerate}
	\item Given a parameter set $p$, finding the signature it will predict corresponds to finding the spherical polytope it belongs to, thus this gives us a complete classification of parameter sets.
	\item In our example, it happens to be the case that all signatures lie on the boundary of $\mathcal{P}(q)$, but as the length increases, this phenomenon is not to be expected in general. Whilst we found no results concerning the complexity of $\mathcal{P}(q)$ as it depends not only on the number of possible signatures, but also \emph{their distribution} \cite{Har-Peled2011}, which has not been well-studied for any model. Regarding the sequent alignment problem, amongst $O(n^d)$ signatures, where $n$ and $d$ denote the total length of two sequences and the dimension, Gusfield et al. \cite{Gusfield1994} showed that only $O(n^{d\frac{d-1}{d+1}})$ points lie on the boundary for $d = 2$, and Pachter and Sturmfels \cite{Pachter2004} showed the same for general $d$.
	\item Finally, this phenomenon of unlearnability is not restricted to high dimension, and in fact it is usually the opposite: for a given sequence, lower dimensions allow fewer signatures to be learnable. Section \ref{chapter: current methods part 1} in particular shows how it can happen for $d = 2$.
\end{enumerate}

In the early 1990s, various methods were proposed to construct systematically the decomposition of parameter space with less computations, such as that by Fernandez-Baca and Srinivasan \cite{Fernandez-Baca1991} (amongst others \cite{Vingron1994}). Their algorithm involves finding an initial convex set of points in the polytope $P$, then gradually extending the set until it matches the boundary of $P$. This scheme is applicable to general dimension, with complexity to be polynomial of that of $P$ and $\sigma_P$.

Nonetheless, the methods until then treated $\sigma_P$ and $h_P$ as black boxes, which both introduced unnecessary overheat and was not entirely satisfied in the context of bioinformatics in general, where many of the algorithms are dynamic-programming-based. Examples include Needleman-Wunsch and Smith-Waterson algorithm for sequence alignment, Nussinov's and Zuker's algorithm for RNA secondary structure, and Fitch's algorithm for phylogenetic tree construction. Since the optimal solution is constructed by solving subproblems, one may ask if it is possible to construct $\mathcal{P}(p)$ in a similar fashion by modifying the original dynamic programming scheme. One then may hope to find a more efficient mean to construct the polytopes, and even gather information about it without explicit constructions.

\subsection{Dynamic programming in polytope algebra}

In the context of graphical models, Pachter and Sturmfels \cite{Pachter2004, Pachter2005} demonstrated that an alternative approach to construct the polytope from the dynamic programming scheme is possible. In particular, they showed that for any models whose optimal value is a linear combination of the parameters which can be found by a max-sum decomposition, there is a natural extension of the dynamic programming to compute the corresponding polytope of the model. In particular, recall the (max) tropical semi-ring given by $(\mathbb{R}\cup\{-\infty\}, \max(\cdot, \cdot), +)$, one can consider the same max-sum decomposition but in polytope algebra, given by $(\{\text{convex polytopes}\}, \oplus, \otimes)$, where for any two given polytopes $A$ and $B$, one defines

\[
	A \oplus B = \conv{\{A \cup B\}}, A \otimes B = A + B.
\]

\begin{figure}
	\centering
	\subfloat[\centering $A$]{{\includegraphics[width=0.2\textwidth]{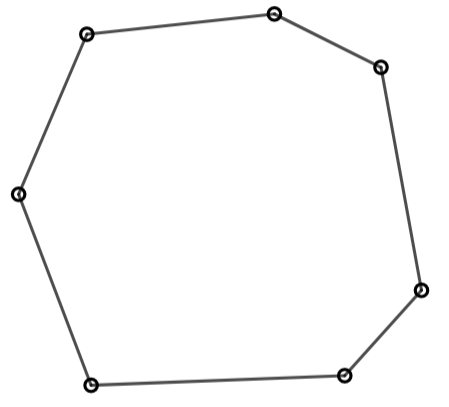} }}%
	\qquad
	\subfloat[\centering $B$]{{\includegraphics[width=0.2\textwidth]{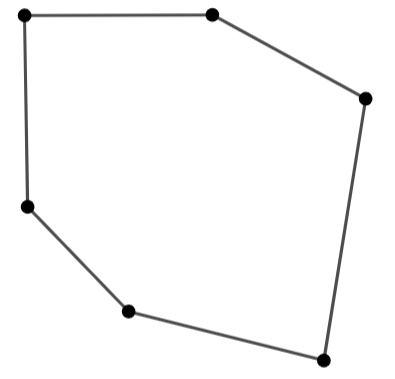} }}%
	\qquad
	\subfloat[\centering $A \oplus B$]{{\includegraphics[width=0.2\textwidth]{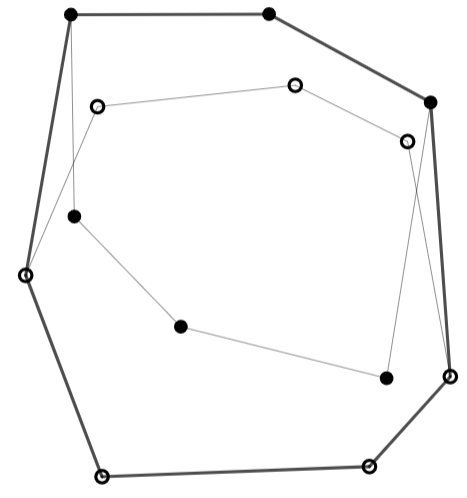} }}%
	\qquad
	\subfloat[\centering $A \otimes B$]{{\includegraphics[width=0.2\textwidth]{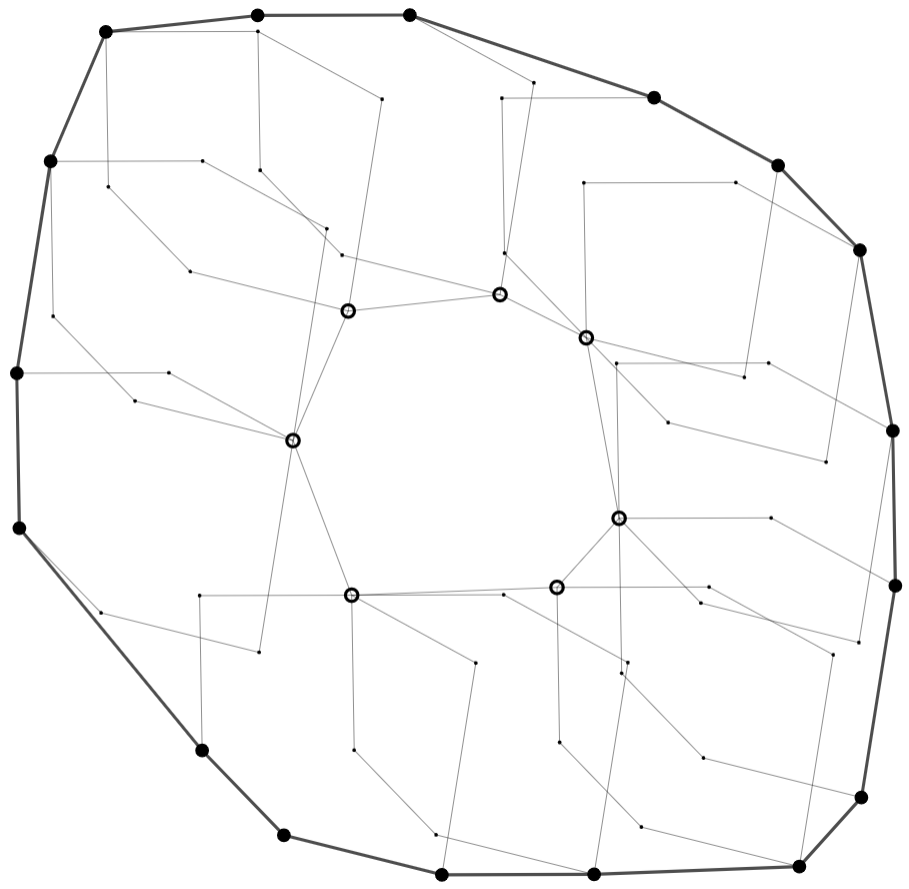} }}%
	\caption{Example of operations in polytope algebra.}
\end{figure}

A more throughout treatment can be found in original articles, but for sake of simplicity, in this thesis, we shall derive the scheme in a more intuitive fashion, and without machinery used by Pachter and Sturmfels, such as Newton polytopes.

First, we note that the Nussinov's algorithm in our example uses only two operations, namely addition and taking the max. It turns out to be the paradigm for many other dynamic programming algorithm, that Tendeau formalised in definition \cite{Tendeau1998}, and thus in this case, instead of computing in full the dynamic programming to trace out the boundary of $\mathcal{P}(q)$, we only need to keep track of how the polytopes evolve as the algorithm is executed.

Given two terms $a(p)$ and $b(p)$ (which stands for values of form $f(i, j)$ for some $i$ and $j$ in our cases) whose exact values depend on the choice of parameter set $p$ at the beginning of execution, to which we associate two polytopes $A$ and $B$. Recalling the definition of such polytopes, we have that $a(p) = h_A(p)$, and likewise, $b(p) = h_B(p)$. Therefore, suppose we associate to the term $\max(a(p), b(p))$ a polytope $C$, then one must have $h_C(p) = \max(a(p), b(p)) = \max(h_A(p), h_B(p))$ for all $p$; similarly, to the term $a(p) + b(p)$ the corresponding polytope $D$ must satisfy $h_D(p) = a(p) + b(p) = h_A(p) + h_B(p)$.

On the other hand, consider the polytope $C' = A \oplus B$ and $D' = A \otimes B$, and some $p \in \mathbb{R}^{d+1}$, then by construction, one has
\[
	h_{C'}(p) = \max_{x \in C'} \langle x, p \rangle = \max_{x \in A \cup B} \langle x, p \rangle = \max\left(\max_{x \in A} \langle x, p \rangle, \max_{x \in B} \langle x, p \rangle \right) = \max(h_A(p), h_B(p)),
\]
and similarly, 
\[
	h_{D'}(p) = \max_{x \in D'} \langle x, p \rangle = \max_{(a, b) \in A \times B} \langle a + b, p \rangle = \max_{(a, b) \in A \times B} \langle a, p \rangle + \langle b, p \rangle = \max_{a \in A} \langle x, p \rangle + \max_{b \in B} \langle b, p \rangle,
\]
so we conclude that $h_{C'} = h_C$ and $h_{D'} = h_D$. As a convex set is determined by its supporting function, we conclude that $C = A \oplus B$ and $D = A \otimes B$.

With a mathematical foundation laid down, we can at least in theory carry out the modified dynamic programming to study the parametric space. For instance, going back to our example, we can modify the Nussinov's algorithm to polytope algebra, yielding the following recurrent relation, where $P_{AU} = (1, 0, 0)^T$, $P_{GC} = (0, 1, 0)^T$, $P_{GU} = (0, 0, 1)^T$, and $F(i, j)$ denotes the polytope associated with the $q_i q_{i+1} ... q_{j-1} q_j$.

\[
F(i, j) = \bigoplus
\begin{cases}
	F(i+1, j) \\
	F(i, j-1) \\
	F(i-1, j+1) \otimes P_{\{q_i, q_j\}} \text{ if } q_i \text{ and } q_j \text{ can form a base pair} \\
	\bigoplus_{i < k < j} F(i, k) \otimes F(k+1, j)
\end{cases}.
\]
Similar to original Nussinov's algorithm mentioned in Section \ref{chapter: notation and preliminaries}, this recurrent equation has redundancy, which can be overcome by alternative formulations. But this shall not change the final output, nor alter the complexity up to a constant factor.
\section{Current methods}
\label{chapter: current methods part 1}

Unfortunately, supporting both operations of the polytope algebra poses a real challenge: whilst DNA sequence alignment and base-pair counting model each have 3 features, Turner energy model has close to 8000. Suppose we can effectively construct the feature vector from a given pair of sequence and secondary structure, the large dimensionality will induce a bottleneck. For general dimension $d$, there are currently two approaches.

\subsection{Computation in full dimension}
One may attempt to compute $\mathcal{P}(q)$ in full, yet the challenge lies on how one represent the polytopes. Let $A$ and $B$ be two polytopes, then $A \oplus B$ and $A \otimes B$ are efficiently computable only when $A$ and $B$ are given in $\mathcal{H}$- and in $\mathcal{V}$-representation, respectively \cite{Tiwary2008}.

We may attempt to maintain both representations: this is known as the Double Description method, whose dual version is known as Beneath-and-Beyond algorithm, but this approach is also not computationally feasible \cite{Bremner1999}. In practice, even in small dimensions, this approach still causes problems as the length of $q$ increases: when we restrict ourselves to the study of multi-loops in Turner model, which corresponds to 3 features, Poznanovic et al. showed that computing $\mathcal{P}(q)$ when $q$ is a tRNA of length 50 nts took approximately 2 hours, and when $q$ is a 5S rRNA, it took on average 23 hours. They also reported that a sequence of length 175 nts would increased the computation time to a week, and for $q$ of length 354 nts would take more than 2 months. Thus, it cannot practically cover the effective length for which Turner model finds its applications, which is up to 700 nts \cite{Mathews1999a}, and not scalable for the data we have.

It is natural that as the length of $q$ increases, so will the complexity of $\mathcal{P}(q)$. On the one hand theory, the result above by Zuker and Sankoff provides some intuition. On the other hand, much less can be said about the asymptotic complexity of $\mathcal{P}(q)$, as it does not solely depend on the number of points $c(q, s)$, but also the \emph{distribution} of such points \cite{Har-Peled2011}, which has not been well-studied. In practice, Poznanovic et al. demonstrated that an increase of less than 50 nts multiplies that number by a factor of 3.5. 

To confirm this state of affairs, we implement the modified Nussinov's algorithm naively in Python, where the two operations are supported by \texttt{SageMath} software system. A random sequence of 10 nts took approximately 0.5 seconds, but that of 100 nts took 20 minutes, and increasing the length by 50 nts extended the runtime to 1 hour, agreeing with observations by Poznanovic et al. that the bottleneck lies in the algorithmic aspect, i.e. the inherent difficulty of the computational geometry problem involved, and not of sequence. 

We may also compute both operations in the same representation, but Tiwary \cite{Tiwary2008} showed that this problem is output-sensitive \emph{strongly} \texttt{NP}-hard. In other words, he showed that unless $\texttt{P} = \texttt{NP}$, there exists no algorithm to this problem whose complexity is polynomial with respect to the final polytope's complexity (which we recall to possibly be exponential of $q$'s length) and the maximum absolute value of coefficient.

Another idea is that if we represent polytopes as sets of vertices, as the dynamic programming algorithm executes, $A$ and $B$ are not sets of random points, and thus the employment of a general-purpose convex hull algorithm to compute $A \oplus B$ may not be necessary. In particular, we only need to \emph{merge} two polytopes, for which there exist such a method in case of 2- and 3-dimensional \cite{ORourke1998}, summarised in Table \ref{tab:comparison of complexity}.

\begin{table}[h]
	\centering
	\begin{tabular}{c c c}
		\toprule
		Dimension & \multicolumn{2}{c}{Complexity}       \\
		\cmidrule{2-3}
		& Merge 2 convex hulls    & Build from 2 convex hulls   \\
		\midrule
		2         & $O(n)$                  & $O(n \log h)$ \\
		3         & $O(n)$                  & $O(n \log h)$ \\
		\bottomrule\\
	\end{tabular}\\
	\begin{minipage}{300pt}
		\caption{Best complexity for merging two convex hulls versus computing from scratch. $n$ and $h$ denote number of points in input and of the convex hull, respectively. \label{tab:comparison of complexity}}
	\end{minipage}
\end{table}

We remark that $n$ and $h$ in Table \ref{tab:comparison of complexity} concern the input and output of \emph{individual} operations in polytope algebra, and not those of the derived dynamic programming algorithms. In particular, for some energy models, it may be the case that the final polytope has few vertices, yet its construction is still time-consuming. And unfortunately, it is difficult to implement already in 3-dimensional \cite{ORourke1998} and unclear how to generalise for higher dimensions, thus renders this approach intractable.

\subsection{Dimensionality reduction: an useful heuristics of uncertain precision}

A common approach is to consider a few features whilst fixing the remaining parameters, as explored by Poznanovic et al. \cite{Poznanovic2021}. Unfortunately, we lose degrees of freedom in the process, which may make a pair $(q, s)$ learnable for the full model, but not for the restricted version. Or in other words, reducing dimension allows fewer signature to be learnable.

In our example earlier, suppose we ignore the feature $c_{GC}$, then we have only 9 signatures left, namely $(0, 0)$, $(0, 1)$, $(0, 2)$, $(1, 0)$, $(1, 1)$, $(1, 2)$, $(2, 0)$, $(2, 1)$, and $(2, 2)$. Drawing on the plane, one obtains a square grid, as shown in Figure \ref{fig:example of RNA polytope reduction}. On the other hand, the signature $(1, 1)$ lies in the interior of the square and thus is not learnable.

\begin{figure}[h]%
	\centering
	\subfloat[\centering RNA polytope in 3D $\mathcal{P}(q)$]{{\includegraphics[width=0.55\textwidth]{rna_polytope.png} }}%
	\qquad
	\subfloat[\centering Its projection onto $c_{AU} - c_{GU}$ plane $\mathcal{S}(\mathcal{P}(q))$]{{\includegraphics[width=0.35\textwidth]{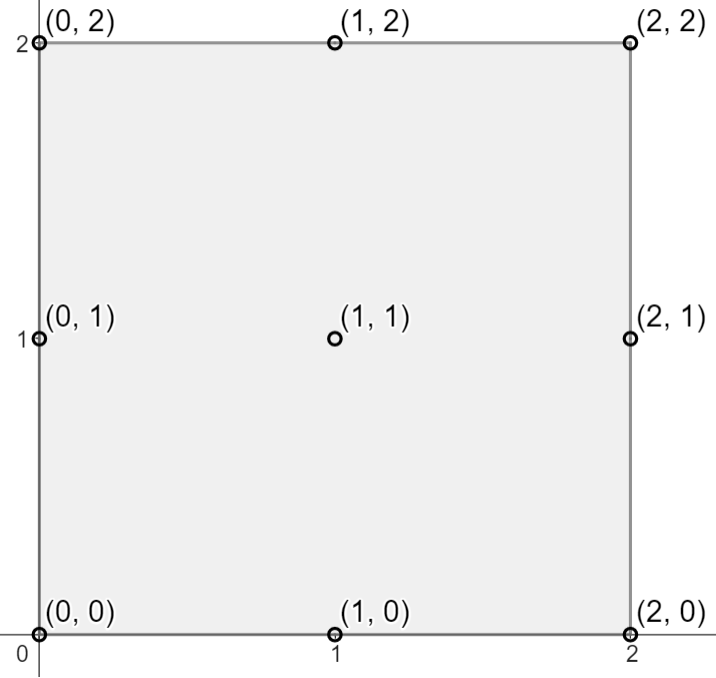} }}%
	\caption{Example of a reduced RNA polytope for $q = UAUUCUGAUG$}%
	\label{fig:example of RNA polytope reduction}%
\end{figure}

To see how this happens, let $d > 0$, $E$ be an energy model with a parameter vector $p = (p_1, p_2, ..., p_d)^T \in \mathbb{R}^d$, which assigns to each pair of sequence $q$ and secondary structure $s$ the energy $E(p, q, s) = \langle c(q, s), p \rangle$. We denote $\text{proj} : \mathbb{R}^{d+1} \mapsto \mathbb{R}^d$ the map projecting vectors of $\mathbb{R}^{d+1}$ onto $\mathbb{R}^d$ by omitting the last coordinate. Let $E'$ be a model whose energy function is $E'(p, q, s) = \langle \proj{c(q, s)}, \proj{p} \rangle$, $q$ be a sequence, $s$ be some secondary structure, $x = c(q, s)$, $x' = \proj{x}$.

Assuming we fix $p_d \neq 0$ and suppose, without loss of generality, that $p_d > 0$. Since if there exists a vector $r$ such that $E(r, q, s) = \max_{s'} E(r, q, s')$, then for any scalar $\lambda \geq 0$, one has $E(\lambda r, q, s) = \max_{s'} E(\lambda r, q, s')$, in fact one need not constraint the choices of parameters for $E$ to those of the form $\begin{pmatrix}
	p' \\
	p_d
\end{pmatrix}$ for $p' \in \mathbb{R}^{d-1}$, but in fact, one may choose any vector $r \in \mathbb{R}^d$ such that $r_d > 0$ and $E(r, q, s) = \max_{s'} E(r, q, s')$, and with $\lambda = \frac{p_d}{r_d}$, one obtains a satisfying parameter vector $p = \lambda r$. Unfortunately, there exists no such other vector $r$, for if $r_d = 0$ will lead to $p_d = 0$ for any choice of $\lambda$, and if $r_d < 0$, then choosing $\lambda = \frac{p_d}{r_d}$ results in $E(p, q, s) = \min_{s'} E(p, q, s')$. Thus, if $(q, s)$ is learnable for $E$ but only with parameter vectors $r$ such that $r_d < 0$, then $(q, s)$ is not learnable for $E'$, no matter how one varies other parameters.

In rough term, for any non-zero parameter fixed, we lose a halfsphere of $\mathcal{S}(\mathcal{P})(q)$, so if we fix $k$ parameters, there remains only $2^{-k}$ part of the parameter space that we can explore. In case of Turner model where $k$ is more than 7500, one can see how this affects the performance. Unfortunately, this case is of our greatest interest, since specifying a parameter to be zero effectively ignores the corresponding feature entirely and does not bring any new information.

As to \emph{how much} we can lose, assuming that $\mathcal{P}(q)$ is the realisation of a random variable, Amenta and Ziegler \cite{Amenta1996} showed that we are guaranteed to retain at least a certain portion of learnable structures regardless of $q$'s length. However, little is known about the precise bound.
\section{Learnability and Robustness: parametric analysis in high dimensions}
\label{chapter: introduction part 2}

In the last part, we have re-derived the polytope approach to the parametric analysis problem, and showed how it could be use to determine, given a pair of RNA sequence $q$ and a secondary structure $s$, whether the pair $(q, s)$ is learnable for a given energy model $E$, meaning if there exists a parameter set $p$ for which $(q, s)$ minimises the energy $\langle c(q, s), p \rangle$.

We have also reviewed the two current approaches to the computational aspect, namely computing the full polytopes and/or considering its projection to a lower dimension. We showed that each approach has its own limits, but to conclude, suppose we wish to study Turner energy model, even if we combine both approach and consider only 3 features, as shown by Poznanovic et al. \cite{Poznanovic2021} as well as by our experiments, the parametric analysis is still practically intractable.

Nonetheless, for some problems, it is not necessary to construct the whole polytope, especially when we concern only some properties of $\mathcal{P}(q)$ relevant to our pair $(q, s)$. Given a RNA sequence $q$, instead of finding what secondary structures (or, to be more precise, those of what signatures) an energy model $E$ can predict, we can ask the following question:
\begin{quote}
	Suppose we observe a structure $s$ for a sequence $q$ in experiments, does there exist a parameter set $p$ of which $E$ can predict $s$?
\end{quote}
Mathematically speaking, we ask if there exists a parameter vector $p$ such that $\langle c(q, s), p \rangle = h_{\mathcal{P}(q)}(p)$, or equivalently, if $c(q, s)$ lies on the boundary of $\mathcal{P}(q)$, i.e. if $(q, s)$ is learnable for $E$. We call this the learnability problem, and note that although there is a difference amongst predicting the structure $s$, the signature $c(q, s)$, and the energy $\langle c(q, s), p \rangle$. But, if we can find a parameter $p$ for which $s$ is an optimal solution, Wuchty et al. \cite{Wuchty1999} demonstrated an algorithm to recover \emph{all} secondary structures with energy close to $\langle c(q, s), p \rangle$. This plays a great role, considering that the majority of experimentally observed secondary structures have energy close to the minimum, and that in real world, a RNA may exhibit multiple suboptimal structures (cf. Chapter \ref{chapter: conclusion}).

To generalise, we have a fast access to $\mathcal{P}(q)$ via the dynamic programming algorithm, which takes a parameter set $p$ and returns $h_{\mathcal{P}(q)}(p)$, or with traceback, $\sigma_{\mathcal{P}(q)}(p)$, and we wish to determine if $c(q, s) \in \partial P$. Thus, in a more general setting, we ask the following question called Relative position problem.

\begin{problem} (Relative position problem)
	\label{relative-position-problem}
	Given a polytope $P \subseteq \mathbb{R}^{d+1}$ represented by its supporting function $h_P$, its extremal function $\sigma_P$, and a point $x \in \mathbb{R}^{d+1}$. Determine if $x \not\in P$, $x \in \mathring{P}$, or $x \in \partial P$.
\end{problem}
Note that by the relationship between Optimisation oracle and Separation oracle, we can effectively determine if $x \in P$ in weakly polynomial time (cf. Appendix \ref{appendix: linear programming}), and thus hereinafter we shall assume $x \in P$.

Back to RNA setting, suppose $c(q, s)$ is learnable for $E$ with parameter $p$, we can also ask how \emph{robust} $p$ is, meaning how drastic one must change $p$ so that $s$ is no longer an optimal structure. To briefly see why this is relevant, it is believed that the environment surrounding an RNA can alter how the nucleotides pair with each other, effectively modifying the parameter of energy model. In that sense, robustness measures how stable an RNA structure is in changes of the surrounding: a stable RNA is more resistant to mutations and undesirable functions or defects thereof, which is a desirable property and supported by evolution.

To state formally, we seek to study the following question.

\begin{problem} (Robustness problem)
	\label{robustness-problem}
	Given a sequence $q$ and a structure $s$ learnable for an energy model $E$ with a parameter set $p$. Determine the robustness of $p$.
\end{problem}

\subsection{Related works on Relative position problem}

It turns out that Problem \ref{relative-position-problem} is fundamental in computation geometry, by its relation with collision detection problem, stated as follow:
\begin{quote}
	Given two polytopes $A$ and $B$ in $\mathbb{R}^{d+1}$, determine if they are disjoint.
\end{quote}

Gilbert, Johnson, and Keerthi \cite{Gilbert1988} showed that this problem can be reduced to determining of $A \ominus B$ contains the origin, where $A \ominus B = \{a - b \mid (a, b) \in A \times B\}$ denotes the Minkowski difference. They also showed that a given $p \in \mathbb{R}^{d+1}$, one has $h_{A \ominus B} (p) = h_A (p) - h_B(-p)$, thus although the Minkowski difference, much like the Minkowski sum, can be complicated (in the sense of high complexity), its supporting function is easy to compute. Finally, they demonstrated an algorithm for the following problem
\begin{quote}
	Given a polytope $P$ represented by its supporting function, determine if $P$ contains the origin.
\end{quote}
The key idea is Carathéodory's theorem, which states that for a given polytope $P \subseteq \mathbb{R}^{d+1}$, any point $x \in P$ can be written as a linear combination of $d+2$ points of $P$ (which one can relax to $d+2$ vertices of $P$). Their method, so-called GJK algorithm, aimed to find such a simplex, whose vertices lie on the boundary of $P$, that contains $x$. If no such simplexes can be found, $x \not\in P$, otherwise it is easy to check if x lies in the interior or on the boundary of $P$.

For our interest when the dimension is high, their method, or rather the subroutine, called Johnson's distance subalgorithm, to compute the distance from a point $x$ to a $d+1$-simplex on which relies GJK algorithm, has two main weaknesses.
\begin{itemize}
	\item Firstly, it is not numerical stable enough, which was the reason why in the original paper \cite{Gilbert1988}, the authors introduced a backup procedure.
	\item Secondly, and more importantly, for polytopes in $\mathbb{R}^{d+1}$, it requires computing the distance from $x$ to affine subspaces generated by all $2^{d+2}-1$ combinations of vertices.
\end{itemize}
There have been attempts to mitigate one or both the aforementioned issues: in particular, Cameron \cite{Cameron1997} modified the order in which the combinations are checked, thus improving the performance in practice, but without any theoretical bound. More recently, Montanari et al. \cite{Montanari2017} decreased the number of combinations to be checked to $2^{d+1}$, but no methods which require only a polynomial (with respect to $d$) number of checks are known. For most problems where GJK algorithm is used, $d$ is often fixed and small, typical $d = 2$, and thus one can afford such a number of checks. Unfortunately, our problem has high dimension, thus GJK algorithm is not applicable.

Snethen \cite{Snethen2008} proposed another algorithm, called Minkowski Portal Refinement, hereinafter abbreviated as MPR algorithm, to the same problem, which relies on the same principle, but with some modifications which allow reducing the number of checks from $2^{d+1}$ to $d+1$.

To summarise the idea, one first finds and fixes a point $o$ in the interior, then find a \emph{portal} defined as a $d$-simplex $P'$ through $d$ points on the boundary, which oftentimes are some vertices of $P$. Such a portal is said to be satisfying if the line segment $[o, x]_{\mathbb{R}^{d+1}}$ intersects $P'$. If that be the case, let $x'$ be the orthogonal projection of $x$ onto $P'$, the point $y = \sigma_P(x - x')$ forms with $P'$ a $d+1$-simplex. Since $[o, x]$ passes through $P'$, if $x$ does not lie in the interior of $P'$, $[o, x]$ must pass through another facet of this $d+1$-simplex that contains $y$. This new facet plays the role of the new portal, and the algorithm continues to termination.

In the original paper \cite{Snethen2008}, Snethen did not specify a definite method to find an initial portal. He proposed fixing the point $o$ first, then finding a portal, but this does not guarantee to work on the first attempt, and may need multiple tries. In Chapter \ref{chapter: simplex method part 2}, we propose another approach, where we find a non-degenerate $d+1$-simplex $Q$ whose vertices lie on the boundary of $P$ first, then choose $o$ as a point lying in the interior of $Q$, e.g. the centroid of $Q$ as it is convex.

Strangely, MPR algorithm is little mentioned in the literature, to which Neumayr and Otter \cite{Neumayr2017} addressed by describing an improved version with handling of termination conditions and special cases in 3-dimensional space. To the extend of our knowledge, neither full descriptions in general dimension nor proofs of termination for MPR are known. Chapter \ref{chapter: simplex method part 2} shall present more detailed description of MPR algorithm in the context of our scheme, and aim at addressing the aforementioned issues.

Recently, Hornus \cite{Hornus2017} demonstrated another approach to the problem with a method called Decision Sphere Search, hereinafter abbreviated as DSS algorithm.

To summarise, suppose $x$ lies on the boundary of $P$, and let $S$ be its corresponding normal spherical polytope. In one iteration, one feeds a vector $p \in \mathbb{S}^d$ and obtains the great circle defined by $C = \mathbb{S}^d \cap \{y \mid \langle y, p \rangle = \langle x, p \rangle \}$ which divides $\mathbb{S}^d$ into two halfspheres, namely $C_{+} = \mathbb{S}^d \cap \{y \mid \langle y, p \rangle > \langle x, p \rangle \}$ and $C_{-} = \mathbb{S}^d \cap \{y \mid \langle y, p \rangle < \langle x, p \rangle \}$. If $h_P(p) = \langle x, p \rangle$, we conclude that $x \in \partial P$. Else, $S$ must lie in the halfsphere $C_{-}$, whose centre is given by $p' = \frac{x - \sigma_P(p)}{\| x - \sigma_P(p) \|}$. This new vector $p'$ is then fed into the next iteration.

The collection of halfspheres from the execution defines a spherical polytope $S'$ that bounds $S$, i.e $S \subseteq S'$. At some point, either we come across a vector $p \in S$ which proves $x \in \partial P$, or prove that $S$ is necessarily empty by showing that $S'$ is empty. This approach has its own limitations, amongst which is the lack of bound on complexity that is independent of the volume $\vol{S}$ of $S$, which we shall overcome in Chapter \ref{chapter: ellipsoid method part 2} with ideas from ellipsoid method in linear programming.
\section{Robustness problem: telescoping method}
\label{chapter: telescoping method part 2}

\subsection{Introduction}

In this chapter, we present a method whereby, similarly to modification of the original dynamic programming algorithm from tropical algebra to polytope algebra as presented by Pachter and Sturmfels \cite{Pachter2004, Pachter2005}, one can modify the algorithm further to retrieve only a certain part of the polytope $\mathcal{P}(q)$. The key idea is that for any two polytopes $A$ and $B$, a given quadrant of $A \oplus B$ or of $A \otimes B$ is determined only by the corresponding quadrants of $A$ and $B$, as demonstrated in Figure \ref{fig: restricted polytope algebra}. This is stated formally as follow.

\begin{figure}[h]
	\centering
	\subfloat[\centering $F_D(A)$]{{\includegraphics[width=0.2\textwidth]{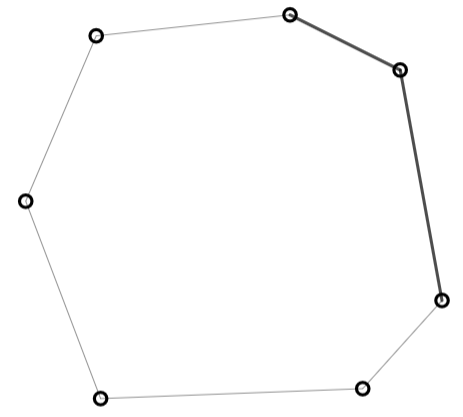} }}%
	\qquad
	\subfloat[\centering $F_D(B)$]{{\includegraphics[width=0.2\textwidth]{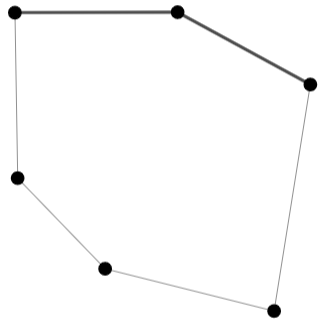} }}%
	\qquad
	\subfloat[\centering $F_D(A \oplus B)$]{{\includegraphics[width=0.2\textwidth]{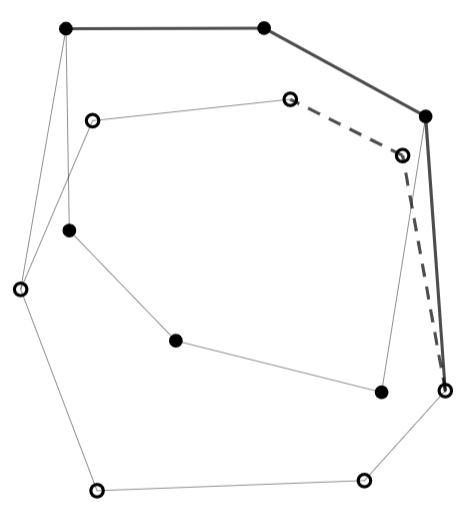} }}%
	\qquad
	\subfloat[\centering $F_D(A \otimes B)$]{{\includegraphics[width=0.2\textwidth]{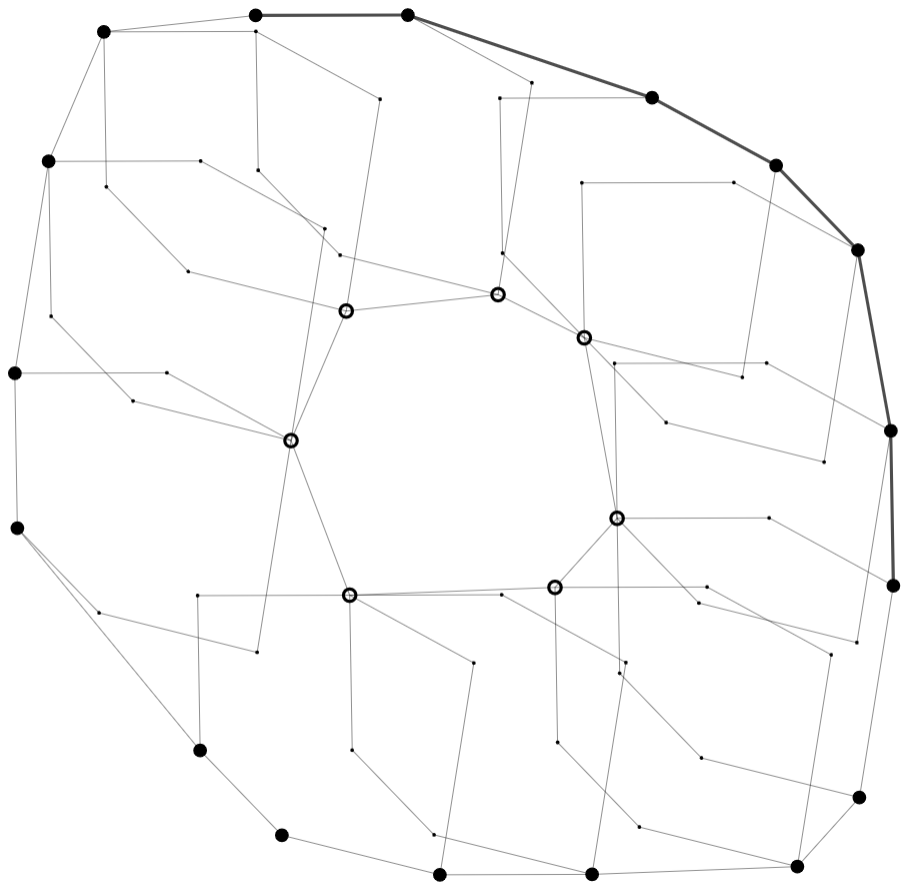} }}%
	\caption{Example of operations in polytope algebra restricted to D being the first quadrant.}
	\label{fig: restricted polytope algebra}
\end{figure}

\begin{theorem}
	\label{theorem on quadrant}
	Let $D \subseteq \mathbb{R}^{d+1}$, $A$ and $B$ be two polytopes in $\mathbb{R}^{d+1}$. For a given polytope $P \in \mathbb{R}^{d+1}$, we define 
	\[
	F_D(P) = \{x \mid \exists d \in D, \langle x, d\rangle = h_P(d)\}.
	\]
	Then, one has $F_D(A \oplus B) \subseteq F_D(A) \oplus F_D(B)$, and $F_D(A \otimes B) \subseteq F_D(A) \otimes F_D(B)$.
\end{theorem}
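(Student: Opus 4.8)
The plan is to prove both inclusions pointwise, relying on the two supporting-function identities already established in the excerpt: for any direction $d$ one has $h_{A\oplus B}(d) = \max(h_A(d), h_B(d))$ and $h_{A\otimes B}(d) = h_A(d) + h_B(d)$. Throughout I read $F_D(P)$ as the collection, over directions $d \in D$, of the faces of $P$ exposed in direction $d$, so that a point $x \in F_D(P)$ comes equipped with a witnessing direction $d \in D$ satisfying $\langle x, d\rangle = h_P(d)$. The whole argument then amounts to transporting such a witness from the combined polytope back to $A$ and $B$ separately.

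For the $\oplus$ inclusion, I would take $x \in F_D(A \oplus B)$ with witness $d$, so that $\langle x, d\rangle = h_{A\oplus B}(d) = \max(h_A(d), h_B(d))$. Whichever of $A$, $B$ attains this maximum, say $A$, satisfies $\langle x, d\rangle = h_A(d)$, which exhibits the face through $x$ as belonging to $F_D(A)$. Since $A \oplus B = \conv{A \cup B}$, the same operation applied to the restricted sets gives $F_D(A) \oplus F_D(B) = \conv{F_D(A) \cup F_D(B)} \supseteq F_D(A)$, and hence $x \in F_D(A) \oplus F_D(B)$. The only care needed is to check that the face of $A \oplus B$ exposed by $d$ is exactly the convex hull of the faces of $A$ and $B$ exposed by $d$, which follows from the definition of $\oplus$ together with the identity for $h_{A\oplus B}$.

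For the $\otimes$ inclusion, which carries the real content, I would take $x \in F_D(A \otimes B)$ with witness $d$ and use $A \otimes B = A + B$ to write $x = a + b$ with $a \in A$ and $b \in B$. The witness gives $\langle a, d\rangle + \langle b, d\rangle = \langle x, d\rangle = h_A(d) + h_B(d)$, while the definition of the supporting function forces $\langle a, d\rangle \le h_A(d)$ and $\langle b, d\rangle \le h_B(d)$. Two quantities each bounded above by a fixed number, whose sum equals the sum of those bounds, must individually meet their bounds; hence $\langle a, d\rangle = h_A(d)$ and $\langle b, d\rangle = h_B(d)$, so $a \in F_D(A)$ and $b \in F_D(B)$ with the same witness $d$. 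Therefore $x = a + b \in F_D(A) + F_D(B) = F_D(A) \otimes F_D(B)$.

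The main obstacle is not the computation but pinning down the meaning of $F_D(P)$, since the decomposition step for $\otimes$ uses that $x$ genuinely lies in $A \otimes B$ so that some pair $(a,b)$ exists, and the $\oplus$ step uses that the exposed face of a convex hull splits as the hull of exposed faces. Both are harmless once we agree that $F_D(P)$ consists of points of $P$ optimal for some $d \in D$ (which includes all values of the extremal function $\sigma_P$ on $D$), and I would make this explicit at the outset. It is also worth emphasising that only the inclusions, not equalities, can hold: $F_D(A) \oplus F_D(B)$ may contain convex combinations optimal for no direction in $D$, which is precisely why the telescoping scheme must recompute $F_D$ at each step rather than expect an exact match.
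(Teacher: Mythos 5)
Your proof of the $\otimes$ inclusion coincides with the paper's: write $x = a + b$, use $h_{A \otimes B}(d) = h_A(d) + h_B(d)$, and force each summand, being bounded above by its supporting value, to attain it. For the $\oplus$ inclusion you take a genuinely different route, and the comparison is instructive. The paper writes $x = \lambda a + \mu b$ with $a \in A$, $b \in B$, $\lambda + \mu = 1$, and runs an exchange argument: if $a$ were not optimal for $d$ in $A$, replacing it by a better $a'$ would produce a point of $A \oplus B$ beating $x$, a contradiction; hence $a \in F_D(A)$ and $b \in F_D(B)$ whenever $\lambda, \mu > 0$, so $x \in F_D(A) \oplus F_D(B)$. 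You instead argue directly on $x$: from $\langle x, d\rangle = \max(h_A(d), h_B(d)) = h_A(d)$ (say) you place $x$ itself in $F_D(A) \subseteq F_D(A) \oplus F_D(B)$. This shortcut is airtight under the theorem's literal definition of $F_D$ (which never requires $x \in P$), but under the reading you explicitly adopt --- points \emph{of} $P$ exposed by some $d \in D$ --- it silently assumes $x \in A$, which fails precisely in the tie case $h_A(d) = h_B(d)$, where $x$ can be a nontrivial mixture lying in neither $A$ nor $B$. You do flag this and defer it to the claim that the face of $A \oplus B$ exposed by $d$ is ``exactly'' the convex hull of the faces of $A$ and $B$ exposed by $d$. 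Two remarks there: the stated equality is false when $h_A(d) \neq h_B(d)$ (the exposed face of the non-maximizer is disjoint from the exposed face of the hull), though the inclusion you actually need does hold; and proving that inclusion in the tie case is precisely the paper's convex-combination exchange argument, so the step you describe as a harmless check carries the entire content of the paper's $\oplus$ proof. In short: your route is shorter and fully rigorous if one works with the definition of $F_D$ as literally written, while the paper's decomposition handles both cases uniformly and stays valid under the face-of-$P$ interpretation without any auxiliary lemma.
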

\begin{proof}
	Let $d \in D$, one has 
	\[
	\max_{x \in A \otimes B} \langle x, d \rangle = \max_{(a, b) \in A \times B} \langle a + b, d \rangle = \max_{(a, b) \in A \times B} \langle a, d \rangle + \langle b, d \rangle = \max_{a \in A} \langle a, d \rangle + \max_{b \in B} \langle b, d \rangle\\
	\]
	which implies that for $x = a + b \in A \otimes B$ for some $a \in A$ and $b \in B$, then if $x \in F_D(A \otimes B)$ then $a \in F_D(A)$ and $b \in F_D(B)$, implying $x \in F_D(A) \otimes F_D(B)$, as desired.
	
	On the other hand, by definition, for any $x \in A \oplus B$, there exist $\lambda, \mu \geq 0$ and $a \in A$, $b \in B$, such that $\lambda + \mu = 1$ and $x = \lambda a + \mu b$. Now suppose $\lambda > 0$, and $\langle x, d \rangle = h_{A \oplus B}(d)$, if there exists $a' \in A$ such that $\langle a', d \rangle > \langle a, d \rangle$, then for $x' = \lambda a' + \mu b \in A \oplus B$, one has
	\[
	\langle x, d \rangle = \lambda \langle a, d \rangle + \mu \langle b, d \rangle < \lambda \langle a, d \rangle + \mu \langle b, d \rangle = \langle x', d \rangle, 
	\]
	a contradiction, thus $a \in F_D(A)$. Similarly, if $\mu > 0$ then $b \in F_D(B)$, which together implies $x \in F_D(A) \oplus F_D(B)$.
\end{proof}

This results allows one to specify beforehand a quadrant $D$ of interest, and as they carry out the modified dynamic programming algorithm according to Pachter and Sturmfels \cite{Pachter2004, Pachter2005}, only retain relevant parts of the polytope, thus significantly reduce the complexity of the polytopes and speed up the computation.

\subsection{Toward robustness}

Now suppose we know a pair $(q, s)$ is learnable for a parameter vector $p$. To determine the robustness of $p$, we shall consider $D$ to be of some special forms: in particular, we restrict $D$ to $\mathbb{S}^d$, and consider $D$ to be a closed ball $B_{\mathbb{S}^d}(p, \theta) = \{x \in \mathbb{S}^d \mid d_{\mathbb{S}^d}(x, p) \leq \theta\}$ for some $\theta > 0$. Let $F$ be a polytope of $\mathcal{S}(\mathcal{P}(p))$ of minimum dimension that contains $p$, then the robustness of $p$ is the minimum $\theta$ such that $B_{\mathbb{S}^d}(p, \theta)$ intersects another face $F'$ of $\mathcal{S}(\mathcal{P}(p))$, which one can limit to only the adjacent faces of $F$. Or, equivalently, it is the infimum of $\theta$ such that $F$ contains $B_{\mathbb{S}^d}(p, \theta)$.

Note that we know a priori that $0 \leq \theta_P(p) \leq \pi$, thus we can compute $\theta_P(p)$ via binary search. But it is also possible to compute the robustness with only one execution of the dynamic programming algorithm, and no binary search is necessary. Similar to the above, we shall modify the dynamic programming algorithm once more, as follow.

One crucial observation is that $D$ needs not be fixed throughout the dynamic programming process, but in fact can be decreased, i.e. the new set $D'$ for the next iteration of the dynamic programming needs only be a subset of $D$, since by definition, if $D' \subseteq D$, then for any polytope $P$, $F_{D'}(P) \subseteq F_D(P)$. This gives us the idea to set initially $\theta = \pi$ and $D = B_{\mathbb{S}^d}(p, \theta) = \mathbb{S}^d$, and as we carry out the dynamic programming with $F_D$, we may decrease $\theta$ until $\theta$ equals $\theta_{\mathcal{P}(q)}(p)$. 

There remains one last question: can we decrease $\theta$? Or, in other words, given $F_{B_{\mathbb{S}^d}(p, \theta_A(p))}(A)$ and $F_{B_{\mathbb{S}^d}(p, \theta_B(p))}(B)$, can we construct $F_{B_{\mathbb{S}^d}(p, \theta_{A \oplus B}(p))}(A \oplus B)$ and $F_{B_{\mathbb{S}^d}(p, \theta_{A \otimes B}(p))}(A \otimes B)$? The answer is yes, and the following theorem provides rigorous reasoning.

\begin{theorem}
	\label{theorem on decreasing balls}
	Let $A, B \subseteq \mathbb{R}^{d+1}$ be two polytopes, and $p \in \mathbb{R}^{d+1}$. With notations defined as in Theorem \ref{theorem on quadrant}, one has 
	\[
		F_{B_{\mathbb{S}^d}(p, \theta_{A \oplus B}(p))}(A \oplus B) \subseteq F_{B_{\mathbb{S}^d}(p, \theta_A(p))}(A) \oplus F_{B_{\mathbb{S}^d}(p, \theta_B(p))}(B)
	\]
	and
	\[
		F_{B_{\mathbb{S}^d}(p, \theta_{A \otimes B}(p))}(A \otimes B) \subseteq F_{B_{\mathbb{S}^d}(p, \theta_A(p))}(A) \otimes F_{B_{\mathbb{S}^d}(p, \theta_B(p))}(B).
	\]
\end{theorem}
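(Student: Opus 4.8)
The plan is to reduce both inclusions to Theorem \ref{theorem on quadrant} together with two elementary monotonicity principles, so that the only substantive work becomes a comparison of robustness radii. First I would record that $F$ is monotone in its index set: if $D' \subseteq D$ then $F_{D'}(P) \subseteq F_D(P)$ for every polytope $P$, immediately from $F_D(P) = \{x \mid \exists d \in D,\ \langle x, d\rangle = h_P(d)\}$; and that the algebra operations are monotone in each argument, i.e. $X \subseteq X'$ and $Y \subseteq Y'$ give $X \oplus Y \subseteq X' \oplus Y'$ and $X \otimes Y \subseteq X' \otimes Y'$, which follow from $X \cup Y \subseteq X' \cup Y'$ and $X + Y \subseteq X' + Y'$. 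Applying Theorem \ref{theorem on quadrant} with the single set $D = B_{\mathbb{S}^d}(p, \theta_{A \otimes B}(p))$ yields
\[
F_{B_{\mathbb{S}^d}(p, \theta_{A \otimes B}(p))}(A \otimes B) \subseteq F_{B_{\mathbb{S}^d}(p, \theta_{A \otimes B}(p))}(A) \otimes F_{B_{\mathbb{S}^d}(p, \theta_{A \otimes B}(p))}(B),
\]
and symmetrically for $\oplus$. Hence it suffices to establish the radius comparisons $\theta_{A \otimes B}(p) \le \min(\theta_A(p), \theta_B(p))$ and $\theta_{A \oplus B}(p) \le \min(\theta_A(p), \theta_B(p))$; the two monotonicity principles then upgrade the common radius on the right to the individual radii $\theta_A(p)$ and $\theta_B(p)$.

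For the $\otimes$ case I would work through the persistence region $R_P(p) = \{d \in \mathbb{S}^d \mid F^P_p \subseteq F^P_d\}$, where $F^P_d$ is the optimal face of $P$ in direction $d$; by the remark following the definition of robustness this region is precisely the spherical polytope of $\mathcal{S}(P)$ containing $p$, and $\theta_P(p)$ is the distance from $p$ to its complement. Using $h_{A \otimes B} = h_A + h_B$ one checks that $x = a+b$ is optimal for $A \otimes B$ at $d$ iff $a$ is optimal for $A$ and $b$ is optimal for $B$ at $d$; this gives $R_{A \otimes B}(p) = R_A(p) \cap R_B(p)$, and since the distance to the complement of a smaller set is no larger, $\theta_{A \otimes B}(p) \le \min(\theta_A(p), \theta_B(p))$ follows immediately.

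The $\oplus$ case is where I expect the genuine difficulty to lie, because $\conv{A \cup B}$ does \emph{not} refine the normal fans of $A$ and $B$ — a vertex of one summand may be absorbed by the other — so the clean identity $R_{A \oplus B}(p) = R_A(p) \cap R_B(p)$ breaks down and the naive radius comparison is simply unavailable (indeed $\theta_{A\oplus B}(p) \le \theta_B(p)$ can fail, e.g. when $B \subseteq A$). The route I would instead take is to retain the decomposition from Theorem \ref{theorem on quadrant} directly: an optimal $x$ for $A \oplus B$ at $d$ writes as $x = \lambda a + \mu b$ with $a$ optimal for $A$ and $b$ optimal for $B$ at $d$ whenever the corresponding coefficient is positive, and I would split on which of $h_A(d)$, $h_B(d)$ realises $h_{A \oplus B}(d)$. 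When one polytope strictly dominates at $d$ the contribution collapses onto it and reduces to that polytope's robustness ball; the delicate regime is the tie $h_A(d) = h_B(d)$, which corresponds exactly to the new "bridge" faces of $\conv{A \cup B}$, where one must certify that the contributing vertex of the polytope that is \emph{not} dominant at $p$ still lies in its $F_{B_{\mathbb{S}^d}(p, \theta)}$. Controlling these bridge normals — in particular preventing a bridge direction from sitting strictly beyond the smaller individual radius while still lying inside $B_{\mathbb{S}^d}(p, \theta_{A \oplus B}(p))$ — is the crux, and I anticipate needing either a genericity hypothesis on $p$ or a careful analysis of the boundary directions of the ball to push the inclusion through.
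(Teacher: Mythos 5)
Your $\otimes$ half is correct and is, in substance, the paper's own argument: the paper reduces to the same radius comparison plus the same two monotonicity principles, obtaining the stronger identity $\theta_{A\otimes B}(p)=\min(\theta_A(p),\theta_B(p))$ by citing Ziegler's Proposition 7.12 ($\mathcal{N}(A\otimes B)=\mathcal{N}(A)\wedge\mathcal{N}(B)$), where your persistence-region identity $R_{A\otimes B}(p)=R_A(p)\cap R_B(p)$ proves the needed inequality directly. Your observation that the analogous comparison $\theta_{A\oplus B}(p)\le\min(\theta_A(p),\theta_B(p))$ can fail (e.g.\ when $B\subseteq A$) is also correct, and it is indeed why $\oplus$ must be handled differently.

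The genuine gap is the $\oplus$ case itself: you identify the crux (ties, bridge faces) but do not resolve it, and you conjecture that a genericity hypothesis on $p$ or extra boundary analysis is needed. The paper needs neither; the idea you are missing is to use, \emph{in both directions}, the fact that the witness direction $q$ of the point $x$ lies in the ball $B_{\mathbb{S}^d}(p,\theta_{A\oplus B}(p))$, rather than trying to control geometrically where bridge normals can sit. Concretely: take $x\in F_{B_{\mathbb{S}^d}(p,\theta_{A\oplus B}(p))}(A\oplus B)$ with witness $q$, and (after the decomposition of Theorem \ref{theorem on quadrant}) assume the contributing point lies in $A$, so $h_A(q)\ge h_B(q)$. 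Step one: dominance transfers from $q$ back to $p$. If $h_A(p)<h_B(p)$, put $y=\sigma_A(q)$; then $\langle y,q\rangle=h_A(q)=h_{A\oplus B}(q)$, so $y$ is optimal for $A\oplus B$ at $q$, and since $q$ lies in the robustness ball of $A\oplus B$ at $p$, $y$ must also be optimal for $A\oplus B$ at $p$, giving $h_A(p)\ge\langle y,p\rangle=h_{A\oplus B}(p)=h_B(p)$, a contradiction; hence $h_A(p)\ge h_B(p)$, i.e.\ $h_A(p)=h_{A\oplus B}(p)$. Step two: every $z\in A$ with $\langle z,p\rangle=h_A(p)$ is then optimal for $A\oplus B$ at $p$, hence (forward direction of robustness, again because $q$ is in the ball) optimal for $A\oplus B$ at $q$, hence optimal for $A$ at $q$. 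Thus the entire $p$-optimal face of $A$ remains optimal at $q$; the paper reads this as $q\in B_{\mathbb{S}^d}(p,\theta_A(p))$, so the very same $q$ certifies $x\in F_{B_{\mathbb{S}^d}(p,\theta_A(p))}(A)$, which is contained in the right-hand side. Note that the tie $h_A(q)=h_B(q)$ --- your bridge faces --- needs no separate treatment, since the argument only ever uses the non-strict inequality.

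One remark in your favour: your unease about boundary directions is not unfounded. The paper's two pivotal steps lean on identifications that are delicate precisely on the closed boundary of the balls --- reading the robustness definition backwards (optimal at $q$ implies optimal at $p$, the ``equivalently'' clause), and identifying ``the $p$-optimal face of $A$ stays optimal at $q$'' with ``$q\in B_{\mathbb{S}^d}(p,\theta_A(p))$'', the latter set being a priori only contained in the former region. The paper settles these by convention in its definition of robustness and the remark following it, not by perturbation; so what your proposal lacks is not a genericity argument but this two-way exploitation of ball membership.
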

\begin{proof}
	We recall Proposition 7.12 in Ziegler's \emph{Lectures on Polytopes} \cite{Ziegler1995} stating that $\mathcal{N}(A \otimes B) = \mathcal{N}(A) \wedge \mathcal{N}(B)$, where $\cdot \wedge \cdot$ denotes the common refinement. Thus, it follows immediately by definition that $\theta_{A \otimes B} (p) = \min (\theta_A(p), \theta_B(p))$, whence together with Theorem \ref{theorem on quadrant}, one deduces
	\[
	\begin{split}
		F_{B_{\mathbb{S}^d}(p, \theta_{A \otimes B}(p))}(A \otimes B)
		& \subseteq F_{B_{\mathbb{S}^d}(p, \theta_{A \otimes B}(p))}(A) \otimes F_{B_{\mathbb{S}^d}(p, \theta_{A \otimes B}(p))}(B) \\
		& \subseteq F_{B_{\mathbb{S}^d}(p, \theta_A(p))}(A) \otimes F_{B_{\mathbb{S}^d}(p, \theta_B(p))}(B).
	\end{split}
	\]
	
	Now let $x \in F_{B_{\mathbb{S}^d}(p, \theta_{A \oplus B}(p))}(A \oplus B)$, and let $q \in B_{\mathbb{S}^d}(p, \theta_{A \oplus B}(p))$ such that $\langle x, q \rangle = h_{A \oplus B}(q)$. We notice that for any $r \in \mathbb{R}^{d+1}$, one has $h_{A \oplus B}(r) = \max(h_A(r), h_B(r))$. Suppose without loss of generality that $x \in A$, then it follows that $h_A(q) \geq h_B(q)$, otherwise one would have $\max(h_A(q), h_B(q)) = \langle x, q \rangle \leq h_A(q) < h_B(q)$,
	a contradiction.
	
	Suppose by contradiction that $h_A(p) < h_B(p)$. Let $y = \sigma_A(q)$ then $\langle y, q \rangle = h_{A \oplus B}(q)$. Since $q \in B_{\mathbb{S}^d}(p, \theta_{A \oplus B}(p))$, one has $h_A(p) \geq \langle y, p \rangle = h_{A \oplus B}(p) = h_B(p) > h_A(p)$, a contradiction. Hence, $h_A(p) \geq h_B(p)$.
	
	Then for any $z \in A$, if $\langle z, p \rangle = h_A(p)$, one has $\langle z, p \rangle = h_{A \oplus B}(p)$, and by assumption $q \in B_{\mathbb{S}^d}(p, \theta_{A \oplus B}(p))$, it follows that $\langle z, q \rangle = h_{A \oplus B}(q) = h_A(q)$. Or, equivalently, $q \in B_{\mathbb{S}^d}(p, \theta_A(p))$, which implies $x \in F_{B_{\mathbb{S}^d}(p, \theta_A(p))}(A)$. Then, it follows that
	\[
	F_{B_{\mathbb{S}^d}(p, \theta_{A \oplus B}(p))}(A \oplus B) \subseteq F_{B_{\mathbb{S}^d}(p, \theta_A(p))}(A) \oplus F_{B_{\mathbb{S}^d}(p, \theta_B(p))}(B).
	\]
\end{proof}

\subsection{Computing robustness given a polytope}

For a given polytope $P$ in the course of executing the dynamic programming, Theorem \ref{theorem on decreasing balls} allows one to consider only $F_{B_{\mathbb{S}^d}(p, \theta_P(p))}(P)$, but does not explicitly say how to compute it. Whilst it is possible via constructing the normal fan, we remind that translating from $\mathcal{V}$- to $\mathcal{H}$-representation introduces a bottleneck and is responsible for the complexity of the naive approach. For this reason, we introduce the following theorem which characterises $D = B_{\mathbb{S}^d}(p, \theta)$ when $\theta = \theta_{\mathcal{P}(q)}(p)$. 

\begin{theorem}
	\label{theorem on distance to the boundary}
	Let $P \subseteq \mathbb{S}^d$ be a polytope, $p \in \mathring{P}$, and $p^* \in \partial P$ such that $d_{\mathbb{S}^d}(p, p^*) = d_{\mathbb{S}^d}(p, \partial P) = \min_{p' \in \partial P} d_{\mathbb{S}^d}(p, p')$. Then $p^*$ lies in the interior of a facet of $P$, i.e. no faces of $P$ with lower dimension contain $p^*$.
\end{theorem}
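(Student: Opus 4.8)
The plan is to reformulate the statement in terms of the largest inscribed geodesic ball and then reduce it, by a local argument, to a clean fact about pointed cones. Write $r = d_{\mathbb{S}^d}(p, \partial P)$; since $P$, being the normal spherical polytope of a vertex, lies in an open hemisphere, we have $0 < r < \pi/2$ and geodesic balls of radius $r$ are geodesically convex. The closed ball $B_{\mathbb{S}^d}(p, r)$ is then contained in $P$ and meets $\partial P$ exactly at the minimizing points; at $p^*$ it is tangent, which yields a supporting great-hypersphere $\mathcal{H} = \mathbb{S}^d \cap \{y : \langle y, \nu \rangle = 0\}$ for a unit $\nu \in \mathbb{R}^{d+1}$, with $P$ lying in $\{\langle \cdot, \nu\rangle \le 0\}$ and $p^*$ the unique foot of the perpendicular geodesic from $p$. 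Because any supporting great-hypersphere through a relative-interior point of the minimal face $G \ni p^*$ must contain all of $G$, the conormal $\nu$ lies in the normal cone of $G$.

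First I would argue by contradiction, assuming $\dim G = k \le d - 2$, and exploit that minimality of $p^*$ is a local condition. Passing to the tangent space $T_{p^*}\mathbb{S}^d \cong \mathbb{R}^d$, the tangent cone $T$ of $P$ at $p^*$ is convex with lineality space $L = T_{p^*}G$ of dimension $k$, so it splits as $T = L \oplus C'$ with $C' \subseteq L^{\perp}$ a pointed cone of dimension $d - k \ge 2$. Let $\tau \in \mathbb{R}^d$ be the initial tangent at $p^*$ of the geodesic from $p^*$ to $p$; since $p \in \mathring{P}$ one has $\tau \in \mathring{T} = L \oplus \mathring{C'}$. Transporting boundary-tangent directions to genuine nearby boundary points via geodesic convexity, minimality of $p^*$ forces the origin to be the closest point of $\partial T$ to $\tau$. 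Decomposing $\tau = \tau_L + \tau_{C'}$ and noting $\partial T = L \oplus \partial C'$, the closest point of $\partial T$ to $\tau$ equals $\tau_L + \mathrm{proj}_{\partial C'}(\tau_{C'})$; its being the origin forces the apex of $C'$ to be the closest boundary point of $C'$ to the interior point $\tau_{C'}$.

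The crux is then a self-contained Cone Lemma: in a pointed cone $C'$ of dimension $\ge 2$, the apex is never the closest boundary point to an interior point $u \in \mathring{C'}$. I would prove this by an identity that also sidesteps the recurring nuisance that the perpendicular foot onto a facet's hyperplane may fall outside the facet. Every facet of $C'$ contains the apex, so the apex is closest on $\partial C'$ iff it is the closest point of $u$ on each facet, i.e. iff $\langle u, x\rangle \le 0$ for all $x \in \partial C'$, and in particular for every extreme ray $\rho_j$ of $C'$. Writing $u = \sum_j c_j \rho_j$ with $c_j \ge 0$ then gives
\[
\|u\|^2 = \langle u, u\rangle = \sum_j c_j \langle u, \rho_j\rangle \le 0,
\]
so $u = 0$, contradicting $u \in \mathring{C'} \subseteq C' \setminus \{0\}$. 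Applying this with $u = \tau_{C'}$ contradicts the previous paragraph, ruling out $\dim G \le d - 2$ and leaving $G$ a facet with $p^*$ in its relative interior.

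I expect the main obstacle to be the rigor of the local reduction rather than the cone estimate: one must justify on the sphere that a direction decreasing the Euclidean distance to $\tau$ inside the tangent cone really produces honest boundary points of $P$ that are geodesically closer to $p$, which is where geodesic convexity of $B_{\mathbb{S}^d}(p,r)$ and control of the exponential map enter. A secondary subtlety worth stating explicitly is why one cannot simply project $p$ onto a nearby facet and conclude — the foot may leave the facet — which is exactly the gap the extreme-ray identity closes.
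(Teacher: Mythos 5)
Your proof is correct, and it takes a genuinely different route from the paper's. The paper argues globally and stays on the sphere: assuming $p^*$ lies on two distinct facets, it picks a facet $F$ through $p^*$ to which the geodesic $[p,p^*]$ is not perpendicular, projects $p$ onto $F$, and produces a strictly closer boundary point; the awkward case where the foot of the projection falls outside $F$ is absorbed by an auxiliary lemma (a geodesic from an interior point to a point outside $P$ must cross $\partial P$). You instead localize at $p^*$, split the tangent cone as $T = L \oplus C'$, and kill the case $\dim G \le d-2$ with the extreme-ray identity $\|u\|^2 = \sum_j c_j \langle u,\rho_j\rangle \le 0$ --- precisely the device that sidesteps the foot-outside-facet nuisance the paper has to handle, as you yourself observe. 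Two remarks. First, the local reduction you flag as the main obstacle is genuinely routine here and needs none of the inscribed-ball or geodesic-convexity apparatus of your first paragraph: a spherical polytope coincides near $p^*$ with the exponential image of its tangent cone (the active great-hypersphere constraints become linear constraints on initial velocities), so for any $x \in \partial T$ the curve $\gamma(t) = \cos(t)\,p^* + \sin(t)\,\hat{x}$, with $\hat{x} = x/\|x\|$, lies in $\partial P$ for small $t>0$, and writing $r = d_{\mathbb{S}^d}(p,p^*)$ and $\hat{\tau} = \tau/\|\tau\|$ one computes
\[
\cos d_{\mathbb{S}^d}(p,\gamma(t)) \;=\; \cos(t)\cos(r) + \sin(t)\sin(r)\,\langle \hat{\tau},\hat{x}\rangle ,
\]
whose derivative at $t=0$ equals $\sin(r)\langle\hat{\tau},\hat{x}\rangle$; minimality of $p^*$ therefore forces $\langle \tau, x\rangle \le 0$ on all of $\partial T$, which by your facet-wise equivalence is exactly the statement that the origin is the closest point of $\partial T$ to $\tau$. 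Second, on what each approach buys: the paper's proof is shorter and never leaves spherical language, but it asserts without justification that $[p,p^*]$ cannot be perpendicular to two distinct facets through $p^*$, and its concluding chain of inequalities contains a slip (the final strict inequality should be non-strict, the contradiction coming from the earlier strict one); your route costs more setup but isolates the correct first-order optimality condition, avoids the projection case split entirely, and your Cone Lemma is dimension-free and transfers verbatim to the Euclidean analogue of the statement.
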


In simple terms, if the robustness of $p$ is positive, one knows that $p$ lies in the interior of some normal spherical polytope $S \in \mathcal{S}(P)$, which corresponds to one unique vertex $x \in P$ such that $\langle x, p \rangle = h_P(p)$, i.e. $x = \sigma_P(p)$.. Theorem \ref{theorem on distance to the boundary} tells us that the robustness $\theta_P(p)$ is obtained by an edge between $x$ and a neighbour vertex $y$ of $P$, or more precisely, if we denote $H = \{z \in P | \langle z, p \rangle = 0\}$ to be the hyperplane admitting $p$ as its normal vector, then
\[
\theta_P(p) = \min_{y \in \vertex{P}\setminus\{x\}} \angle(y - x, H)= \min_{y \in \vertex{P}\setminus\{x\}} \sin^{-1} \frac{(h_P(p) - \langle y, p \rangle)\| p \|}{\| x - y\|}.
\]
Thus, one only needs to find \emph{all} points $y \in \vertex{P}\setminus\{x\}$ that minimise $\angle(y - x, H)$, and compute $\theta_P(p)$ without explicitly constructing $\mathcal{N}(P)$. This allows us to restrict ourselves to $\mathcal{V}$-representation.
\section{Relative position problem: a practical solution}
\label{chapter: simplex method part 2}

In this chapter, we describe a mathematical framework which incorporates MPR algorithm in general dimension, and opens the exploration of other alternatives. In particular, we explore some of such variants, and demonstrate that they can perform comparably to MPR algorithm. As we shall see, in the context of relative position problem, we shall find a notion similar to pivot rules for the simplex method and for which we shall coin the same name. Further exploration and experiments on some pivot rules show a phenomenon similar to simplex method in linear programming: performance depends on the choice of rule, and we shall construct an example where all rules lead to linear complexity with respect to the number of vertices of the polytope.

In this Section, we consider the case where the polytope $P$ is non-degenerate. Otherwise, $P$ has empty interior, so $P = \partial P$, and if $x \in P$, we return $x \in \partial P$.

\subsection{Method description}

\subsubsection{Geometric idea}

The key idea of our method is as follow: let $P \in \mathbb{R}^{d+1}$ be a polytope given by its supporting function $h_P$ and extremal function $\sigma_P$, and a point $x \in P$. Suppose we have $d+2$ vertices of $P$, denoted by $x_0, x_1, ..., x_{d+1}$, that forms a non-degenerate simplex $Q$, then
\begin{itemize}
	\item either $x$ lies in the interior of $Q$, at which point we conclude as $x \in \mathring{Q} \subseteq \mathring{P}$; 
	\item or $x$ lies on the boundary of $Q$, at which point we conclude that $x \in \partial P$ if $x$ coincides with one of $d+1$ vertices, and $x \in \mathring{P}$ otherwise; 
	\item or $x$ lies outside $Q$, at which point there exist $d+1$ vertices whereby the defined (affine) hyperplane $H$ separates $x$ and the other vertex. 
\end{itemize}
If the first two cases do not apply, then we need to check only all $d+1$ subsets of $d+1$ from $d+2$ vertices. Suppose $x$ lies on the boundary of $Q$ or outside $Q$, and by possibly a re-indexing, suppose that $x_1, x_2, ..., x_{d+1}$ defines an affine hyperplane $H$ that separate $x$ and $x_0$. Let $h$ be a normal vector of $H$, and suppose $\langle x, h\rangle > 0$, which implies $\langle x_0, h \rangle < 0$, then we replace $x_0$ with $\sigma_P(h)$.

\begin{figure}[h]
\centering
\begin{minipage}{.45\textwidth}
	\centering
	\includegraphics[width=\linewidth]{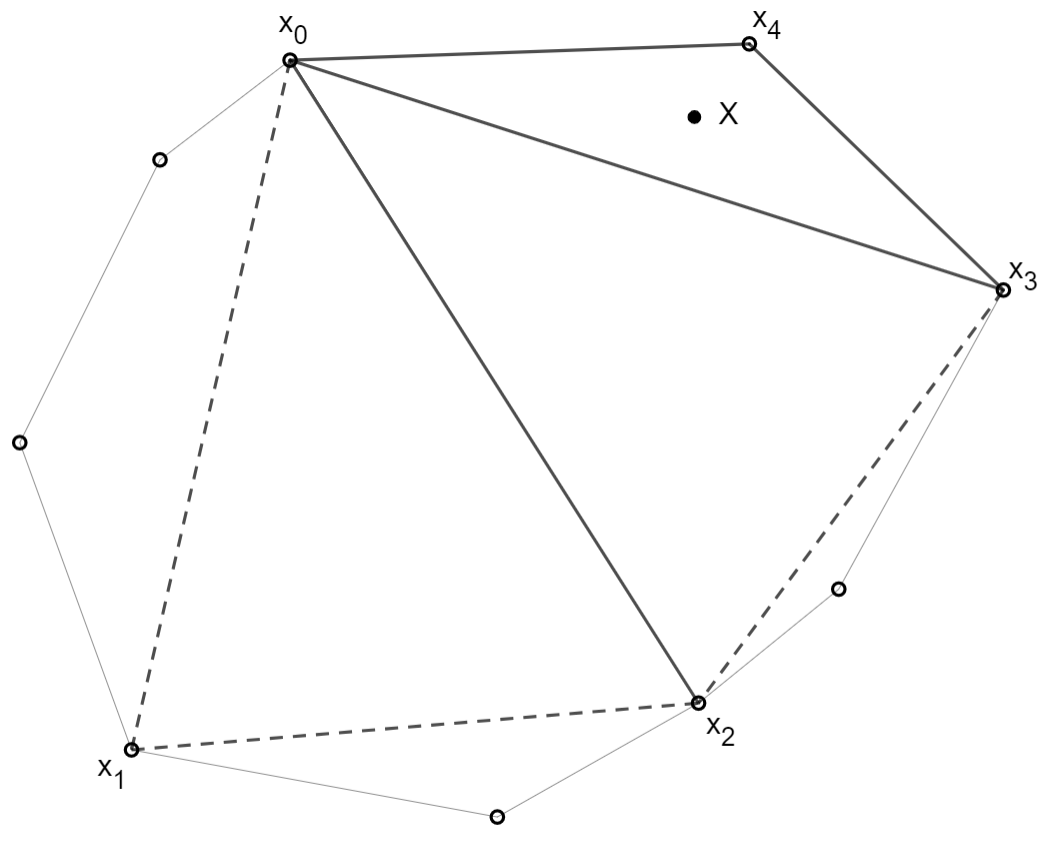}
	\captionof{figure}{Example in 2D.}
	\label{fig:test1}
\end{minipage}%
\begin{minipage}{.45\textwidth}
	\centering
	\includegraphics[width=\linewidth]{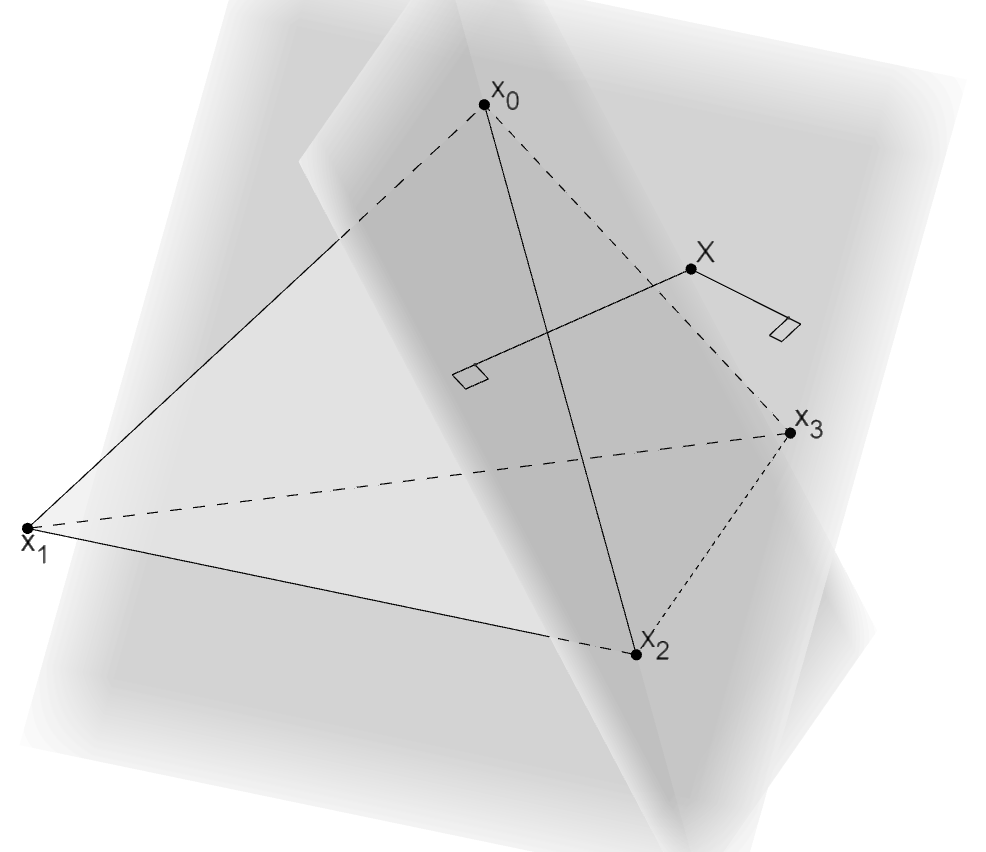}
	\captionsetup{justification=centering}
	\captionof{figure}{A case where multiple hyperplanes are possible.}
	\label{fig:test2}
\end{minipage}
\end{figure}

In fact, one can perform all $d+1$ checks simultaneously as follow: since $Q$ is non-degenerate, $d+1$ vectors of the form $v_i - v_0$ for all $i = 1, 2, ..., d+1$ forms a basis of $\mathbb{R}^{d+1}$, and one can find a unique set of coefficients $\lambda_1, \lambda_2, ..., \lambda_{d+1}$ such that $x = \sum_{i = 1}^{d+1}
\lambda_i (x_i - x_0)$. Then
\begin{itemize}
	\item if $\lambda_i < 0$ for some $1 \leq i \leq d+1$, then $x_1, x_2, ..., x_{i-1}, x{i+1}, ..., x_{d+1}$ define an affine hyperplane $H_i$ separating $x$ and $x_i$.
	\item if $\sum_{i = 1}^{d+1} \lambda_i > 1$, then $x_2, x_3, ..., x_{d+1}$ define an affine hyperplane $H_0$ separating $x$ and $x_0$.
	\item otherwise, one will have $\lambda i \geq 0$ for all $i$ and $\sum_{i = 1}^{d+1} \lambda_i \leq 1$, which proves that $x \in Q$.
\end{itemize}

The procedure is repeated until termination, at which point we may conclude definitively: if $x \not\in Q$ and no further progress can be made, then it is the case that $H$ separates $x$ and $P$, thus $x \not\in P$.

We remark the difference between this scheme and GJK algorithm: in particular, we shall only replace \emph{one} vertex, and thus ensure that once we construct a non-degenerate simplex, the simplices in subsequent iteration will always be non-degenerate. In GJK algorithm, we keep only the minimum set of vertices whose affine hull contain the points in the simplex that is closest to $x$, which may result in removal of one or more points.

This raises two questions: \emph{Does this process always terminate?} and \emph{How long will it run?} It turns out that both questions rely on the choice of pivot rule, i.e. a rule to decide which vertex to be replace if there exist multiple such separating hyperplanes $H$. For now, we shall split the algorithm into two phases: Phase 1 to find an initial simplex, and Phase 2 to refine the simplex, which concerns the pivot rules.

\subsubsection{Phase 1: Finding an initial simplex}
This phase consists of $d+2$ iterations, and can be described as follow.
\begin{enumerate}
	\item Consider a random direction $u_0$, and let $v_0 = \sigma_P(u_0)$.
	\item At $i$\textsuperscript{th}, consider the affine hull $\mathcal{H}$ of points $v_0, ..., v_{i-1}$. We chose an arbitrary normal vector $u_i$ of $\mathcal{H}$ and compute $h_P(u_i)$ Note that by construction, we have $\langle y, u_i \rangle \leq h_P(u_i)$ for all $y \in \mathcal{H}$. We have two cases:
	\begin{itemize}
		\item If $\langle v_0, u_i \rangle = h_P(u_i)$, then since $P$ is non-degenerate, one must have that $\langle v_0, -u_i \rangle < h_P(-u_i)$. Thus, let $v_i = \sigma_P(-u_i)$, for all $j = 0, 1, .., i-1$, one has $\langle v_j, -u_i \rangle < \langle v_i, -u_i \rangle$ . 
		\item If $\langle v_0, u_i \rangle < h_P(u_i)$, then let $v_i = \sigma_P(u_i)$, one has $\langle v_j, u_i \rangle < \langle v_i, u_i \rangle$ for all $j = 0, 1, .., i-1$.
	\end{itemize}
\end{enumerate}
In any cases, the affine hull $\mathcal{H}$ in the next iteration will increase in dimension, thus after $d+2$ iterations, we obtain a non-degenerate simplex $Q$.

In fact, we can incorporate the information regarding $x$ into the scheme, which leads to potential early termination, demonstrated as follow:
\begin{enumerate}
	\item Consider a random direction $u_0$, and let $v_0 = \sigma_P(u_0)$.
	\item At $i$\textsuperscript{th}, consider the affine hull $\mathcal{H}$ of points $v_0, ..., v_{i-1}$. We have two case:
	\begin{enumerate}
		\item 
		$x \not\in \mathcal{H}$: Let $x_\mathcal{H}$ be the projection of $x$ onto $\mathcal{H}$, $u_i = x - x_\mathcal{H}$, and compute $h_P(u_i)$. Note that by construction, we have $\langle y, u_i \rangle < \langle x, u_i \rangle$ for all $y \in \mathcal{H}$. We have two subcases.
		\begin{itemize}
			\item If $\langle x, u_i \rangle > h_P(c)$, then $x \not\in P$ for the affine hyperplane passing through $v_i$ and admitting $c$ as a normal vector separates $P$ and $i$.
			\item If $\langle x, u_i \rangle \leq h_P(c)$, then let $v_i = \sigma_P(u_i)$, one has $\langle v_j, u_i \rangle < \langle v_i, u_i \rangle$ for all $j = 0, 1, .., i-1$. The affine hull $\mathcal{H}$ in the next iteration will increase in dimension. 
		\end{itemize}
		\item
		$x \in \mathcal{H}$: We chose an arbitrary normal vector $u_i$ of $\mathcal{H}$ and compute $h_P(u_i)$ Note that by construction, we have $\langle y, u_i \rangle \leq h_P(u_i)$ for all $y \in \mathcal{H}$. We have two subcases:
		\begin{itemize}
			\item If $\langle x, u_i \rangle = h_P(u_i)$, then if $x \in \conv{\{v_0, v_1, ..., v_{i-1}\}}$, one has that $x \in \partial P$. Otherwise, we proceed as above by letting $v_i = \sigma_P(-u_i)$.
			\item If $\langle x, u_i \rangle < h_P(u_i)$, then we proceed as above by let $v_i = \sigma_P(u_i)$.
		\end{itemize}
		The affine hull $\mathcal{H}$ in the next iteration will increase in dimension.
	\end{enumerate}
\end{enumerate}
Since after each iteration, either the algorithm terminates or the dimension of $\mathcal{H}$ increases, thus by the end, we will obtain a non-degenerate simplex $Q$. This variant of Phase 1 requires more checking, which involves more linear algebra operations, and thus we shall not consider in our implementation.

\subsubsection{Phase 2: Refining the simplex}

Now we have our initial simplex $Q$, let $I$ be the set of indices $i$ such that $x_0, x_1, ..., x_{i-1}, x_{i+1},..., x_{d+1}$ defines an affine hyperplane $H_i$ that separate $x$ and $x_i$. We assume that $I$ is non-empty, otherwise the algorithm terminates.

If we have $|I| = 1$, i.e. there exists uniquely such a vertex., then the construction of simplex for the next iteration is unambiguous. Unfortunately, it is sometimes, if not often, the case that $|I| > 1$, i.e. there exist multiple such vertices, and we have to pick one of them according to some pivot rules. Here, we outline three possible variants, amongst many:
\begin{enumerate}
	\item (Randomized Rule, abbreviated as RR) Pick a random index from $I$.
	\item (Farthest Hyperplane Rule, abbreviated as FHR) Pick index $i$ that \emph{maximises} $\|x - x_i\|$ where $x_i$ is the projection from $x$ to $H_i$.
	\item (Closest Hyperplane Rule, abbreviated as CHR) Pick index $i$ that \emph{minimises} $\|x - x_i\|$ where $x_i$ is the projection from $x$ to $H_i$.
\end{enumerate}

Notably, in this framework, there is only particular pivot rule (or, to be more precise, a family whereof) that corresponds to MPR algorithm, describe as follow:
\begin{enumerate}
	\item Initially, after Phase 1, we choose a point $o$ that we know to be in the interior of $Q$ and thus of $P$ (a good candidate is the centroid of $Q$). If the algorithm has yet to terminate, $x$ will lie outside $Q$, hence the segment $[o, x]$ intersects one facet of $Q$. It is thus necessary that the affine hyperplane containing this facet separate $x$ from $o$, and therefore from the remaining vertex of $Q$. This facet is called a portal, and in this case we will denote by $R_0$. If many such facets exist, we choose one arbitrarily.
	\item Then, at $j$\textsuperscript{th} iteration in Phase 2, given $v_0, v_1, ..., v_{d+1}$ whose convex hull $R_{j-1} = \conv{\{v_0, v_1, ..., v_{d+1}\}}$ defines the portal, let $x'$ be the projection of $x$ onto the affine hyperplane $H$ containing the portal, $u = x - x'$, and $v_{d+2} = \sigma(u)$. If the algorithm does not terminate, then $x$ lies outside $Q = \conv{\{v_0, v_1, ..., v_{d+2}\}}$, and since $[o, x]$ intersects $R_{j-1}$, there must exist at least another facet of $Q$ given by $x_{d+2}$ and some $d$ points amongst $v_0, v_1, ..., v_{d+1}$. If many such facets exist, we choose one arbitrarily, this defines our new portal $R_j$.
\end{enumerate}

\begin{theorem}
	MPR algorithm always terminates.
\end{theorem}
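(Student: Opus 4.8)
The plan is to attach to each portal a single real-valued quantity that strictly increases at every refinement but can take only finitely many values, and then conclude termination by a pigeonhole argument. Throughout I use only that $P$ is a polytope (finitely many vertices), its convexity, and the oracles $h_P$ and $\sigma_P$ as defined earlier.

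First I would fix the interior point $o$ chosen after Phase 1 and parametrise the segment through it and $x$ by $r(t) = o + t(x-o)$, so $r(0)=o$ and $r(1)=x$. The running invariant I maintain is: the current portal $R_{j-1}$ is a non-degenerate $d$-simplex whose $d+1$ vertices are vertices of $P$ and lie on a hyperplane $H_j = \{z : \langle z, u_j\rangle = c_j\}$, and the segment $[o,x]$ meets the relative interior of $R_{j-1}$ at a parameter $t_j \in (0,1)$, where $u_j$ is the outward normal pointing to the $x$-side. Since $t_j < 1$, the point $x = r(1)$ lies strictly on the far side of $H_j$, i.e.\ $\langle x, u_j\rangle > c_j$, and since $t_j > 0$, the point $o$ lies strictly on the near side.

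The core step is to show the refinement strictly advances $t_j$. Because $x \in P$ and $y = \sigma_P(u_j)$ maximises $\langle\,\cdot\,, u_j\rangle$ over $P$, I argue $\langle y, u_j\rangle = h_P(u_j) \ge \langle x, u_j\rangle > c_j$, so $y$ lies strictly beyond $H_j$ and $\Delta_j := \conv{R_{j-1} \cup \{y\}}$ is a non-degenerate $(d+1)$-simplex having $R_{j-1}$ as a facet. As $o$ is strictly on the near side, $o \notin \Delta_j$, so the ray enters $\Delta_j$ exactly at $t_j$ through $R_{j-1}$. If $x \in \Delta_j \subseteq P$ the algorithm terminates, reading off interior, boundary, or vertex position. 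Otherwise the ray must leave $\Delta_j$ at some $t_{j+1} \in (t_j, 1)$ through a second facet $R_j$; since every facet of $\Delta_j$ other than $R_{j-1}$ contains $y$, the new portal $R_j$ is obtained by swapping a single old vertex for $y$, is again a non-degenerate $d$-simplex lying on a hyperplane $H_{j+1}$, and satisfies $t_{j+1} > t_j$. This re-establishes the invariant and yields strict monotonicity of $(t_j)$.

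Finally I would close with finiteness: because $P$ is a polytope, $\sigma_P$ may be taken to return vertices, so each portal is spanned by $d+1$ of the finitely many vertices of $P$; hence only finitely many portal hyperplanes are possible, and each one determines its $t_j$ uniquely as the intersection of the fixed ray with that hyperplane. A strictly increasing sequence drawn from a finite set of values is finite, so only finitely many refinements occur and the algorithm halts. The main obstacle I anticipate is the careful treatment of degeneracies — when $y$ lies on $H_j$ (no strict progress), when the ray meets $R_{j-1}$ on its relative boundary rather than its relative interior, or when several exit facets tie and one is chosen arbitrarily — each of which must be shown either to preserve the strict increase of $t_j$ or to trigger a terminal case (a supporting hyperplane through $x$, forcing $x \in \partial P$), so that no step can stall without the algorithm stopping.
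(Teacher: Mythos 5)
Your proposal is correct, and its core invariant is exactly the one the paper uses: your parameter $t_j$ of the crossing point of $[o,x]$ with the portal $R_{j-1}$ is the paper's $x_j = R_j \cap [o,x]$, and your strict advance $t_{j+1} > t_j$ is the paper's $\|x_j - o\| > \|x_{j-1} - o\|$. The difference is in how the monotone invariant is converted into termination, and here your version is genuinely stronger. The paper stops after establishing that the invariant is (A) bounded and (B) strictly increasing, which by itself does not imply the process is finite: a bounded, strictly increasing sequence of reals (e.g.\ $1 - 1/j$) can go on forever. Your closing step supplies precisely what is missing: since $\sigma_P$ can be taken to return vertices of $P$, each portal is spanned by $d+1$ of the finitely many vertices, so the crossing parameters $t_j$ range over a finite set, and a strictly increasing sequence in a finite set must terminate. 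This pigeonhole argument is the correct way to finish, and it also makes explicit a hypothesis the paper's theorem statement glosses over — the paper only invokes ``$\sigma_P$ always returning a vertex'' later, in the complexity subsection, yet without some finiteness of the oracle's range (or an additional compactness/progress-bound argument) the termination claim as stated is not justified by monotonicity alone. Your anticipated degenerate cases (the ray meeting the relative boundary of a portal, ties among exit facets, $y$ landing on $H_j$) are indeed the points needing care; note that $y \in H_j$ can only happen when $h_P(u_j) = \langle x, u_j \rangle$, which is exactly the terminal case $x \in \partial P$, so no stall occurs there.
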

\begin{proof}
	It suffices to show an invariant that is (A) bounded, and (B) strictly increasing or decreasing after each iteration. Many choices are possible, but here we choose the following: let $x_j = R_j \cap [o, x]$, whose existence is guaranteed by the method. For (A), one has $x_j \in [o, x]$, so in particular, $0 \leq \|x_j - o \| \leq \|x - o\|$. As for (B), at $j$\textsuperscript{th} iteration, if the algorithm does not terminate, then $v_{d+2}$ lies strictly in the open halfspace defined by $H$ not containing $o$, and thus so is $x_j$. Therefore, one has that $x_{j-1} \in [o, x_j]$ and $\|x_j - o \| > \| x_{j-1} - o \|$, as desired.
\end{proof}

\subsection{Experiments}

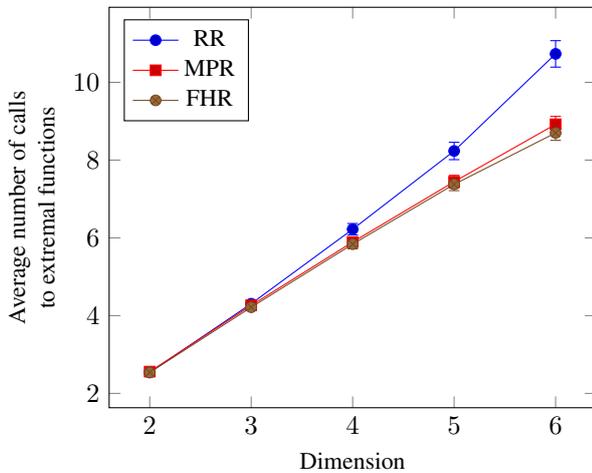
\begin{wrapfigure}[17]{l}{0.47\textwidth}
	\vspace{-12pt}
	\centering
	\begin{tikzpicture}
		\begin{axis}[
			width=0.47\textwidth, 
			xlabel = Dimension, 
			ylabel = Average number of calls\\ to extremal functions,
			ylabel style={align=center}, 
			label style={font=\small}, 
			legend style={font=\small},
			legend pos=north west, 
			xtick distance=1
			]
			\addplot plot [error bars/.cd, y dir=both, y explicit]	table [x index = 0, y index = 1, y error minus index = 2, y error plus index = 3] {simplex_experiment_data.txt};
			\addlegendentry{RR}
			\addplot plot [error bars/.cd, y dir=both, y explicit]	table [x index = 0, y index = 4, y error minus index = 5, y error plus index = 6] {simplex_experiment_data.txt};
			\addlegendentry{MPR}
			\addplot plot [error bars/.cd, y dir=both, y explicit]	table [x index = 0, y index = 7, y error minus index = 8, y error plus index = 9] {simplex_experiment_data.txt};
			\addlegendentry{FHR}
		\end{axis}%
	\end{tikzpicture}
	\vspace{-10pt}
	\captionsetup{justification=centering}
	\captionof{figure}{Performance of the three pivot rules. The error bars represent 95\% confidence interval.}
	\label{figure: pivot rule performance}
\end{wrapfigure}

Here we consider the three pivot rules introduced above and the MPR algorithm. Except MPR algorithm, the implementation was done to allow one to give the pivot rule as a parameter, thus ensuring the same condition. Due to the point $o$, changes to the implementation were necessary for MPR algorithm. Given the similar linear algebraic operations involved, runtime proved to be too uncertain and imprecise as a measure of performance, and we used a different metric. As we can infer the supporting function from the extremal function, our measure is the number of calls made to the latter. In practice and especially for long sequences, it is often the case that the supporting function (and by extension, the extremal function) is the bottle neck.

For the test cases, we consider dimensions $2 \leq d \leq 6$. For each dimension, we generated 1000 polytopes, each of which was constructed as the convex hull of 1000 randomly generated points, uniformly distributed in the unit cube. For a given vector $p$, the supporting function $h_P$ was a loop over the set of vertices $y$, and the extremal function returns a point as an arbitrary linear combination of the vertices $y$ satisfying $\langle y, p \rangle = h_P(y)$. For each test case, we picked a random point on the boundary and a point in the interior to make sure the algorithms worked as expected.

During the experiment, we found that Closest Hyperplane Rule resulted in infinite cycles for some rare cases, and we therefore decided not to include its performance into the final comparison. For the other three rules, all the runtimes were of order milliseconds to seconds, negligible for practical purposes. Since we did not consider code optimisation, our observed runtimes will not reflect well the actual performance, which further inspired the need for another metric described as above. The result is displayed in Figure \ref{figure: pivot rule performance}. Note that Furthest Hyperplane Rule slightly outperformed MPR, whereas Random Rule performed worst than both, but they all called extremal function only a linear number of times.

To see how much FHR outperformed compared with MPR, the following table shows the percentage of tests where FHR performed better, as good, and worse than MPR in terms of number of calls to extremal functions. We can see that FHR often outperformed MPR, but with not significant percentage of the time. Moreover, the gain when FHR performs better is smaller than the loss when it does worse. Therefore, it is inconclusive which method is better. Nonetheless, we chose FHR for simplicity and not in looking for a better method. This result is promising and deserves closer studies in the future.

\begin{table}[ht]
	\centering
	\begin{tabular}[t]{lccccc}
		\toprule
		d       &  2     & 		3 & 	  4 &       5 & 	 6  \\
		\midrule
		Better  & 16.9\% & 18.6\% &  20.8\% & 21.65\% & 23.56\% \\
		As good & 67.2\% & 64.4\% & 59.65\% & 57.55\% & 56.85\% \\
		Worse   & 15.9\% & 17.0\% & 19.55\% &  20.9\% &  19.5\% \\
		\midrule
		$\Delta_{\text{better}}$ & 1.66 & 2.01 & 2.43 & 3.06 & 3.60 \\
		$\Delta_{\text{worse}}$  & 1.68 & 2.09 & 2.56 & 3.28 & 3.90\\
		\bottomrule
	\end{tabular}
	\captionsetup{justification=centering, margin=2cm}
	\caption{Comparison of performance for FHR and MPR. \\
		$\Delta_{\text{better}}$ and $\Delta_{\text{worse}}$ denote the gain and loss in number of extremal \\
		function calls when FHR performs better and worse, respectively. }
\end{table}%

\subsection{Complexity assuming termination}

From the mathematical description, it is clear that for $\sigma_P$ always returning a vertex of $P$ for any query, e.g. as an optimization oracle in linear programming problem, then suppose the algorithm terminates, the number of iterations is linear with respect to the number of vertices. Or in other words, if the algorithm terminates, one has that the complexity to be $O(|\vertex{P}|)$.

In this subsection, we will show that this bound is tight, and moreover, independent of the dimension. We shall assume the real computation model, i.e. numbers are represented with infinite precision, and for the sake of brevity, we shall outline only the key ideas of the construction.

The first key observation is how to construct such a polygon in 2-dimensional space. In this part, we shall use capital letters to denote points, and assume a plane without any coordinate systems.

Let $d$ be a line on which we choose an arbitrary point $A_0$. We take another point $A_1 \not\in d$, and draw the line $d_1$ passing through $A_1$ and perpendicular to $d$. Let $B_1$ be the intersection between $d$ and $d_1$. Then, on the halfplane defined by $d_1$ and containing $A_0$, we take another point $A_2$ such that the line segment $A_1 A_2$ intersects $A_0 B_1$. The triangle $A_0 A_1 A_2$ now gives our initial 2-simplex.

Then, let $d_2$ be the line parallel to $A_0 A_1$ and passing through $A_2$, and let $B_2$ be the intersection between $d_1$ and $d_2$. Since the line $d$ passes through the line segment $A_1 A_2$, it divides the triangle $\triangle A_1 A_2 B_2$ into two halves. We choose a point $A_3$ strictly in the interior of the half containing $A_1$.

And the process continues indefinitely. To be more precise, let $i \geq 2$, we draw the line $d_i$ passing through $A_i$ and parallel to $A_{i-2} A_{i-1}$, and let $B_i$ be the intersection between $d_{i-1}$ and $d_i$. Since the line $d$ passes through the line segment $A_{i-1} A_i$, it divides the triangle $\triangle A_{i-1} A_i B_i$ into two halves. We choose a point $A_{i+1}$ strictly in the interior of the half containing $A_{i-1}$.

Now let $n \geq 3$. we claim that 
\begin{enumerate}
	\item The points $A_0, A_1, ..., A_{n+1}$ are vertices of $P = \conv{\{A_0, A_1, ..., A_{n+1}\}}$.
	
	Indeed, for each $i > 0$, one has the line $d_i$ separates the plane into two halves, one of which contains $P$, and since $A_i \in d_i$, $A_i$ is a vertex of $P$. As for $A_0$, one can see that the line $d_0$ passing through $A_0$ and parallel to $A_1 A_2$ serves the same purpose.
	
	\item The line $d$ intersects all triangle $A_i A_{i+1} A_{i+2}$ for all $i \geq 0$, since it intersects $A_i A_{i+1}$ for all $i > 0$.
	
	\item Let $X$ be a point inside the triangle $\triangle A_{n-1} A_n A_{n+1}$ and lying on $d$, then the algorithm would take $n$ iterations even if we incorporate the information of $x$ in Phase 1, and regardless of the pivot rule. The conclusion follows.
\end{enumerate}

What needs to be done is to generalise this construction to higher dimensions. One of the simple ways to do so in $\mathbb{R}^{d+1}$ is to take an arbitrary 2-dimensional subspace $H$ in which we carry out the above construction, and $d-1$ points $C_0, C_1, ..., C_{d-1}$ such that, for instance, the vectors $\overrightarrow{A_0 C_i}$ for $0 \leq i \leq d-1$ forms a basis of the complementary subspace $H^\perp$. Then, one has $P = \conv{\{A_0, A_1, ..., A_{n+1}, C_0, C_1, ..., C_{d-1}\}}$ is a non-degenerate polytope, and the analysis above holds, with the number of iterations remains $n$. An intuitive reason is because if we suppose a line $\ell \subseteq H$ separates two points $A, B \in H$ in $H$, then the affine hyperplane containing $\ell$ and $C_0, C_1, ..., C_{d-1}$ also separates two points $A$ and $B$ in $\mathbb{R}^{d+1}$.

\begin{figure}%
	\centering
	\subfloat[\centering Construction in 2D...]{{\includegraphics[width=0.5\textwidth]{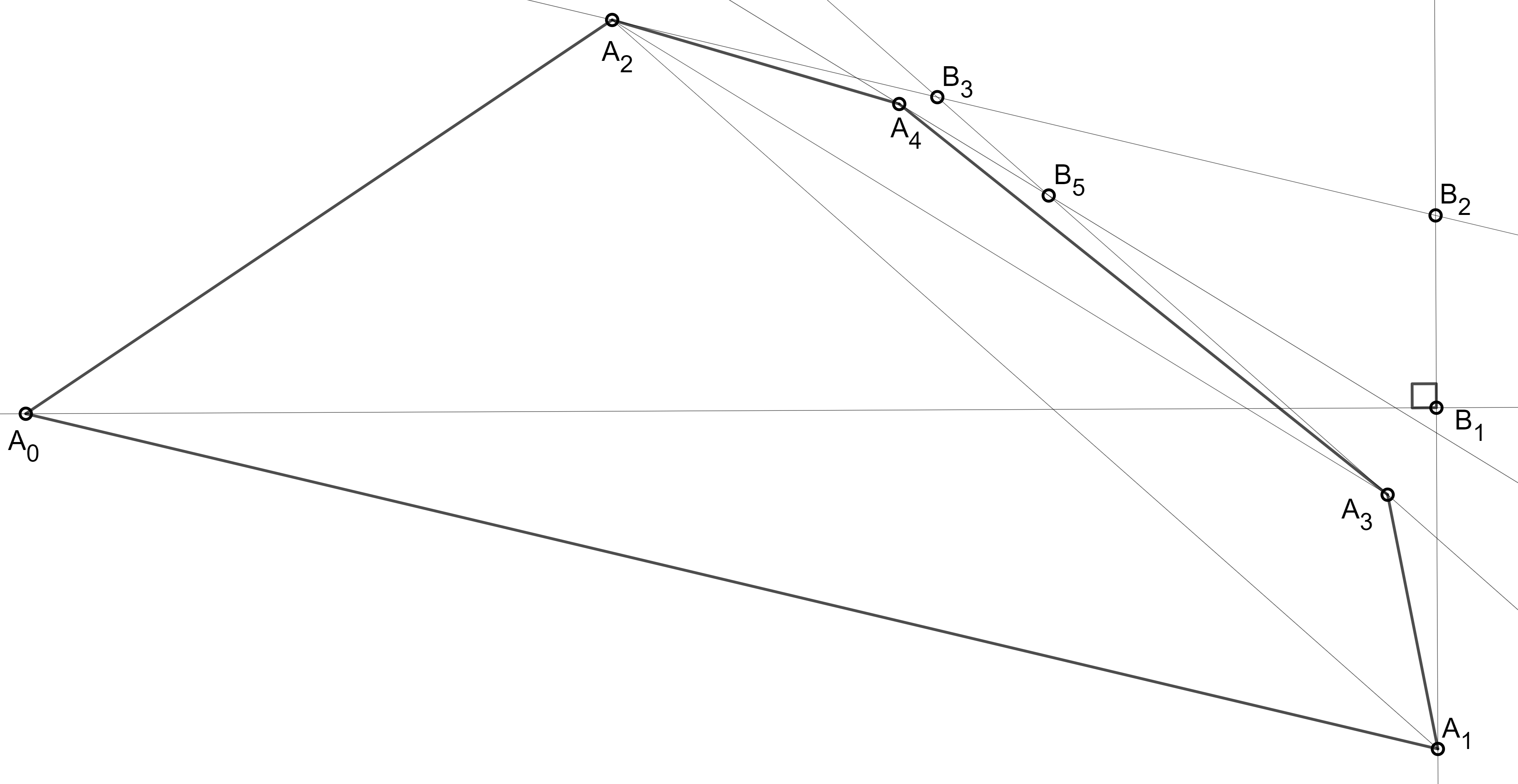} }}%
	\enspace
	\subfloat[\centering which is then lifted to 3D]{{\includegraphics[width=0.45\textwidth]{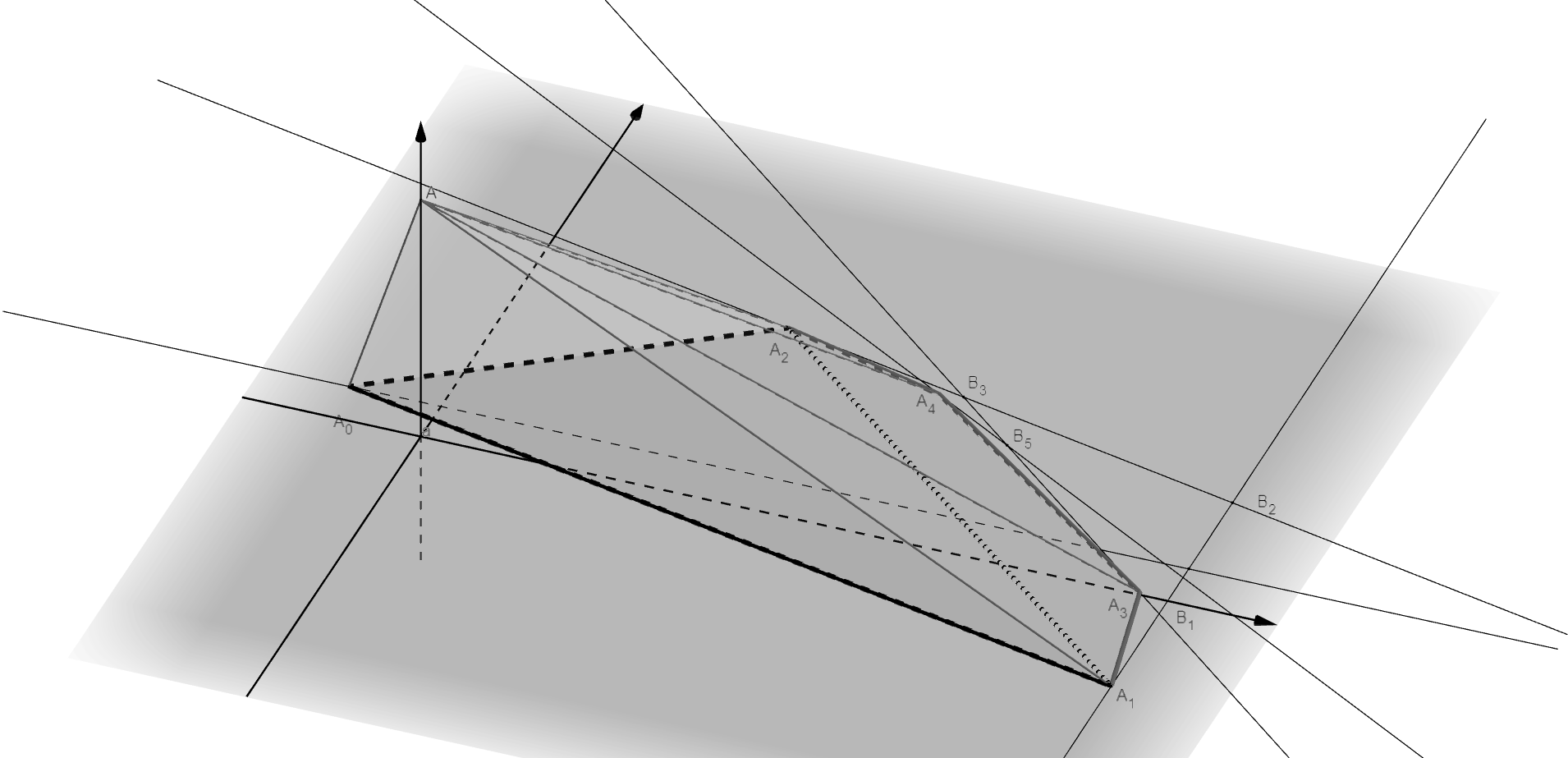} }}%
	\caption{The first few iterations of an construction.}
\end{figure}

\subsection{Conclusion}

In this chapter, we present a mathematical description of a family of methods for the relative position problem, which involves initiating a simplex given by points on the boundary of the concerned polytope $P$ and gradually refining the simplex, until no such improvement can be done, at which point we can conclude about the relative position of $x$.

We also give definition of a pivot rule, and demonstrate the dependence of performance, and even of termination of the scheme on the choice of such a rule. Unfortunately the scope of the study does not allow further examination of the rules.

Finally, assuming the algorithm is executed in real computation model and terminates, we show that the complexity is linear in the number of vertices of $P$, and that this bound is tight, independent from the dimension. Note that in the context of RNA secondary structure, the polytope has potentially vertices exponentially many with respect to the number of features. 
\section{Relative position problem: a theoretical solution}
\label{chapter: ellipsoid method part 2}

As we see in the last chapter, simplex method with an appropriate choice of pivot rule can demonstrate a great performance. Nonetheless, we see that in the worst case, its complexity can be linear with respect with the number of vertices of polytope, which, in the context of RNA polytopes, can be exponential of the sequence's length. On the one hand, in theory, this means we have not solved Relative position problem in polynomial time with respect to the dimension. On the other hand, in practice, a RNA sequence can be as long as 200,000 nts, and even if one limits themselves to the effective range of thermodynamics method, it can still be up to 700 nts. This may prove to be a bottleneck.

In this chapter, inspired by work of Hornus \cite{Hornus2017}, we will develop an algorithm based on the ideas from ellipsoid method in linear programming, which motivates our naming. As we will see, the proof of convergence for both methods are closely related. In particular, it allows us to solve Relative position problem in weakly polynomial time.

\begin{theorem}
	Relative position problem is solvable in weakly polynomial time.
\end{theorem}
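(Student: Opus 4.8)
The plan is to adapt the ellipsoid method from linear programming to the spherical setting, tracking a shrinking sequence of spherical ellipsoids that enclose the unknown normal spherical polytope $S$ corresponding to the point $x$ (which we assume lies in $P$). Recall from the excerpt that $x \in \partial P$ if and only if its normal spherical polytope $S \in \mathcal{S}(\mathcal{P})$ is non-empty, whereas $x \in \mathring{P}$ corresponds to $S$ being empty (a single interior point of $P$ has full-dimensional normal cone only when it is a vertex, so more carefully: $x \in \partial P$ iff there exists a direction $p$ with $\langle x, p\rangle = h_P(p)$). Following Hornus's Decision Sphere Search, each query direction $p \in \mathbb{S}^d$ either certifies $x \in \partial P$ when $h_P(p) = \langle x, p\rangle$, or, via the extremal function, produces a separating great circle cutting $S$ into a known halfsphere. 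The DSS approach accumulates these halfspheres into an enclosing spherical polytope $S'$ but lacks a volume-independent termination bound; the ellipsoid idea is to replace the crude halfsphere accumulation with a spherical ellipsoid $\mathcal{E}$ (as defined in the spherical geometry section) that provably loses a constant fraction of volume at each cut.

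First I would set up the encoding: with rational input data of bit-length $L$, bound the minimum possible volume $\mathrm{vol}(S)$ from below when $S$ is non-empty, using the fact that a non-empty normal spherical polytope is the intersection of $\mathbb{S}^d$ with a polyhedral cone whose defining facet normals come from vertex differences of $P$ with coordinates of bounded bit-complexity. This gives a threshold $\epsilon > 0$, polynomial in $d$ and $L$, such that any non-empty $S$ has volume at least $\epsilon$. Second, I would establish the geometric core: given a spherical ellipsoid $\mathcal{E} \supseteq S$ and a separating great circle through (near) its centre, one can construct a new spherical ellipsoid $\mathcal{E}'$ containing $\mathcal{E} \cap C_{-}$ whose volume satisfies $\mathrm{vol}(\mathcal{E}') \leq e^{-c/d}\,\mathrm{vol}(\mathcal{E})$ for an absolute constant $c$ — the spherical analogue of the classical half-ellipsoid shrinkage. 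Third, combine these: start with $\mathcal{E}_0 = \mathbb{S}^d$ and iterate; after $O(d^2 L)$ steps either some query direction certifies $x \in \partial P$, or $\mathrm{vol}(\mathcal{E}_k)$ drops below $\epsilon$, at which point $S$ must be empty and we conclude $x \in \mathring{P}$. Each iteration makes one call to $\sigma_P$ and $h_P$ plus polynomially many arithmetic operations, giving weak polynomial time overall.

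\textbf{The main obstacle} will be the second step: transferring the Euclidean half-ellipsoid volume-reduction lemma to the sphere. In the flat case the shrinkage is an exact algebraic update, but on $\mathbb{S}^d$ the metric is curved, the ``centre'' of a spherical ellipsoid and its minimum-volume enclosing update are not given by the same clean rank-one formula, and the cutting great circle need not pass exactly through the spherical centre. I expect to handle this either by working in the tangent space at the current centre via the exponential map (so that the cut is approximately a hyperplane through the origin and the classical update applies up to curvature corrections that are controlled as long as $\mathcal{E}$ has small diameter), or by lifting to the enclosing Euclidean cone in $\mathbb{R}^{d+1}$ and running the standard ellipsoid method there on the cone cross-section, then intersecting back with $\mathbb{S}^d$. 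The latter is cleaner: since $S = \mathbb{S}^d \cap N$ for a cone $N$, and the separating great circles correspond to homogeneous halfspaces through the origin, I would run the ordinary ellipsoid method on the polyhedral cone restricted to an affine slice, where all the classical guarantees hold verbatim, and only translate the volume threshold and the termination certificate back through the spherical correspondence. This reduction bypasses the curved-geometry difficulty entirely and is likely the intended argument.
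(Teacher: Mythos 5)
Your overall strategy coincides with the paper's: the paper's own proof uses exactly the Hornus-style cuts you describe (query a direction $p$; if $h_P(p) > \langle x, p\rangle$, the great circle orthogonal to $u = \frac{\sigma_P(p)-x}{\|\sigma_P(p)-x\|}$ separates $p$ from the certificate set $S$), encloses $S$ in a shrinking spherical ellipsoid, and terminates via a bit-size volume threshold. Where you genuinely differ is in the key technical step. The paper represents each spherical ellipsoid as a cone over a Euclidean ellipsoid lying in the tangent hyperplane at the \emph{current} centre direction, performs the classical central-cut update there, and then must re-tangent at the new centre; its appendix is devoted to a lengthy computation showing that this re-tangency correction does not increase volume, a lemma it only establishes for $d \geq 3$ and $d = 1$, patching $d = 2$ by lifting $P$ from $\mathbb{R}^3$ to $\mathbb{R}^4$. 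Your preferred route (the fixed affine slice of the cone) avoids all of that: cuts through the origin restrict to affine cuts in the fixed slice, and the discarded side always contains the current centre (since the query direction is the centre's ray), so the classical volume-reduction lemma applies verbatim, with no correction lemma and no dimensional exception. If completed, this would be a real simplification over the paper's argument, not a restatement of it.

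However, there are two gaps. First, initialization: you cannot ``start with $\mathcal{E}_0 = \mathbb{S}^d$.'' Neither representation admits the whole sphere --- a cone slice is a bounded ellipsoid only when the cone is pointed --- and an intersection of at most $d$ homogeneous halfspaces always contains a nontrivial linear subspace, hence antipodal points, so no short sequence of Hornus cuts by itself yields a region to which any ellipsoid update applies. Worse, for your fixed-slice route the slice must be chosen so that the \emph{unknown} cone of $S$ lies strictly on its positive side, which cannot be done a priori. This is precisely what the paper's Phase 1 supplies: $d+2$ oracle calls build a non-degenerate inscribed simplex $Q$, whose facet-normal great circles localize $S$ inside a spherical cap of radius $\theta < \frac{\pi}{2}$, and a separate bit-size argument bounds $\tan\theta$ (hence the initial slice ellipsoid) polynomially in $\langle P \rangle$; your plan needs this phase and this estimate just as much as the paper does. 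Second, your step 1 is false as stated: a non-empty certificate set $S$ need not have volume bounded below, because it need not be full-dimensional. If $x$ lies in the relative interior of a face $F$ of positive dimension, then $\dim S = d - \dim F$; in particular, for $x$ in the relative interior of a facet, $S$ is a single point of $\mathbb{S}^d$, so no $\varepsilon > 0$ lower-bounds $\vol{S}$ over all non-empty instances, and volume-based termination as you state it would wrongly report $x \in \mathring{P}$ for every non-vertex boundary point. Repairing this requires relaxing the certificate test to $h_P(p) - \langle x, p\rangle \leq \delta$ for a tolerance $\delta$ of size $2^{-\mathrm{poly}(d, \langle P\rangle)}$ (so that the approximate-certificate set is full-dimensional) together with a rounding argument that an approximate certificate implies an exact one. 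The paper is itself essentially silent on this point, but since you make the volume lower bound an explicit pillar of your plan, the failure is concrete in your write-up and must be addressed.
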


Similar to Section \ref{chapter: simplex method part 2}, in this Section, we consider the case where the polytope $P$ is non-degenerate. Otherwise, $P$ has empty interior, so $P = \partial P$, and if $x \in P$, we return $x \in \partial P$. 

\subsection{Method description}

Similar to chapter 4, in what follows we shall assume that $P$ is non-degenerate. Otherwise, we will return $x \in \partial P$ for all $x \in P$.

\subsubsection{Geometric idea}

For $u \in \mathbb{S}^d$, we denote $C(u) = \mathbb{S}^d \cap \{v \mid \langle u, v \rangle = 0\}$,  $C_{\leq 0}(u) = \mathbb{S}^d \cap \{v \mid \langle u, v \rangle \leq 0\}$, and $C_+(u) = \mathbb{S}^d \cap \{v \mid \langle u, v \rangle > 0\}$. Then, we have the two following lemmas.

\begin{lemma}(rephrasing from Hornus's \cite[Lemma 1]{Hornus2017}, 2017)
	Let $x, y \in \partial P$ such that $x \neq y$, $u = \frac{y - x}{\| y - x \|}$ and $S$ be the normal spherical polytope of $x$. Then, one has $S \subseteq C_{\leq 0}(u)$.
\end{lemma}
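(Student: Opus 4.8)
The plan is to reduce the set-inclusion claim to a single scalar inequality by unpacking the definition of the normal spherical polytope. Recall that $S = \mathbb{S}^d \cap N$, where $N$ is the normal cone of $x$: by the definition of the normal fan given earlier, a direction $v$ lies in $N$ precisely when $x$ attains the supporting function value, i.e. $\langle x, v\rangle = h_P(v)$. Since $h_P(v) = \sup_{z \in P}\langle z, v\rangle$, this is equivalent to the global optimality statement that $\langle z, v\rangle \le \langle x, v\rangle$ for every $z \in P$. The entire proof will hinge on reading membership in $S$ this way, rather than as a statement about neighbouring vertices only.

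First I would fix an arbitrary $v \in S$ and specialise the optimality inequality to the single point $z = y$, which is legitimate because $y \in \partial P \subseteq P$. This gives $\langle y, v\rangle \le \langle x, v\rangle$, or equivalently $\langle y - x, v\rangle \le 0$. Next, since the hypothesis $x \neq y$ guarantees $\| y - x \| > 0$, dividing through by this positive scalar preserves the sign, so
\[
\langle u, v\rangle = \frac{\langle y - x, v\rangle}{\| y - x \|} \le 0,
\]
which is exactly the defining condition for $v \in C_{\leq 0}(u)$. As $v$ was an arbitrary element of $S$, this yields $S \subseteq C_{\leq 0}(u)$, as required.

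I do not expect any genuine obstacle here; the argument is a one-line substitution once the definitions are in place. The only point demanding care is the interpretation of $S$: one must use that $v \in S$ encodes optimality of $x$ over \emph{all} of $P$, and in particular dominates the value at $y$, rather than merely a local condition. It is worth noting that the proof never uses that $x$ or $y$ are vertices, only that they lie in $P$ and that $x$ possesses a well-defined normal cone; this is why the statement is robust and geometrically transparent, matching the intuition that the separating direction $u$ from $x$ toward another boundary point $y$ must make a non-acute angle with every outward normal direction of $x$.
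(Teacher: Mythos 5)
Your proof is correct and takes essentially the same route as the paper: the paper simply argues the contrapositive, showing that any $v \in C_+(u)$ satisfies $\langle y, v\rangle > \langle x, v\rangle$ and hence $v \notin S$, whereas you argue directly that $v \in S$ forces $\langle y - x, v\rangle \le 0$. Both rest on the identical one-line computation dividing $\langle y - x, v\rangle$ by $\|y - x\| > 0$ and on the same fact that membership of $v$ in $S$ means $x$ maximises $\langle \cdot, v\rangle$ over all of $P$, in particular dominating the value at $y$.
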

\begin{proof}
	For any $v \in C_+(u)$, one has $\frac{1}{\| y - x \|} (\langle y, v \rangle - \langle x, v \rangle) = \left\langle \frac{y - x}{\| y - x \|}, v \right\rangle > 0$, so $\langle y, v \rangle > \langle x, v \rangle$ and $v \not\in S$. Therefore, $S \subseteq \mathbb{S}^d \setminus C_+(u) = C_{\leq 0} (u)$.
\end{proof}

\begin{lemma}
	Let $x \in P$, $p \in \mathbb{S}^d$, and $y = \sigma_P(p)$. Assume that $\langle x, p \rangle < \langle y, p \rangle$, and let $u = \frac{y - x}{\| y - x \|}$ which is well-defined since $x \neq y$, then $p \in C_+(u)$.
\end{lemma}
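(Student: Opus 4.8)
The plan is to unwind the definitions and perform a direct computation, very much in the spirit of the preceding Lemma. The statement to prove is: given $x \in P$, $p \in \mathbb{S}^d$, $y = \sigma_P(p)$, and the hypothesis $\langle x, p \rangle < \langle y, p \rangle$, we must show $p \in C_+(u)$ where $u = \frac{y-x}{\|y-x\|}$. By the definition of $C_+(u)$, this amounts to showing the single inequality $\langle u, p \rangle > 0$.

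First I would note that $u$ is well-defined precisely because the hypothesis $\langle x, p \rangle < \langle y, p \rangle$ forces $x \neq y$ (if $x = y$ the two inner products would coincide), so the normalisation by $\|y - x\|$ makes sense and the denominator is strictly positive. This is the only subtlety worth flagging, and the authors already record it inline in the statement.

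The core of the argument is then a one-line computation. Expanding the inner product of $u$ with $p$,
\[
\langle u, p \rangle = \left\langle \frac{y - x}{\|y - x\|}, p \right\rangle = \frac{1}{\|y - x\|}\left( \langle y, p \rangle - \langle x, p \rangle \right).
\]
Since $\|y - x\| > 0$ and, by hypothesis, $\langle y, p \rangle - \langle x, p \rangle > 0$, the right-hand side is a product of two strictly positive quantities, hence strictly positive. Therefore $\langle u, p \rangle > 0$, which is exactly the condition $p \in C_+(u)$, completing the proof.

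Honestly, there is no real obstacle here; the statement is essentially the converse direction of the computation carried out in the previous lemma, where one started from $v \in C_+(u)$ and deduced $\langle y, v \rangle > \langle x, v \rangle$. Here we run the same algebraic identity in reverse, feeding in the assumed gap $\langle x, p \rangle < \langle y, p \rangle$ to conclude membership in the open cap. The only things to be careful about are keeping the direction of the inequality straight and confirming that the normalising denominator is genuinely nonzero, both of which follow immediately from the hypothesis. The role this lemma will play downstream is presumably to guarantee that the query direction $p$ fed to the extremal oracle always lands in the open halfsphere that the search is about to exclude, which is why the strict inequality (rather than $\geq 0$) matters.
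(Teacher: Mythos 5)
Your proof is correct and is essentially identical to the paper's: both reduce membership in $C_+(u)$ to the single computation $\langle p, u \rangle = \frac{1}{\|y-x\|}\left(\langle y, p \rangle - \langle x, p \rangle\right) > 0$, using the hypothesis $\langle x, p \rangle < \langle y, p \rangle = h_P(p)$ and the positivity of the normalising factor. Nothing is missing; the remark that the hypothesis forces $x \neq y$ is a small but welcome addition the paper states only inline.
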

\begin{proof}
	By definition, one has $\langle y, p \rangle = h_P(p) > \langle x, p \rangle$, whence follows
	$ \frac{1}{\| y - x\|} \langle p, y - x \rangle = \langle p, u \rangle > 0$, or $p \in C_+(u)$.
\end{proof}

In simple terms, what these two lemmas say is that assuming $x \in \partial P$ and let $S$ be its normal spherical polytope, then for any vector $p \in \mathbb{S}^d$, if $\langle x, p \rangle < h_P(p)$, then there exists a great circle separating $p$ and $S$, given by $C(u)$ where $u = \frac{\sigma_P(p) - x}{\| \sigma_P(p) - x \|}$. This plays the role of a separating affine hyperplane in our ellipsoid method.

Now to construct the ellipsoid, note that after some $i$ iterations, we bound $S$ by another spherical polytope $T_i$. The idea is then to bound $T_i$ by a spherical ellipsoid $\mathcal{E}_i$ defined as in Chapter 1, and consider the centre $\mathcal{c}_i$ of $\mathcal{E}_i$. If $x$ lies on the boundary of $P$, $S$ will be contained in the region $\mathcal{E}_i \cap C_{\leq 0} (u)$, which can be then bounded by another spherical ellipsoid $\mathcal{E}_{i+1}$. Given that $\mathcal{c}_i \in E_i \cap C_+ (u)$, we can hope that similarly to ellipsoid method in linear programming, we have $\frac{\vol{\mathcal{E}_{i+1}}}{\vol{\mathcal{E}_i}} < C$ for some constant $0 < C < 1$, which leads to an algorithm running in weak polynomial time. As we shall see later, we can associate to each spherical ellipsoid $\mathcal{E} \subseteq \mathbb{S}^d$ an ellipsoid $E \subseteq \mathbb{R}^d$ such that $\vol{\mathcal{E}}$ tends to $0$ as $\vol{E}$ tends to $0$, and we show that $\frac{\vol{E_{i+1}}}{\vol{E_i}} < C$: in fact, we will use in our proof the similar result from ellipsoid method in linear programming.

\subsubsection{Spherical ellipsoid representation}

In $\mathbb{R}^{d+1}$, an ellipsoid $E$ can be represented by a positive-definite matrix $Q$ and a centre $c$ as $E = \{x \mid (x - c)^T Q^{-1} (x-c) \leq 1\}$, which allows closed formula for computing $E_{i+1}$, arguing about the volume, and practical implementation. Unfortunately, spherical ellipsoids in $\mathbb{S}^d$ lack such a representation, which not only makes it difficult to implement the algorithm, but also to compute $\mathcal{E}_{i+1}$ and calculate its volume even for the case $d = 2$. To this, we propose a possible solution, where a spherical ellipsoid $\mathcal{E}$ will be represented by a convex cone $C$ such that $\mathcal{E} = \mathbb{S}^d \cap C$. In particular, $C$ can be represented by a vector $u \in \mathbb{S}^d$ and an ellipsoid $E \subseteq \mathbb{R}^d$ who takes vectors in the canonical basis as its eigenvectors, as follow:
\begin{quote}
	Consider the affine hyperplane $H$ tangent to $\mathbb{S}^d$ at $u$, and endow it with a suitable right-orientated orthonormal basis $u_1, u_2, ..., u_d$. Identify $H$ with $\mathbb{R}^d$ where $u$ is identified with the origin, and for all $i$, $u_i$ is identified with $e_i$ in the canonical basis of $\mathbb{R}^d$. Then, we construct an ellipsoid $E_H$ in $H$ given by $E$ and the identification between $H$ and $\mathbb{R}^d$, and $C$ is given by
	\[
		C = \{\alpha v \mid \alpha \geq 0, v \in E_H\}.
	\]
	Moreover, we can choose a basic such that the basic vectors are the axes of $E_H$, thus one may assume $E$ admits the vectors of the canonical basis of $\mathbb{R}^d$ as its eigenvectors.
\end{quote}

\begin{figure}
	\centering
	\includegraphics[width=\textwidth]{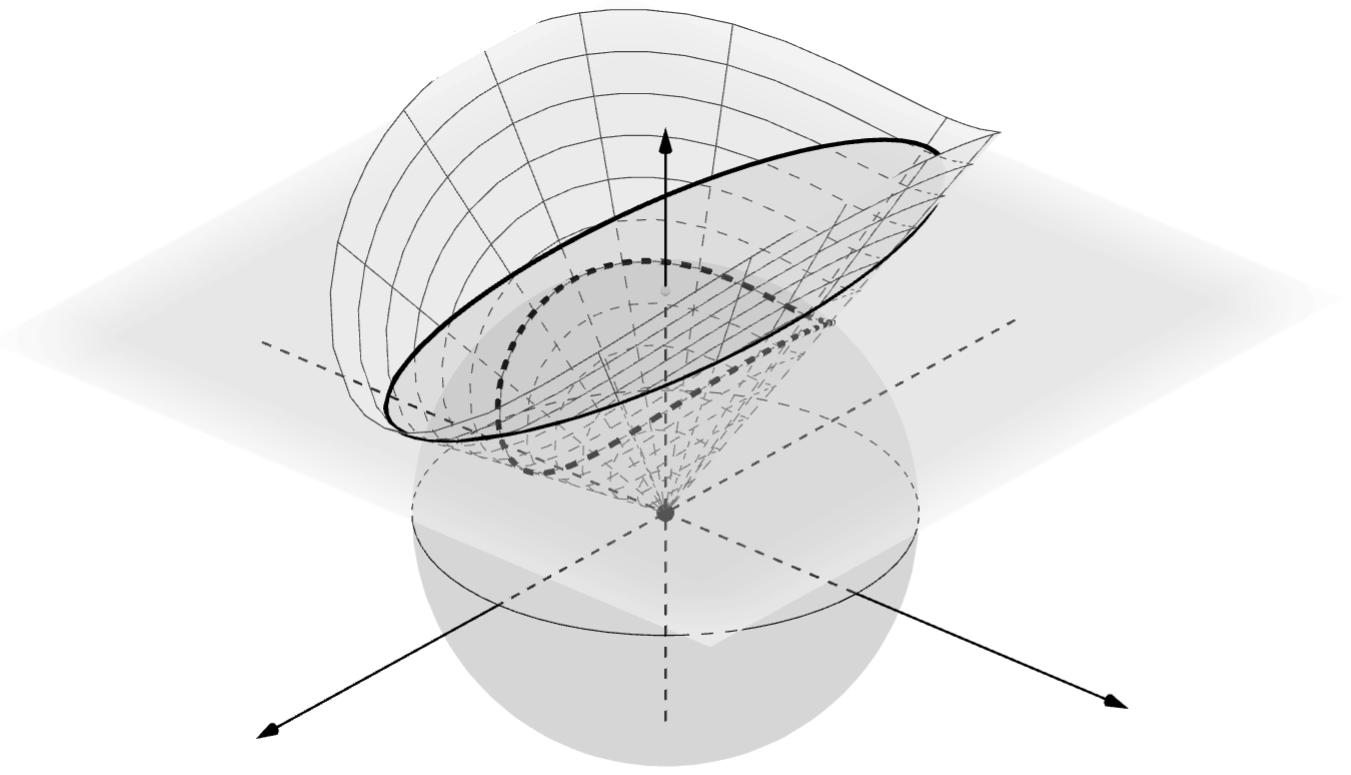}
	\captionsetup{justification=centering, margin=2cm}
	\caption{Spherical ellipsoid and its representation. Solid line and dashed line represent $E_H$ and $\mathcal{E}$, respectively. The plane represents $H$.}
\end{figure}

One can show that such a pair $(u, E)$ is necessarily unique, which gives a representation of $\mathcal{E}$. In what follows, whenever appropriate, to mean a spherical ellipsoid $\mathcal{E}$, we will use such a pair $(u, E)$ of a vector $u \in \mathbb{S}^d$ and an ellipsoid $E \subseteq \mathbb{R}^d$ given by a positive-definite matrix $Q$ as $E = \{x \mid x^T Q^{-1} x \leq 1\}$.

Finally, we need to show that $\vol{\mathcal{E}}$ tends to $0$ as $\vol{E}$ tends to $0$.
\begin{lemma}
	\label{lemma on volume convergence to 0}
	With notations defined as above, as $\vol{E}$ tends to $0$, so does $\vol{\mathcal{E}}$.
\end{lemma}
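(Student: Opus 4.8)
The plan is to recognise the spherical ellipsoid $\mathcal{E}$ as the \emph{central (gnomonic) projection} of the planar ellipsoid $E_H$ onto the sphere, and then to exploit the fact that this projection is area non-increasing. Concretely, since the identification of $\mathbb{R}^d$ with the tangent hyperplane $H$ is via an orthonormal basis, it is an isometry, so $\vol{E_H} = \vol{E}$; and since every $v \in E_H$ satisfies $\langle v, u \rangle = 1 > 0$, the defining cone $C = \{\alpha v \mid \alpha \geq 0, v \in E_H\}$ meets $\mathbb{S}^d$ in exactly the points $v/\| v \|$, so $\mathcal{E} = \{ v / \| v \| \mid v \in E_H \} = \pi(E_H)$ where $\pi(v) = v/\| v \|$. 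Thus $\vol{\mathcal{E}}$ is obtained by integrating the Jacobian of $\pi$ over $E_H$, and the whole lemma will follow from a pointwise bound on that Jacobian.

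First I would compute the ratio between the spherical area element and the planar area element under $\pi$. Placing the tangent point $u$ at the origin of $H$ and writing $r = \| v \|$ for the planar distance of $v$ to $u$, the geodesic distance $\phi$ from $u$ to $\pi(v)$ satisfies $\cos \phi = \langle u, \pi(v) \rangle = (1 + r^2)^{-1/2}$, hence $\phi = \arctan r$, $d\phi = (1+r^2)^{-1} dr$, and $\sin \phi = r (1+r^2)^{-1/2}$. Comparing the geodesic polar area element $(\sin \phi)^{d-1} d\phi\, d\omega$ on $\mathbb{S}^d$ with the planar polar element $r^{d-1} dr\, d\omega$ then yields the Jacobian factor $(1 + r^2)^{-(d+1)/2}$.

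With this in hand the conclusion is immediate: since $(1 + r^2)^{-(d+1)/2} \leq 1$ for every $v$, one has
\[
\vol{\mathcal{E}} = \int_{E_H} \frac{dv}{(1 + \| v \|^2)^{(d+1)/2}} \leq \int_{E_H} dv = \vol{E_H} = \vol{E},
\]
and therefore $\vol{\mathcal{E}} \to 0$ whenever $\vol{E} \to 0$ --- in fact uniformly, which is stronger than the limiting statement required.

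The routine steps (the isometry of the identification, measurability of $\pi(E_H)$) are harmless; the one place demanding care is the Jacobian computation, which is the heart of the argument. I would keep it clean by working in geodesic polar coordinates centred at $u$, as above, rather than in Cartesian coordinates on $H$, so that the radial and angular contributions separate and the factor $(1 + r^2)^{-(d+1)/2}$ drops out directly. It is worth noting that the elongation worry --- that an ellipsoid of small volume may be very thin and very long, running far out in $H$ --- is dissolved by the uniform bound $\vol{\mathcal{E}} \leq \vol{E}$, since the Jacobian never exceeds $1$ no matter how far $v$ strays from $u$.
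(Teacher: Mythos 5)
Your proof is correct, but it takes a genuinely different route from the paper's. The paper never computes a Jacobian: it converts the $d$-dimensional spherical measure into a $(d+1)$-dimensional solid volume via the cone formula ($\vol{\mathcal{E}}$ is proportional to $\vol{C \cap B(0,1)}$, where $B(0,1)$ is the unit ball of $\mathbb{R}^{d+1}$), observes that this ball lies inside the halfspace $H_{-} = \{y \mid \langle y, u \rangle \leq 1\}$ bounded by the tangent hyperplane $H$, and hence gets $\vol{C \cap B(0,1)} \leq \vol{C \cap H_{-}} = \frac{\vol{E}}{d+1}$ --- a pure containment argument using only the volume formula for a cone of height $1$, with no calculus at all. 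Your argument instead works intrinsically in dimension $d$: you identify $\mathcal{E}$ as the image of $E_H$ under the central projection $v \mapsto v/\| v \|$ and show that this map is area non-increasing, with explicit Jacobian $(1+r^2)^{-(d+1)/2}$. Both arguments deliver the same uniform bound $\vol{\mathcal{E}} \leq \vol{E}$, which is stronger than the limit statement asked for. The paper's version is shorter and more elementary; yours yields strictly more information, namely the exact density of the spherical measure pulled back to $H$, which dissolves the elongation worry explicitly and could be reused to quantify how the measure of $\mathcal{E}$ concentrates near $u$ for thin, elongated ellipsoids. Your Jacobian computation is correct: with $r = \tan\phi$ one has $\sin^{d-1}\phi \, d\phi = r^{d-1}(1+r^2)^{-(d+1)/2}\, dr$. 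The one step you leave tacit --- injectivity of the projection on $H$, needed to write $\vol{\mathcal{E}}$ as an integral over $E_H$ --- is indeed routine: if $v_1 = \lambda v_2$ with $v_1, v_2 \in H$ and $\lambda > 0$, then $1 = \langle v_1, u \rangle = \lambda \langle v_2, u \rangle = \lambda$, so $v_1 = v_2$.
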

\begin{proof}
	$\vol{\mathcal{E}}$ tends to $0$ if any only if $\vol{C \cap B_d(0, 1)}$ tends to $0$, where $B_d(0, 1)$ is the unit ball in $\mathbb{R}^{d+1}$. On the other hand, one has $\vol{C \cap B_d(0, 1)} < \vol{C \cap H_{-}}$ where $H_{-} = \{y \mid \langle y, u \rangle \leq 1 \}$ is the halfspace defined by $H$ and containing the origin. And finally, $\vol{C \cap H_{-}} = \frac{\vol{E}}{d+1}$, whence the conclusion follows.
\end{proof}

Similar to ellipsoid method in linear programming, our algorithm has two phases: finding an initial bounding spherical ellipsoid, and refining it.

\subsubsection{Phase 1: Finding an initial bounding spherical ellipsoid}

Similarly to Phase 1 of simplex method presented in the last chapter, we construct a non-degenerate simplex $Q = \conv{\{u_0, u_1, ..., u_{d+1}\}}$ in $d+2$ iterations. Then, we consider the points $\mathcal{u_i} \in \mathcal{S}^d$ corresponding to the facet formed by $u_j$ where $j \neq i$, and since any two facets are adjacent, in particular they cannot be parallel to each other, thus no two points $\mathcal{u_i}$ and $\mathcal{u_j}$ are antipodal. Thus, to these $d+2$ points, we have $d+2$ distinct great circles given by $C(\mathcal{u_0}), C(\mathcal{u_1}), ..., C(\mathcal{u}_{d+1})$, which divide $\mathbb{S}^d$ into spherical polytopes. Now if the algorithm does not terminate after this phase, then it must be the case that the normal spherical polytope $S \in \mathcal{S}(P)$ of $x$, if non-empty, must lies in one of the halfspheres defined by $C(\mathcal{u}_i)$ for all $i = 0, 1, ..., {d+1}$, and therefore amongst the spherical polytopes of dimension $d$ generated by $C(\mathcal{u_0}), C(\mathcal{u_1}), ..., C(\mathcal{u}_{d+1})$, there exists uniquely one containing $S$.

On the other hand, any normal spherical polytope $S' \in \mathcal{S}(Q)$ can be contained in a ball (in $\mathbb{S}^d$) whose radius is strictly less than $\frac{\pi}{2}$. Indeed, suppose the contrary, then there exists some normal spherical polytope $S' \in \mathcal{P}$ and some $p \in \mathbb{S}^d$ such that both $p$ and $-p$ are in $S'$. The polytope $S'$ corresponds to some point $x' \in \partial Q$, and by construction, one has that
\[
Q \subseteq \{y \mid \langle y, p \rangle \leq \langle x', p \rangle \} \cap \{y \mid \langle y, p \rangle \geq \langle x', p \rangle \} = \{y \mid \langle y, p \rangle = \langle x', p \rangle\},
\]
contradicting the non-degeneracy of $Q$.

Therefore, suppose $S$ is contained in some $S' \in \mathcal{S}(Q)$, which in turns is contained in the ball $B_{\mathbb{S}^d}(u, \theta)$ where $0 < \theta < \frac{\pi}{2}$. This is identified with $(u_1, E_1)$ where $E_1 = B(0, r) \subseteq \mathbb{R}^d$ is a ball of radius $r = \tan^{-1} \theta$ necessarily finite.

As for how to construct $u_1$ and $E_1$ explicitly, to each great circle $C$ that contains a facet of $S'$, we denote $H_C$ the hyperplane containing $C$. The collection of such hyperplanes $H_C$ for all $C$ forms a cone $\mathcal{C}$. Intersecting $\mathcal{C}$ with a hyperplane $H$ such that $\mathcal{C} \cap H$ is bounded, then in fact $\mathcal{C} \cap H$ is a polytope in $H$, for which we can construct the minimal bounding ellipsoid $E$. The elliptic cone $\mathcal{C}_E = \{\alpha v \mid \alpha \geq 0, v \in E\}$ then contains $\mathcal{C}$, and we can take $(u_1, E_1)$ to be the representation of $\mathcal{C}_E$.

\subsubsection{Phase 2: Refining spherical ellipsoid}

At $i$\textsuperscript{th} iteration, we have our spherical ellipsoid $\mathcal{E}_i$ represented by the pair $(u_i, E_i)$. As before, we consider the affine hyperplane $H_i$ tangent to $\mathbb{S}^d$ at $u_i$, endow it with a suitable right-oriented orthonormal basis $v_1, v_2, ..., v_d$, and identify $H_i$ with $\mathbb{R}^d$ where $u$ is identified with the origin, and for all $i$, $v_i$ is identified with $e_i$ in the canonical basis of $\mathbb{R}^d$. Also, we construct an ellipsoid $E$ in $H_i$ given by $E_i$ and the identification between $H_i$ and $\mathbb{R}^d$.

Now we test the vector $u_i$ and see if $\langle x, u_i \rangle = h_P(u_i)$. If this is true, then $x \in \partial P$; else, we introduce the separating great circle given by $C(w)$ where $w = \frac{\sigma_P(u_i) - x}{\| \sigma_P(u_i) - x \|}$. Also consider the hyperplane $K$ such that $C(w) = K \cap \mathbb{S}^d$. Let $\ell = K \cap H_i$, we have two cases.
\begin{itemize}
	\item Either $\ell$ does not intersect $E$, at which point we also know that $C(w)$ does not intersect $\mathcal{E}_i$. Given that the centre $\mathcal{c}_i$ lies in $C_+(w)$, it is thus necessarily the case that $\mathcal{E}_i \subseteq C_+(w)$, and we conclude that $S$ is empty, i.e. $x \in \mathring{P}$.
	\item Or $\ell$ intersects $E$. We then construct an ellipsoid of minimal volume $E' \subseteq H$ containing the intersection of $E$ and the half-hyperplane of $H$ defined by $\ell$ and not containing the centre $c_i$ of $E_i$. The exact description can be computed by identifying $\ell$ with a line $\ell'$ in $\mathbb{R}^d$, where if we give a normal vector $l$ for $\ell'$, then $E'$ will correspond to the minimal ellipsoid $E'_i$ containing $E_i \cap \{y | \langle y, l \rangle \leq 0\}$.
\end{itemize}
Note that $E'_i$ does \emph{not} give the representation for $\mathcal{E}_{i+1}$. Whilst one does have that $u_{i+1} = \frac{c'}{\|c'\|}$ where $c'$ is the centre of $E'$, the affine hyperplane $H$ which contains $E'$ is not tangent to $\mathbb{S}^d$ at $u_{i+1}$. To correct this, we then first consider the cone $C' = \{\alpha v | \alpha \geq 0, v \in E'\}$, $H_{i+1}$ be the affine hyperplane tangent to $\mathbb{S}^d$ at $u_{i+1}$, and compute $E_{i+1} = H_{i+1} \cap C'$. Note that let $\mathcal{E}_{i+1} = C' \cap \mathbb{S}^d$, then it must contain the intersection of $\mathcal{E}_i$ and the halfsphere defined by $C(w)$ not containing $\mathcal{c}_i$.

Finally, if $\vol{E_{i+1}} \leq \varepsilon$ for some fixed $\varepsilon > 0$, we terminate and report that $x \not\in P$.

\subsection{Complexity}

With Lemma \ref{lemma on volume convergence to 0}, it is sufficient to exhibit some constant $0 < C < 1$ depending only on $d$ such that $\frac{\vol{E_{i+1}}}{\vol{E_i}} \leq C$, and we demonstrate such a constant by the following theorem.

\begin{theorem}
	\label{theorem on convergence of ellipsoid method}
	With notations defined as above and assume that $d \geq 3$ or $d = 1$, one has that $\frac{\vol{E_{i+1}}}{\vol{E_i}} \leq e^{-\frac{1}{2(d+1)}}$.
\end{theorem}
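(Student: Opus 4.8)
The plan is to reduce the statement entirely to the classical volume-reduction inequality for the ellipsoid method in $\mathbb{R}^d$, so that no genuinely new geometric estimate is required. Recall that at iteration $i$ we have an ellipsoid $E_i \subseteq \mathbb{R}^d$ (the planar representative of the spherical ellipsoid $\mathcal{E}_i$), and we form $E'_i$ as the minimal-volume ellipsoid containing the half-ellipsoid $E_i \cap \{y \mid \langle y, l \rangle \leq 0\}$, where the cut passes through the centre $c_i$ of $E_i$. This is precisely a central cut, and the standard result from the ellipsoid method for linear programming states that for a central cut in $\mathbb{R}^d$ one has
\[
\frac{\vol{E'_i}}{\vol{E_i}} \leq e^{-\frac{1}{2(d+1)}}.
\]
First I would invoke this inequality directly, citing the ellipsoid method literature (the same source alluded to in Appendix \ref{appendix: linear programming}), rather than rederiving the optimal $E'_i$ from scratch.

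The substantive gap to close is that the quantity appearing in the theorem is $\vol{E_{i+1}}$, not $\vol{E'_i}$, and these differ: as the method description stresses, $E'_i$ is \emph{not} the representation of $\mathcal{E}_{i+1}$, because the hyperplane carrying $E'_i$ is not tangent to $\mathbb{S}^d$ at the new pole $u_{i+1} = c'/\|c'\|$. One re-projects the cone $C' = \{\alpha v \mid \alpha \geq 0,\ v \in E'_i\}$ onto the tangent hyperplane $H_{i+1}$ to obtain $E_{i+1} = H_{i+1} \cap C'$. So the key step is to show that this re-projection does not \emph{increase} volume, i.e. $\vol{E_{i+1}} \leq \vol{E'_i}$. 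I expect this to be the main obstacle, and I would handle it by a geometric argument on the cone: the tangent hyperplane $H_{i+1}$ meets the axis of the cone $C'$ orthogonally through $u_{i+1}$, whereas $H$ (carrying $E'_i$) meets the same central ray at the further point $c'$ with $\|c'\| \geq 1$, so $H_{i+1}$ is at least as close to the apex as $H$ along the axis. Since a cross-section of a cone scales with the (squared) distance from the apex and $H_{i+1}$ also meets the axis perpendicularly (minimising the obliqueness factor), the section $E_{i+1}$ is a shrunken, less-sheared copy of $E'_i$, giving $\vol{E_{i+1}} \leq \vol{E'_i}$. Combining the two inequalities yields the claimed bound.

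I would organise the write-up as follows. Establish the identification of the cut as central (the separating great circle $C(w)$ is chosen so that the cone apex direction $\mathcal{c}_i$ lies in $C_+(w)$, and after translating to $\mathbb{R}^d$ the cutting line $\ell'$ passes through the centre, as the construction specifies). Quote the central-cut volume bound in $\mathbb{R}^d$. Then prove the re-projection lemma $\vol{E_{i+1}} \leq \vol{E'_i}$ by the cone cross-section comparison above, treating the ratio of the two axial distances and noting perpendicularity of $H_{i+1}$. Chain the inequalities to conclude. The hypothesis $d \geq 3$ or $d = 1$ is exactly the regime in which the classical planar ellipsoid recursion gives the stated constant, so I would note that the excluded case $d = 2$ corresponds to the dimension where the standard $\mathbb{R}^d$ bound degenerates, which is why it is set aside rather than requiring a separate argument.
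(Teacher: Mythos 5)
Your high-level structure is the same as the paper's: quote the classical central-cut bound $\vol{E'}/\vol{E_i} \le e^{-1/(2(d+1))}$ from the linear-programming literature, then reduce the theorem to the lemma $\vol{E_{i+1}} \le \vol{E'}$, i.e.\ that the correction back to the tangent hyperplane does not increase volume. You correctly identify that lemma as the crux. The gap is in your proof of it. The ``scaling'' half of your argument (sections closer to the apex are smaller) is valid only for \emph{parallel} hyperplane sections of a cone, and $H$ and $H_{i+1}$ are not parallel --- eliminating that very obliqueness is the whole point of the correction. The ``obliqueness'' half (a perpendicular section is no larger than an oblique one through a farther point) is asserted rather than proved, and as a general fact about elliptic cones it is false: when $E'$ is wide relative to $\|c'\|$, the section of $C'$ by the hyperplane through $c'$ orthogonal to $c'$ can be strictly larger than $E'$, and can even be unbounded, since the cone may then contain directions parallel to that hyperplane. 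Nor is $E_{i+1}$ a ``shrunken, less-sheared copy'' of $E'$: a non-parallel section of an elliptic cone is a genuinely different ellipsoid whose centre in general does not even lie on the ray through $c'$, so there is no single ``axis'' along which to compare. The paper's proof has to do real work precisely here: it factors the correction into a rotation of the cutting hyperplane about $c'$ followed by a parallel push-down to $H_{i+1}$, computes the new squared semi-axes $\delta_j^2$ explicitly in terms of $\lambda_j$, $f_j$ and $F = f^Tf+1$, and then needs the estimate $\prod_j \left(\frac{1}{F}+\frac{3f_j^2}{F^2}\right) \le \left[\frac{1}{F}+\frac{3}{dF}-\frac{3}{dF^2}\right]^d \le 1$, whose final inequality is exactly where the hypothesis $d \ge 3$ is used (with $d=1$ checked separately by hand).

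This also means your reading of the dimension hypothesis is wrong, and inconsistent with your own lemma. The classical central-cut bound does not degenerate at $d=2$: it holds in every dimension $d \ge 1$ (for $d=2$ the optimal ratio is $\frac{4}{3\sqrt{3}} < e^{-1/6}$). The restriction ``$d \ge 3$ or $d=1$'' comes entirely from the re-projection lemma, i.e.\ from the step you treat as soft geometry; and since your argument for that lemma nowhere uses the dimension, it would, if correct, prove the theorem for $d=2$ as well --- the very case the paper states it could not resolve and excludes from the theorem. A correct write-up along your outline must therefore replace the cone-section intuition by an actual volume computation (or some other dimension-aware comparison), and must locate the case distinction on $d$ inside that computation, not in the classical bound.
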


We have attempted to prove for $d = 2$, but the calculations proved to be cumbersome, and thus we decided to not pursue the case further, and omit it from this thesis for the sake of brevity. We remark that although a priori we do not know if the convergence rate guarantee holds for $d = 2$, this poses no great difficulty, as one can still lift a polytope $P$ from $\mathbb{R}^3$ to $\mathbb{R}^4$ by considering a polytope $Q$ given as follow:
\begin{quote}
	Let $z = (z_1, z_2, z_3)^T \in P$, and consider two points $x_+ = (z_1, z_2, z_3, 1)^T$ and $x_- = -x_+ = (z_1, z_2, z_3, -1)^T$. Identify $\mathbb{R}^3$ with the hyperplane $H = \{x_4 = 0\}$, and let $P' = \{(x_1, x_2, x_3, 0) \mid (x_1, x_2, x_3) \in P\}$. Finally, let $Q = \conv{\{x_+, x_-, P'\}}$.
\end{quote}
We can also write formally
\[
Q = \conv{\left\{\begin{pmatrix}
		z \\ 1
	\end{pmatrix}, \begin{pmatrix}
		z \\ -1
	\end{pmatrix}\right\} \cap \left\{\begin{pmatrix}
		y \\
		0 
	\end{pmatrix} \Big| y \in P\right\}}
\]
Finally, let $x' = \begin{pmatrix}
	x \\
	0
\end{pmatrix}$, and we run the algorithm with $Q$ and $x'$. One can see that $x \in \partial P$ if and only if $x' \in \partial Q$, therefore the algorithm will return correctly.

Thus, for a given $\varepsilon > 0$, the algorithm will terminate after at most $d+2 + 2(d+1)\ln \frac{\vol{E_1}}{\varepsilon}$ iterations, each iteration calls the supporting function $h_P$ and the extremal function $\sigma_P$ exactly once, and all the other linear algebra operations are performed in polynomial time with respect to $d$.

Now recall that for a rational number $x$, its size, denoted by$\langle x \rangle$, is the number of bits needed to represents $x$ (cf. Appendix \ref{appendix: linear programming}). For a vector $x = (x_1, x_1, .., x_{d+1})^T \in \mathbb{Q}^d$, we denote $\langle x \rangle = \langle \max_{i} |x_i| \rangle = \max_{i} \langle |x_i| \rangle$. And we call the size of $P$, denoted $\langle P \rangle$, to be the maximum size of its vertices. 

The final catch is that weakly polynomial time is defined in $d$ and sizes $\langle P \rangle = D$ of coordinates that represent $P$'s vertices. What is left to prove, is that $\langle \vol{E_0} \rangle$ is also bounded by polynomial of $D$. Here, for the sake of brevity, we only present a sketch of the proof.
\begin{enumerate}
	\item Since the coordinates of vertices have size bounded by $D$, the "area" of facets and the volume of $Q$ have size to be polynomial of $D$. This leads to the dihedral angles of $Q$ also having size to be polynomial of $D$.
	\item Moving to normal polytopes, since the smallest distinguishable angle has size of polynomial of $D$, so are the length of $[x, y]_{\mathbb{S}^d}$ for any $x, y \in S$ where $S \in \mathcal{S}(Q)$. It follows that the smallest ball $B_{\mathbb{S}^d}(u, \theta)$ that contains $S$ must also has $\theta$ of size $D$.
	\item Finally, recall that as $x$ tends to $\frac{\pi}{2}$, one has $\tan x = \frac{1}{\tan\left(\frac{\pi}{2} - x\right)} \approx \frac{1}{\frac{\pi}{2} - x}$, so in general, $\tan x$ has size of polynomial with respect to that of $x$. Projecting $S$ on $H$, it follows that $E_1 = B(0, r)$ has its radius $r$ whose size is of polynomial with respect to that of $\tan \theta$.
\end{enumerate}

Therefore, the algorithm complexity is polynomial in $d$ and $\langle P \rangle$, which, together with the method using the relation between Optimisation oracle and Separation oracle as presented at the beginning, solves Relative position problem.

\subsection{Concluding remarks}

We have demonstrated that Relative position problem is solved at least from the theoretical point of view. But, we also remark that in linear programming, ellipsoid method suffers from slow convergence that it is often outperforms by simplex method, despite having the theoretical guarantee. In our problem, this "ellipsoid method" also has the same convergence rate as demonstrated by Theorem \ref{theorem on convergence of ellipsoid method}, thus we can expect the same phenomenon. This method perhaps shall play no more than the role of a theoretical technique, and should we wish to practically solve Relative position problem in higher dimension with a complexity guarantee, another method is needed.

We also admit that we leave out many details in this Section for the sake of brevity. The proof for Theorem \ref{theorem on convergence of ellipsoid method} is shown in Appendix \ref{appendix: proofs}. We have only presented a sketch for the proof that $\langle \vol{E_1} \rangle$ is of polynomial of $\langle P \rangle$. And overall, our choice of spherical ellipsoid representation, whilst being natural geometrically, is unorthodox to work with from symbolic standpoint, for which a more thorough and rigorous description will be helpful.

Moreover, we have not implemented the algorithm, and thus have no remarks on numerical stability or the lack thereof. That being said, the number and the complexity of linear algebraic operations involved pose challenges on how to implement the method with robustness. We leave this issue for future studies.

\section{Future works}
\label{chapter: conclusion}

\subsection{On Relative position problem}

In this concluding remark, we wish to go beyond the scope of the two presented methods, and draw attention to the correspondence between methods in linear programming and ours. That is, the simplex method performs great in practice, but has poor worst-case performance, whereas the ellipsoid method has the theoretical guarantee, but involves unstable numerical operations and has slow convergence. Moreover, the simplex method both in linear programming and in our setting has its performance sensitive to the choice of pivot rule, and the ellipsoid method has similar convergence rate as expected from the fact that the proofs are closely related.

This begs the question if there exists an analogue of the interior-point method for Relative position problem, with potentially good performance in practice and a theoretical convergence guarantee. It is not obvious what would play the role of an interior point in this case, and we leave this direction as open for future research.

In the grand scheme of linear programming, each of the three methods opens up new research directions of their own: study of pivot rules for simplex method, study of cuts for ellipsoid method, and study of barriers for interior-point methods. With the correspondence given above, it is easy to see what directions for the study of Relative position problem. We have demonstrated that MPR algorithm can be outperformed by other pivot rules, but have yet to show any such rules that is guaranteed to terminate. We also have shown a possible representation for spherical ellipsoids, chosen for convenience and the fact that the spherical ellipsoids only represent cones, and it is these cones that are of our interests. Nonetheless, other representations are possible, such as intersections of $\mathbb{S}^d$ with elliptic cylinders, and this may improve numerical stability. We leave this hypothesis to be proven or disproven in the future.

And finally, as an analogue to linear programming, we ask if Relative position problem can be solved in \emph{strongly} polynomial time. Given the dependence upon the relationship between Optimisation oracle and Separation oracle, it is more difficult than that in linear programming, but even assuming $x \in P$, this question is interesting in its own right.

\subsection{Quantifying learnability}

Back to the setting of RNAs, we have shown how to determine if a pair $(q, s)$ is learnable for an energy model $E$, and if so, how to give such a parameter set $p$ and measure its robustness (cf Chapter \ref{chapter: telescoping method part 2}). But, if such a pair is not learnable, it is better to quantify how \emph{not} learnable it is, or equivalently, how close we can get to the experiment results. A way to define a measure of learnability is to compute, considering all possible parameter sets $p$, how close can the predicted energy get to the measurement. In mathematical notations, we wish to see

\[
	\min_{p} [h_{\mathcal{P}(q)}(p) - \langle c(q, s), p \rangle],
\]
or expressed in polytope $\mathcal{P}(q)$, we wish to calculate the distance $d(c(q, s), \partial P)$ from $c(q, s)$ to the boundary $\partial P$.

To see how the notion of learnability may be useful, we return to the basic question stated at the beginning regarding minimum free energy's poor performance. For instance, Workman and Krogh demonstrated that using Turner model, in terms of free energy, observed RNA structures and random sequences with the same dinucleotide distribution are not statistically distinguishable \cite{Workman1999}, suggesting that thermodynamic methods might not be suitable, or at least not sufficient to determine RNA structure. Matthews et al. \cite{Mathews1999a} showed that by considering 750 suboptimal secondary structures with best free energy, the algorithm's accuracy improved from 72.9\% to 97.1\%.

Many reasons are attributed to this phenomenon.
\begin{enumerate}
	\item The parameters are inevitably imprecise, both by limited numerical precisions \cite{Wuchty1999} and lacking in our current understanding of RNA dynamics.
	
	\item Contrary to the dogma that energy parameters are universal, Matthews et al. \cite{Mathews1999a} suggested that they could be sequence-dependent, and Wuchty et al. \cite{Wuchty1999} suggested that there could be unknown physical processes which might changes these parameters, thus making a priorly suboptimal structure more favourable.
	
	\item Minimum free energy methods assume RNA exists and folds to a global minimum, whereas there are no reasons to believe that it is the case: folding pathways have shown to trap RNAs in local minima \cite{Tinoco1990}. Likewise, they fail to account for folding of RNA as it is being transcribed - also known as co-transcriptional folding - and similarly, single-stranded DNAs, which occur during replication, fold as they appear \cite{SantaLucia2004}.
\end{enumerate}

Whilst the last reason is due to the inherent limitation of energy model, the first two can be explained by unsuitable parameters, thus inspiring the study of suboptimal structures. Wuchty et al. described an algorithm to find all suboptimal structures within a given threshold above the minimum free energy, but they also showed that the number of such structures grew rapidly as the length increased: a sequence of 100 nt admits almost 2 millions suboptimal structures within 10 kT from the minimum energy \cite{Wuchty1999}. Thus, it is useful to determine how close, for a given RNA, the observed free energy and a model's prediction can be. Intuitively, the small gaps can be overcome in reality, and suboptimal structures can be accessible: the smaller the gap, the more likely such a case occurs.

Moreover, Wuchty et al. \cite{Wuchty1999} proposed using the density of such suboptimal structures around the global minimum to determine how well-defined the predicted structure is, yet such a measure depends on the choice of parameters. The notion of learnability we consider overcomes this dependence. Unfortunately, none of the two methods we presented in this thesis allow computing $d(x, \partial P)$ for a given point $x \in \mathring{P}$ and a polytope $P$ represented by its supporting function. GJK algorithm allows such a calculation, but it is only an approximation, and as we discussed in Chapter \ref{chapter: introduction part 2}, such a method is not applicable for higher dimensions.

\subsection{Beyond Minimum free energy and RNA secondary structure}

Despite the popularity and variety, Minimum free energy is only one amongst many methods to predict the secondary structures. For instance, to overcome some or all of the issues mentioned in the previous section, another approach, called Maximum expected accuracy, was proposed by McCaskill \cite{McCaskill1990}. The idea was to focus on maximising the probability that a predicted base pair is correct rather than minimising free energy. To do this, McCaskill devised the following partition function
\[
	Z(q, p) \coloneq \sum_{s'} \exp\left(-\frac{1}{RT} \langle c(q, s'), p \rangle \right)
\]
where $R$ and $T$ are the gas constant and the absolute temperature of the environment. This function is efficiently computable using a dynamic programming scheme where RNA is decomposed similarly to that in Minimum free energy. One then has the probability of a given structure $s$ to be
\[
	\mathbb{P}(s \mid q, p) = \frac{1}{Z(q, p)} \exp\left(-\frac{1}{RT} \langle c(q, s), p \rangle\right),
\]
and the probability of a base pair $\{i, j\}$ between $q_i$ and $q_j$ to be $\mathbb{P}_1(\{i, j\} \mid q, p) = \sum_{\{i, j\} \in s} \mathbb{P}(s \mid q, p)$. In similar fashion, the probability that $i$\textsuperscript{th} nucleotide is \emph{not} paired, is $\mathbb{P}_2(i \mid q, p) = \sum_{\forall j, \{i, j\} \not\in s} \mathbb{P}(s \mid q, p) = 1 - \sum_{j} \mathbb{P}_1(\{i, j\} \mid q, p)$. Finally, the expected accuracy to be maximised is defined as

\[
	\mathcal{A}(s) = \gamma \cdot \sum_{\{i, j\} \in s} \mathbb{P}_1(\{i, j\} \mid q, p) + \sum_{\forall j, \{i, j\} \not\in s} \mathbb{P}_2(i \mid q, p), 
\]
where $\gamma$ is some chosen weighted factor.

The choice of $\gamma$ is not trivial, and to our knowledge, has not been well-studied. For instance, it is not clear if $\gamma$ can have an universal value or should be dependent on some features, e.g. species to which RNA belongs, or sequence's length. Moreover, this approach, whilst based directly on a chosen energy model, shows better accuracy and is less prone to inaccuracy parameters \cite{Lu2009}. Though the use of energy model is justified, we remark that our energy models, limited by our understanding of RNA dynamics, are incomplete, and thus even if one choose a priori a model, there are no clear reasons why Minimum free energy and Maximum expected accuracy should share the same parameter sets. In this aspect, the effect of varying parameters for Maximum expected accuracy is not well-understood. 

Along side with structure prediction problem are sequence alignment and phylogenetic tree construction, of which we shall not go into details. The essential question posed in both problems is how to compare RNA/DNA sequences, and a possible approach is comparing RNA polytopes arose from an energy model chosen a priori, e.g. by Hausdorff distance. Unlike any single values arose from linear cost model, the polytopes encode much more information, and one can argue that closely related sequences should share similar features, so their polytopes will resemble each others. Unfortunately, this path has not been intractable since we cannot construct the whole polytopes efficiently. 

On the one hand, Hausdorff distance $\delta(P, Q)$ between two polytopes $P$ and $Q$ has a formulation in terms of supporting functions, as $\delta(P, Q) = \max_{u \in \mathbb{S}^d} |h_P(u) - h_Q(u)|$. On the other hand, the function $f(u) = |h_P(u) - h_Q(u)|$ is not convex in general, thus maximising $f$ needs not necessarily be easy. Nonetheless, this link suggests there can be a way to compute or approximate $\delta(P, Q)$ efficiently using only supporting functions. We leave this question for future studies.

\newpage
\bibliographystyle{plain}
\bibliography{automatedrefs, manualrefs}

\newpage
\appendix

\section{Review on methods for linear programming}
\label{appendix: linear programming}

In this appendix, we recall definition of a linear programming problem and give the geometric idea of the simplex method in the context of linear programming, expressed in the language of polyhedra (that is, an intersection of some halfspaces, which needs not be bounded as opposed to a polytope). Note that this is not to give a detailed description of methods in linear programming, and readers are advised to consult other introductory texts, e.g. Matoušek and Gärtner's \emph{Understanding and Using Linear Programming} \cite{Matousek2007}.

A linear programming problem is defined as a problem of the following form
\[
\begin{split}
	\text{maximise   } & c^Tx \\
	\text{subject to } & Ax \leq b \\
	\text{and        } & x \geq 0
\end{split}
\]
where $b \in \mathbb{R}^m$, $c \in \mathbb{R}^n$ and $A \in \mathbb{R}^{m\times n}$. Geometrically, if we define $P = \{x | Ax \leq b, x \geq 0\}$ be a polyhedron, then this amounts to finding $h_P(c)$. It can thus be shown that the maximum, if attainable, will be attained by a vertex of $P$. In particular, if $P$ is bounded, i.e. a polytope, then the maximum is always attained.

\subsection*{Simplex method}

The key idea of the simplex method is thus to obtain a vertex $y \in \vertex{P}$, and one observes that
\begin{itemize}
	\item either $y$ is an optimal solution, i.e. $\langle c, y \rangle = h_P(c)$, 
	\item or there exists another edge incident to $y$ along which the objective function increases strictly. In particular, if this edge connects $y$ with a neighbour vertex $x$, then $\langle c, y \rangle < \langle c, x \rangle$.
\end{itemize}
If this edge extends indefinitely, as possible in case where $P$ is a polyhedron, then the objective function admits no maximum. Else, one can replace $y$ by $x$ if $y$ is not optimal, and repeat the process until no such $x$ can be found, at which point one arrives at an optimal solution.

The remaining issue is to decide which $x$ to choose in case $y$ has multiple possible neighbour vertices $x$ that increase the objective function, i.e. $\langle c, y \rangle < \langle c, x \rangle$. Such a procedure to determine the neighbour vertex is called a pivot rule, and the performance of the simplex method is sensitive to the choice of pivot rule. Unfortunately, although there are rules proven to be terminating, all have shown to exhibit exponential runtime in specifically constructed examples.

\subsection*{Ellipsoid method}

The reason behind exponential complexity in the worst case of the simplex method is that it relies on vertices of the polyhedron $P$ (or, in case $P$ bounded, a polytope). Whilst the number of facets is bounded by that of constraints, the number of vertices does not enjoy the same bound, and in fact can be grow exponentially fast (for instance, Klee-Minty cube, a family of deformed cubes where one in dimension $d$ has $2d$ facets but $2^d$ vertices).

Ellipsoid method instead seeks not to rely on vertices of $P$, but to bound the region of possible solutions by a convex set, and each iteration involves dividing the set into two halves, of which we will eliminate one and bound the other by a similar convex set. One can draw analogy with binary search in this regard. Note that there are many variants of ellipsoid method, such as by using deep cut, surrogate cuts, or shallow cuts. This is not of our interest, and we shall only present the basic ellipsoid method.

The algorithm has two phases:
\begin{enumerate}
	\item One first finds an ellipsoid $E_0$ containing $P$.
	\item At $i$\textsuperscript{th} iteration, we consider the centre $c_i$ of the ellipsoid $E_i$, and ask if $c_i$ belongs to $P$ by checking all the constraints in the system $Ax \leq b$. We have two cases:
	\begin{itemize}
		\item If $c_i \in P$, then we may discard the region $\{x \mid c^T x < c_i\}$ since we know the optimal solution $x^*$ must have $c^T x^* \geq c^T c_i$. In particular, let $E_{i+1}$ be the ellipsoid of smallest volume containing $E_i \cap \{x \mid c^T x \geq c^T c_i\}$.
		
		\item Similarly, if $c_i \not\in P$, then we may discard the region $\{x \mid c^T x > c_i\}$ since we know the optimal solution $x^*$ must have $c^T x^* \leq c^T c_i$. In particular, let $E_{i+1}$ be the ellipsoid of smallest volume containing $E_i \cap \{x \mid c^T x \leq c^T c_i\}$.
	\end{itemize}
\end{enumerate}
If at some point, $E_i = \emptyset$ then we conclude that there is no solution. Otherwise, the second phase repeats until $\vol{E_{i+1}}$ is smaller than some fixed $\varepsilon$, at which point we return $c_i$ as a solution. Denote $\vol{E}$ the volume of an ellipsoid $E$, then one has $\frac{\vol{E_{i+1}}}{\vol{E_i}} \leq e^{-\frac{1}{2(d+2)}}$, thus the algorithm terminates in $2(d+2)\ln \frac{\vol{E_0}}{\varepsilon}$ iterations.

Now, it is important to note that this scheme is \emph{weakly} polynomial-time, i.e. in bit model, and not strongly polynomial-time. To state more formally, for $i \in \mathbb{Z}$, to encode $i$, one may use $\langle i \rangle = \lceil \log_2(|i| + 1) \rceil + 1$ bits. Then, for $r = \frac{p}{q} \in \mathbb{Q}$, one may use $\langle r \rangle = \langle p \rangle + \langle q \rangle$ bits to encode $r$. Now suppose $A$, $b$, and $c$ have rational coefficients, let $B$ be the largest absolute value of the coefficients in $A$, $b$, and $c$, then the runtime of ellipsoid method in bit model is polynomial in $n$, $m$, and $\langle B \rangle$.

On the other hand, for any integer $N$, one can construct a linear programming problem with only $2$ variables and $2$ constraints such that the ellipsoid method runs at least $N$ iterations (note that such a program necessarily has $\langle B \rangle$ tends to infinity as $N$ gets larger). In particular, one cannot remove dependency on $\langle B \rangle$.

\subsection*{Separation problem and Optimisation problem}

It is worthy to note the corresponding between Optimisation problem, and Separation problem, which is given as follow:
\begin{quote}
	Given a polytope $P \subseteq \mathbb{R}^{d+1}$ and $x \in \mathbb{R}^{d+1}$, decide if $x \in P$, and if not, then return a vector $c$ such that $P \subseteq  \{y \mid c^T y < c^T x\}$.
\end{quote}
Such a procedure is called a \emph{separation oracle}. Such a vector $c$, if exist, represents an affine hyperplane $H$ that separates $P$ and $x$, meaning $H$ divides $\mathbb{R}^{d+1}$ into two closed halfspaces, each of which contains either $x$ or $P$, and $H$ does not simultaneously contain $x$ and intersect $P$. If $x \not\in P$, then by Hyperplane Separation Theorem, such a vector $c$ must exist.

On the one hand, the ellipsoid method uses calls to a separation oracle when it decides if $c_i \in P$, and if one ignores the complexity of oracle, then in bit model, ellipsoid method runs in polynomial time with respect to $n$ and $B$ (independent from $m$). Thus, the Optimisation problem can be reduced to Separation problem in polynomial time.

On the other hand, suppose we have an oracle that solves linear programming problems, which we call an optimisation oracle for consistency's sake, that for a given $c \in \mathbb{R}^n$, returns an optimal solution to $\max_{x \in P} c^T x$. Then we can also reduce Separation problem to Optimisation in polynomial time, via polar dual polyhedron $P^\circ = \{p \mid \langle x, p \rangle \leq 1 \}$.

Indeed, given $A \in \mathbb{R}^{m \times n}$ and $b \in \mathbb{R}^m$, let $P$ be the polyhedron given by $P = \{x \mid Ax \leq b\}$, and assume $P$ is non-degenerate.

For a given $c \in \mathbb{R}^n$, an optimisation oracle gives a solution $x \in P$ that maximises $c^T x$. If $c^T x \leq 1$, then $c \in P^\circ$; else, one has $P^\circ \subseteq \{d \mid d^T x < c^T x\}$. Thus, it effectively acts as a separation oracle for $P^\circ$. If the hypotheses to apply Ellipsoid method for $P^\circ$ are satisfied, then one may apply this separation oracle to optimise over $P^\circ$.

Now recall that if $P$ is a compact convex set containing the origin in its interior, then so is $P^\circ$, and one has $(P^\circ)^\circ = P$. Thus, by possibly a translation, one may assume that $P$ contains the origin in its interior, and apply the above optimisation oracle over $P^\circ$ as a separation oracle over $P$. The following diagram illustrates these relations.

\[
\begin{tikzcd}[row sep=5em, column sep=7em]
	\text{Separation oracle over } P \arrow{r}{\text{Ellipsoid method}} & \text{Optimisation oracle over } P \arrow{d}{\text{Polarity}} \\%
	\text{Optimisation oracle over } P^\circ \arrow{u}{\text{Polarity, } (P^\circ)^\circ = P} & \arrow{l}{\text{Ellipsoid method}} \text{Separation oracle over } P^\circ
\end{tikzcd}
\]

In summary, a separation oracle gives rise to an optimisation oracle, and vice versa. Note that in the context of relative position problem, we assume access to $h_P$ - an optimisation oracle over $P$, thus we can effectively decide if $x \in P$.
\section{Proofs}
\label{appendix: proofs}

\subsection*{Proof of Theorem \ref{theorem on decreasing balls}}

First, we have a technical lemma.

\begin{lemma}
	\label{lemma on intersection}
	Let $P \subseteq \mathbb{S}^d$ be a polytope, $x \in \mathring{P}$, $y \not\in P$. Then $[x, y]_{\mathbb{S}^d}$ intersects $\partial P$.
\end{lemma}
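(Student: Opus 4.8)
The plan is to treat this as a connectedness (intermediate-value) argument carried out along the geodesic arc. First I would observe that since $x \in \mathring{P}$ while $y \notin P$ the two points are distinct, and — as the notation $[x,y]_{\mathbb{S}^d}$ already presupposes — they form a proper pair, so the arc is a well-defined continuous curve. I would parametrise it as a continuous map $\gamma : [0,1] \to \mathbb{S}^d$ with $\gamma(0) = x$ and $\gamma(1) = y$, for instance travelling at constant speed along the great circle through $x$ and $y$. The whole argument then reduces to showing that a continuous path from an interior point to an exterior point must cross $\partial P$.

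Next I would exploit that $P$ is closed in $\mathbb{S}^d$ (it is compact, being the convex hull of finitely many points). Consequently the preimage $A := \gamma^{-1}(P) = \{ t \in [0,1] \mid \gamma(t) \in P \}$ is a closed subset of $[0,1]$. Since $\gamma(0) = x \in P$ we have $0 \in A$, so $A \neq \emptyset$; since $\gamma(1) = y \notin P$ we have $1 \notin A$. Setting $t^* = \sup A$, closedness of $A$ gives $t^* \in A$, hence $\gamma(t^*) \in P$; and because $1 \notin A$ while $A$ is closed, we get $t^* < 1$.

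The crux is to upgrade $\gamma(t^*) \in P$ to $\gamma(t^*) \in \partial P$, which I would do by contradiction. If $\gamma(t^*)$ lay in the open set $\mathring{P}$, then by continuity of $\gamma$ there would exist $\varepsilon > 0$ with $\gamma\bigl([t^*, t^* + \varepsilon]\bigr) \subseteq \mathring{P} \subseteq P$ (the interval being nondegenerate as $t^* < 1$), forcing some $t > t^*$ into $A$ and contradicting $t^* = \sup A$. Therefore $\gamma(t^*) \in P \setminus \mathring{P} = \partial P$, so the arc $[x,y]_{\mathbb{S}^d}$ meets $\partial P$ at the point $\gamma(t^*)$, which is exactly the claim.

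I expect no deep obstacle here, since this is simply the spherical analogue of the one-dimensional intermediate value theorem. The only points that demand care are the well-definedness and continuity of the geodesic parametrisation — in particular ruling out, or separately handling, the degenerate antipodal case $y = -x$ in which the great-circle arc is not unique, although this is excluded the moment we insist (following the standing notational convention for $[x,y]_{\mathbb{S}^d}$) that $x$ and $y$ form a proper pair — together with confirming that $P$ is genuinely closed so that $A$ is closed and $\sup A$ is attained. Both are immediate from the compactness of $P$ recorded earlier in the preliminaries.
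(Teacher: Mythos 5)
Your proof is correct and follows essentially the same route as the paper's: both parametrise the arc, set $t^* = \sup\{t \mid \gamma(t) \in P\}$, and show $\gamma(t^*)$ lies in $P$ but not in $\mathring{P}$. The only difference is cosmetic --- you invoke closedness of $\gamma^{-1}(P)$ directly to get $t^* \in A$, where the paper re-derives the same fact with an explicit $\varepsilon$-ball contradiction argument.
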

\begin{proof}
	Consider the map $f : [0, 1] \mapsto [x, y]_{\mathbb{S}^d}$, where $f(t) = \frac{tx + (1-t)y}{\|tx + (1-t)y\|}$. We denote $T = \{t | f(t) \in P\}$, and $t^* = \sup T$. A priori we know that $0 < t^* < 1$. Let $z = f(t^*)$, if $z \not\in P$, then there exists $\theta > 0$ such that $B_{\mathbb{S}^d}(z, \theta) \cap P = \emptyset$ and $y \not\in B_{\mathbb{S}^d}(z, \theta)$. Let $\varepsilon = \frac{\theta}{d_{\mathbb{S}^d}(x, y)}$ which is well-defined since $x \neq y$, one has $[\theta - \varepsilon, \theta + \varepsilon] \cap T = \emptyset$, a contradiction. Hence $z \in P$.
	
	Now in similar fashion, if $z \in \mathring{P}$, then there exists $\theta > 0$ such that $B_{\mathbb{S}^d}(z, \theta) \subsetneq P$ and $x \not\in B_{\mathbb{S}^d}(z, \theta)$. Let $\varepsilon = \frac{\theta}{d_{\mathbb{S}^d}(x, y)}$, one has $[\theta - \varepsilon, \theta + \varepsilon] \subsetneq T$, a contradiction. Thus $z \in \partial P$, as desired.
\end{proof}

Now suppose the contrary, there exist two different facets of $P$ containing $p^*$. Since $p \neq p^*$, the segment $[p, p^*]$ cannot be perpendicular to both of the facets, hence there must exist a facet $F$ of $P$ containing $p^*$ such that $[p, p^*]$ is not perpendicular to $F$. Now consider the projection $p'$ of $p$ onto $F$. If $p' \in F$, then since 
\[
d_{\mathbb{S}^d}(p, \partial P) = d_{\mathbb{S}^d}(p, p^*) >  d_{\mathbb{S}^d}(p, p') \geq d_{\mathbb{S}^d}(p, \partial P),
\]
we have a contradiction. Otherwise, by Lemma \ref{lemma on intersection}, $[p, p']$ intersects $\partial P$ at a point $p''$, which necessarily differs from $p'$ and thus implies that 
\[
d_{\mathbb{S}^d}(p, \partial P) = d_{\mathbb{S}^d}(p, p^*) > d_{\mathbb{S}^d}(p, p') \geq d_{\mathbb{S}^d}(p, p'') > d_{\mathbb{S}^d}(p, \partial P),
\]
a contradiction, as desired.

\subsection*{Proof of Theorem \ref{theorem on convergence of ellipsoid method}}
A similar result from ellipsoid method in linear programming states that $\frac{\vol{E'}}{\vol{E_i}} \leq e^{-\frac{1}{2(d+1)}}$, and we need to show that $\vol{E_{i+1}} \leq \vol{E'}$, or equivalently, the correction does not increase the volume of the ellipsoid. First, we decompose the correction into two steps:
\begin{enumerate}
	\item We "rotate" the hyperplane $H$ into another hyperplane $H'$ passing through $c'$ and admitting $c'$ as its normal vector, i.e. $H'$ is parallel with $H_{i+1}$. Let $E'' = C' \cap H'$ be the ellipsoid $E'$ but "rotated" to hyperplane $H'$.
	\item We move the hyperplane $H'$ toward the origin until it touches $\mathbb{S}^d$, i.e. when it coincides with $H_{i+1}$.
\end{enumerate}
Let $e_1, e_2, ..., e_d$ be the canonical basis of $\mathbb{R}^d$. By possibly a rotation, assume without loss of generality that $u_i = (0, ..., 0, 1)^T$, so $H = \left\{\begin{pmatrix}
	z \\
	0
\end{pmatrix} | z \in \mathbb{R}^d\right\}$. Write $c' = \begin{pmatrix}
	f \\
	1
\end{pmatrix}
$ where $f = (f_1, f_2, ..., f_d)^T \in \mathbb{R}^d$. Moreover, let $P$, $P'$, and $P_{i+1}$ be the positive-definite matrices corresponding to $E'$ in $H$, $E''$ in $H'$, and $E_{i+1}$ in $H_{i+1}$, respectively. With possibly a rotation, we can assume without loss of generality that $P$ admits the canonical basis of $\mathbb{R}^d$ as its eigenvectors. Let $\lambda_1, \lambda_2, ..., \lambda_d$ be the eigenvalues of $P$, and for $j = 1, 2, ..., d$, let $y_j = \begin{pmatrix}
	\sgn{f_j} \lambda_j e_j + f\\
	1
\end{pmatrix}$, where $\sgn{x} = \begin{cases}
	1 & \text{ if } x \geq 0 \\
	-1 & \text{ if } x < 0
\end{cases}$.

We describe the projection from $H$ to $H'$ where the point $y = \begin{pmatrix}
	z \\
	1
\end{pmatrix} \in H$ is mapped to the point $\mu(y) y \in H'$, where $z \in \mathbb{R}^d$. The condition $\mu(y) y \in H'$ is equivalently to 
\[
f^T f + 1 = \| c' \|^2 = \left\langle \mu(y) y, c' \right\rangle = \mu(y) (z^T f + 1) \Rightarrow \mu(y) = \frac{f^T f + 1}{z^T f + 1}.
\]
In particular, $\mu(y_j) = \mu_j = \frac{f^T f + 1}{\lambda_j \sgn{f_j} f_j + f^T f + 1} = \frac{f^T f + 1}{\lambda_j |f_j| + f^T f + 1}$

One key insight, whose proof shall be omitted for the sake of brevity, is under this projection, axes of $E'$ are mapped to those of $E'$. Whilst we cannot at the moment explicitly endow $H'$ with a basis, one can still compute the eigenvalues of $P'$ by $\| \mu(y_j) y_j - c' \|$. And finally by multiplying with a scaling factor $\frac{1}{\|c'\|}$ , we determine the eigenvalues $\delta_j$ of $P_{i+1}$. Let $F = f^T f + 1$, calculation shows
\[
\begin{split}
	\| \mu(y_j) y_j - c' \|^2
	& = \left\| \mu_j \begin{pmatrix}
		\sgn{f_j} \lambda_j e_j + f\\
		1
	\end{pmatrix} - \begin{pmatrix}
		f \\
		1
	\end{pmatrix}\right\|^2 = 
	\left\| \begin{pmatrix}
		\sgn{f_j} \mu_j \lambda_j e_j + (\mu_j - 1) f\\
		\mu_j - 1
	\end{pmatrix}\right\|^2 \\
	& = (\mu_j - 1)^2 + (\sgn{f_j} \mu_j \lambda_j - (\mu_j - 1) f_j)^2 + (\mu_j - 1)^2 \sum_{j \neq i} f^2_j \\
	& = (\mu_j - 1)^2 + \mu^2_j \lambda^2_j - 2\sgn{f_j}\mu_j (\mu_j - 1) \lambda_j f_j + (\mu_j - 1)^2 f^2_j + (\mu_j - 1)^2 \sum_{k \neq j} f^2_k \\
	& = \mu^2_j \lambda^2_j - 2\mu_j (\mu_j - 1) \lambda_j |f_j| + (\mu_j - 1)^2 F \\
	& = \left(\frac{F}{\lambda_j |f_j| + F}\right)^2 \lambda^2_j - 2 \frac{F}{\lambda_j |f_j| + F} \left(\frac{F}{\lambda_j |f_j| + F} - 1\right) \lambda_j |f_j| \\
	& + \left(\frac{F}{\lambda_j |f_j| + F} - 1\right)^2 F \\
	& = \left(\frac{F}{\lambda_j |f_j| + F}\right)^2 \lambda^2_j + 2 \frac{F\lambda^2_j f^2_j}{(\lambda_j |f_j| + F)^2} + \frac{\lambda^2_j f^2_j}{(\lambda_j |f_j| + F)^2} F \\
	& = \left(\frac{F}{\lambda_j |f_j| + F}\right)^2 \lambda^2_j + 3 \frac{F\lambda^2_j f^2_j}{(\lambda_j |f_j| + F)^2}\\
\end{split}
\]
\[
\begin{split}
	\Rightarrow \delta^2_j
	& = \frac{1}{F} \| \mu(y_j) y_j - c' \|^2 = \frac{3f^2_j\lambda^2_j}{\lambda_j |f_j| + F} + \frac{F\lambda^2_j}{(\lambda_j |f_j| + F)^2} = \frac{3f^2_j + F}{(\lambda_j |f_j| + F)^2} \lambda^2_j\\
\end{split}
\]
By Cauchy-Schwartz inequality, one obtains
\[
\begin{split}
	\prod_{j = 1}^{d} \frac{3f^2_j + F}{(\lambda_j |f_j| + F)^2}
	& \leq \prod_{j = 1}^{d} \frac{3f^2_j + F}{F^2} = \prod_{j = 1}^{d} \left(\frac{1}{F} + \frac{3f^2_j}{F^2}\right) \\
	& \leq \left[\frac{1}{d}\sum_{j=1}^{d} \left(\frac{1}{F} + \frac{3f^2_j}{F^2} \right)\right]^d = \left[\frac{1}{F} + \frac{3}{dF}- \frac{3}{dF^2}\right]^d \leq 1
\end{split}
\]
where the last inequality holds for $d \geq 3$, as
\[
	1 - \frac{1}{F} - \frac{3}{dF} + \frac{3}{dF^2} = \left(1 - \frac{1}{F}\right)\left(1 - \frac{3}{dF}\right) \geq 0 \Rightarrow \frac{1}{F} + \frac{3}{dF} - \frac{3}{dF^2} \leq 1.
\]
As for the case $d = 1$, it is clear that one has $\lambda_1 = 2|f_1|$, so in fact we have a stronger inequality,
\[
	\frac{3f^2_1 + F}{(\lambda_1 |f_1| + F)^2} = \frac{4f^2_1 + 1}{(3f^2_1 + 1)^2} \leq 1.
\]

Finally, we conclude by remarking that
\[
	\vol{E_{i+1}} = \vol{B_d(0, 1)}\left(\prod_{j = 1}^d \delta_j\right)^{\frac{1}{2}} \leq \vol{B_d(0, 1)}\left(\prod_{j = 1}^d \lambda_j\right)^{\frac{1}{2}} = \vol{E'}, 
\]
where $B_d(0, 1)$ denotes the unit ball in $\mathbb{R}^d$, as desired.

\end{document}